\documentclass[11pt, reqno, oneside]{amsart}

\usepackage{enumerate, microtype, amssymb, addlines}
\usepackage[pdftex]{color, graphicx}
\usepackage[longnamesfirst]{natbib}
\shortcites{dedeckeretal2007}
\defcitealias{ibragimovmueller2016}{Ibragimov-M\"uller}
\defcitealias{canayetal2014}{Canay-Romano-Shaikh}
\defcitealias{besteretal2014}{Bester-Conley-Hansen}

\input{macros}

\theoremstyle{plain}
\newtheorem{theorem}{Theorem}[section]
\newtheorem{lemma}[theorem]{Lemma}
\newtheorem{proposition}[theorem]{Proposition}
\newtheorem{assumption}[theorem]{Assumption}
\newtheorem{corollary}[theorem]{Corollary}
\theoremstyle{definition}
\newtheorem{algorithm}[theorem]{Algorithm}
\newtheorem{example}[theorem]{Example}
\theoremstyle{remark}
\newtheorem*{remarks}{Remarks}
\newtheorem*{remark}{Remark}

\newcommand{\rn}{\nu_n}
\newcommand{\ev}{{\mathord \mathrm{E}}}%
\newcommand{\prob}{{\mathord P}}%
\newcommand{\op}{o_\prob}%
\DeclareMathOperator*{\var}{Var}%
\DeclareMathOperator*{\trace}{trace}%
\DeclareMathOperator*{\rank}{rank}%
\newcommand{\pto}{\mathchoice
	{\raisebox{.0em}{ $\overset{\mathrm{P}}{\to}$ }}
	{\raisebox{-.15em}{ $\overset{\raisebox{-.25em}{\scriptsize$\mathrm{P}$}}{\to}$ }}
	{}
	{}}
\newcommand{\xqed}[1]{%
  \leavevmode\unskip\penalty9999 \hbox{}\nobreak\hfill
  \quad\hbox{\ensuremath{#1}}}
\newcommand{\sqed}{\xqed{\square}}
\newcommand{\mean}{\bar}
\newcommand{\smean}{\bar}

\usepackage{xcolor}
\definecolor{umorange}{HTML}{cc6600}
\definecolor{umblue}{HTML}{587abc}
\definecolor{umgrey}{HTML}{989c97}
\definecolor{umred}{HTML}{7a121c}
\definecolor{umgreen}{HTML}{83b2a8}

\usepackage[bookmarks=false,pdfpagemode=UseNone,pdftex,colorlinks,citecolor=black,%
            filecolor=black,linkcolor=black,urlcolor=black,%
            pdfauthor=Andreas~Hagemann,pdfstartview=FitH]{hyperref}

\begin{document}

\title[Placebo inference with clusters]{Placebo inference on treatment effects\\ when the number of clusters is small}
\author[]{Andreas Hagemann}
\address{Department of Economics, University of Michigan, 611 Tappan Ave, Ann Arbor, MI 48109, USA. Tel.: +1 (734) 764-2355. Fax: +1 (734) 764-2769}
\email{\href{mailto:hagem@umich.edu}{hagem@umich.edu}}
\urladdr{\href{https://umich.edu/~hagem}{umich.edu/~hagem}}
\date{\today. (First version online: November 14, 2017.)
}
\thanks{I would like to thank the co-editor, two referees, Federico Bugni, Matias Cattaneo, Kevin Lang, Sarah Miller, Azeem Shaikh, Elie Tamer, Matthew Webb, and several seminar audiences for useful comments and discussions. All errors are my own.}

\begin{abstract}
I introduce a general, Fisher-style randomization testing framework to conduct nearly exact inference about the lack of effect of a binary treatment in the presence of very few, large clusters when the treatment effect is identified across clusters. The proposed randomization test formalizes and extends the intuitive notion of generating null distributions by assigning placebo treatments to untreated clusters. I show that under simple and easily verifiable conditions, the placebo test leads to asymptotically valid inference in a very large class of empirically relevant models. Examples discussed explicitly are (i)~least squares regression with cluster-level treatment, (ii)~difference-in-differences estimation, and (iii)~binary choice models with cluster-level treatment. A simulation study and an empirical example are provided. The proposed inference procedure is easy to implement and performs well with as few as three treated and three untreated clusters.
\vskip 1em \noindent
\emph{JEL classification}: C01, C21, C23\\ 
\emph{Keywords}: cluster-robust inference, randomization, permutation
\end{abstract}

\maketitle

\section{Introduction}
It is standard practice in economics to conduct inference that is robust to within-cluster dependence. Units in the same cluster have to be expected to influence one another or are influenced by the same technical, political, or environmental shocks. Several analytical and bootstrap procedures are available to adjust inference for the presence of data clusters. However, as \citet{bertrandetal2004} and others point out---typically in the context of inference on the treatment effect in a difference-in-differences model---the majority of these procedures perform poorly in situations where the number of clusters is small. Such situations are common in empirical practice. They arise, for example, in the analysis of policy reforms, where entire states are treated with the passage of a new law, or in a development context, where the introduction of a new technology affects entire villages. In this paper, I introduce a testing framework based on \citet{fisher1935} randomization that allows for nearly exact inference about the lack of effect of a treatment in the presence of a small number of large clusters. The number of clusters can either grow slightly with the sample size or remain fixed depending on the strength of other assumptions. The framework applies to situations where a binary treatment occurs in some but not all clusters and the treatment effect of interest is identified by between-cluster comparisons.

In a randomized trial, the average effect of a treatment is estimated by comparing the means of treatment and control groups. Computing this comparison of means for all possible ways in which individuals could have been assigned to the two groups generates ``placebo'' estimates. If treatment has no effect, the placebo estimates have the same distribution as the estimated treatment effect. A \citet{fisher1935} randomization or placebo test takes these observations as the null distribution to test the ``sharp'' hypothesis 
that there is no effect because the difference of treatment and control potential outcomes is zero for each individual. Such tests can be made exact under conventional assumptions. They are particularly attractive when only a small number of observations are available because the set of placebo estimates that have to be computed grows quickly with the sample size. More recently, placebo-type Monte Carlo experiments have been used in empirical economics as informal robustness exercises. I formalize and extend the notion of a placebo test to the cluster case by developing statistics that measure the size of a treatment effect of interest but are amenable to a placebo-like reassignment mechanism. Under simple and easily verifiable conditions, this placebo test leads to asymptotically valid, cluster-robust inference about conventional (non-sharp) null hypotheses in a very large class of empirically relevant models. The proofs rely in part on results of \citet{neuhaus1993} and \citet{janssen1997, janssen2005}, who show that consistent permutation tests of certain hypotheses are possible even in situations where the joint distribution of the data is not invariant to permutations under the null hypothesis. 

This paper complements recent work by \citet{canayetal2014} and \citet{ibragimovmueller2010, ibragimovmueller2016}. \citeauthor{canayetal2014}\ compute statistics for each cluster separately and obtain null distributions by permuting the signs of these statistics under an approximate symmetry assumption. The downside to their approach is that the parameter of interest has to be identified \emph{within} each cluster. Hence, clusters have to be paired in an ad-hoc manner for difference-in-differences estimation, which reduces the (already small) number of clusters available for inference by half. \citet{ibragimovmueller2010, ibragimovmueller2016} develop a method that applies to inference about parameters that are identified either within or across clusters; it involves comparing a summary of statistics from each cluster and invoking a small sample result for the $t$ distribution. The method of obtaining statistics used in \citet{ibragimovmueller2016} is similar (but not identical) to the one in the present paper. However, as \citeauthor{canayetal2014}\ point out, \citeauthor{ibragimovmueller2016}'s method tends to be overly conservative and can suffer from low power. As I show in my Monte Carlo study, the placebo test developed here tends to be less conservative and tends to have higher power than both the \citeauthor{canayetal2014}\ and \citeauthor{ibragimovmueller2016} tests when there are as few as three treated and three untreated clusters. In cases with six treated and six untreated clusters, all three methods and cluster-robust versions of the wild bootstrap \citep[see, e.g,][]{cameronetal2008, webb2013} give similar results. The placebo test is therefore especially appropriate for situations with very small numbers of clusters.
 
Other methods of cluster-robust inference are surveyed in \citet{cameronmiller2014} and \citet{mackinnonwebb2014}. They perform well for moderate numbers of clusters and are typically concerned with adjusting standard $t$ and $F$ tests in the linear regression model; \citet{donaldlang2007} derive corrections to standard errors and degrees of freedom under random-effects assumptions. \citet{hansen2007} and \citet{besteretal2014} use standard cluster-robust covariance matrix estimators in a framework with a small number of clusters but adjust critical values using techniques originally developed by \citet{kiefervogelsang2002,kiefervogelsang2005} for time series. \citet{imbenskolesar2016} derive degrees-of-freedom and standard error corrections following the approach of \citet{bellmccaffrey2002}. \citet{carteretal2013} develop measures that can be used to determine degrees-of-freedom corrections. Early papers that recognize the necessity of corrections include \citet{kloek1981} and \citet{moulton1990}. The present paper differs fundamentally from all of these approaches because it applies to a variety of models other than the linear regression model and derives critical values from the data that automatically account for within-cluster dependence instead of correcting the degrees of freedom of standard critical values.

The paper is organized as follows: Section~\ref{s:alg} constructs a general class of test statistics and shows how it can be used to conduct placebo inference. Section~\ref{s:asymptotics} establishes the asymptotic validity of the placebo test under explicit regularity conditions. Section~\ref{s:examples} verifies these conditions for several empirically relevant situations. Section~\ref{s:montecarlo} illustrates the finite sample behavior of the placebo test relative to other methods of inference in simulations, in data from the Current Population Survey, and in data from a large-scale experiment by \citet{dalbofrechette2011} on infinitely repeated games. Section~\ref{s:conc} concludes. The appendix contains auxiliary results and proofs.

I use the following notation: $1\{\cdot\}$ is the indicator function,  $|\cdot |$ is Euclidean norm, and $a\lesssim b$ means that $a$ is bounded by an absolute constant times $b$. For a matrix $A$, $|A|$ denotes matrix Euclidean norm $\sqrt{\trace(A'A)}$. The $L_p$-norm is $\Vert X\Vert_p = (\ev |X|^p)^{1/p}$. Unless otherwise noted, limits are as $n\to\infty$. Convergence in distribution is denoted by $\leadsto$.

\section{Placebo inference when the number of clusters is small}\label{s:alg}
This section introduces a general framework for Fisher-style placebo inference with a very small number of clusters. Applications of the framework to least squares regression with cluster-level treatment (Examples \ref{ex:clusterreg} and \ref{ex:clusterreg2} below) and difference-in-differences estimation (Examples \ref{ex:diffindiff} and \ref{ex:diffindiff2}) are provided.

Consider a situation where data from several large clusters (e.g., counties, regions, schools, firms, or stretches of time) are available. Observations are possibly dependent within clusters but are independent across clusters. Some of the clusters received treatment but others did not. The quantity of interest is a treatment effect or an object related to a treatment effect that can be represented by a scalar parameter~$\beta$. Because the entire cluster received treatment, this parameter is only determined up to a location shift $\theta_0$ within a treated cluster and only the left-hand side of \[ \theta_1 = \theta_0 + \beta \] can be identified from such a cluster. If the clusters have similar characteristics, $\theta_0$ can be identified from an untreated cluster. Comparing the two clusters identifies $\beta$. As the following two examples illustrate, this situation is prevalent in modern empirical work.

\addline

\begin{example}[Regression with cluster-level treatment]\label{ex:clusterreg} Consider a linear regression model
\begin{equation}\label{eq:regression}
Y_{i,k} = \theta_0 + \beta D_{k} + \eta_k' X_{i,k} + U_{i,k}, \qquad 1\leq i\leq m_{n,k}.
\end{equation}
Here $i$ indexes individuals within a cluster so that cluster $k$ has $m_{n,k}$ individuals. The goal is to conduct inference about the coefficient $\beta$ on the treatment dummy $D_k$ indicating whether cluster $k$ received treatment or not. In addition, the regression includes covariates $X_{i,k}$ that vary within each cluster and have coefficients $\eta_k$ that may vary across clusters such that $\ev(U_{i,k}\mid D_k, X_{i,k}) = 0$. Because $D_k$ is either $1$ or $0$, the data identify $\theta_1 = \theta_0 + \beta$ within a treated cluster and $\theta_0$ within an untreated cluster. \sqed
\end{example}

\begin{example}[Difference in differences]\label{ex:diffindiff}  Adjust the preceding example to the fixed-effects panel model
\begin{equation}\label{eq:diffindiff}
Y_{t,k} = \theta_0 I_t + \beta I_t D_{k} + \eta_k' X_{t,k} + \zeta_k + U_{t,k}.
\end{equation}
Now $k$ indexes individuals, $t$ indexes time, the dummy $I_t=1\{t>t_0\}$ indicates periods after an intervention at known time $t_0$, and the dummy $D_k$ indicates whether an intervention occurred. The $m_{n,k}$ observations 
of individual $k$ now form the $k$-th cluster and the $\zeta_k$ are cluster fixed effects. Provided the components of $X_{t,k}$ vary before or after $t_0$, the data again identify $\theta_1 = \theta_0 + \beta$ in a treated cluster and $\theta_0$ in an untreated cluster. \sqed
\end{example}

The goal of this paper is to develop a simple permutation test of the null hypothesis \[H_0\colon \beta=0 \] (or, equivalently, $H_0\colon \theta_1 = \theta_0$) that bases its decision on a null distribution obtained by assigning ``placebo treatments'' to all possible combinations of clusters and leads to asymptotically valid inference when only very few clusters are available. For the asymptotic theory, the number of treated clusters $q_{1,n}$ and the number of untreated clusters $q_{0,n}$ grow with the total sample size $n = \sum_{k=1}^{q_n} m_{n,k}$ (where $q_n = q_{1,n} + q_{0,n}$) but in some circumstances can be fixed if other conditions are strengthened. If $q_n$ grows with $n$, it does not have to do so at a specific rate. The number of clusters can therefore be very small relative to $n$. The parameter $\beta$ does not need to be interpretable by itself and may possibly only determine whether the actual treatment effect of interest is zero or not. For example, suppose \eqref{eq:regression} is the latent model in a binary choice framework with symmetric link function $F$ and $\eta_k \equiv \eta$;  then $F(\theta_0 + \beta + \eta' x) - F(\theta_0 + \eta' x)$ for some $x$ is typically the relevant treatment effect but $H_0\colon \beta=0$ is still the appropriate hypothesis to test. 

The permutation test developed in this paper is based on the idea that each cluster $k$ can provide an estimate $\hat{\theta}_{n,k}$ of either $\theta_1$ or $\theta_0$ depending on whether $k$ received treatment or not. Here, $\hat{\theta}_{n,k}$ can be any estimate that uses data only from cluster $k$ and satisfies a mild regularity condition (Assumption \ref{as:decom} in the next section). The condition essentially states that either $\sqrt{n/q_n}(\smash{\hat{\theta}_{n,k}} - \theta_1)$ or $\sqrt{n/q_n}(\smash{\hat{\theta}_{n,k}} - \theta_0)$ can be approximated by a well-behaved random variable with positive variance. These variances need not be identical across clusters, so the $\smash{\hat{\theta}_{n,k}}$ do not have to be standardized. Implicit in the assumption of a positive variance is also the requirement that the individual cluster sizes $m_{n,k}$ and the average cluster size $n/q_n$ grow at a similar rate (in the sense that $m_{n,k}q_n/n$ converges to a positive constant) to ensure that each cluster provides similarly good estimates of either $\theta_1$ or $\theta_0$. Convergence rates $\rn\to \infty$ other than $\sqrt{n/q_n}$ can also be accomodated as long as $\rn(\smash{\hat{\theta}_{n,k}} - \theta_1)$ or $\rn(\smash{\hat{\theta}_{n,k}} - \theta_0)$ have the required properties. Nonstandard estimators such as the smoothed maximum score estimator of \citet{horowitz1992}, for which $\rn$ could be as slow as $(n/q_n)^{2/5}$, are therefore also included in the analysis. 

I now give two examples of appropriate cluster-level statistics $\hat{\theta}_{n,k}$. In addition to the examples provided here, the methods discussed in this paper apply to estimates arising from, e.g., quantile regression, censored regression, and binary choice models such as probit. Section \ref{s:examples} contains further details and examples. 

\begin{example}[Regression with cluster-level treatment, continued]
\label{ex:clusterreg2} Suppose cluster $k$ received treatment and cluster $l$ did not. If each cluster is viewed as a separate regression, \eqref{eq:regression} can be written as
\[ Y_{i,k} = \theta_1 + \eta_k' X_{i,k} + U_{i,k}\qquad\text{and}\qquad Y_{i,l} = \theta_0 + \eta_l' X_{i,l} + U_{i,l}.\]
Denote the least squares estimates of the constants $\theta_1$ and $\theta_0$ in these regressions by $\hat{\theta}_{n,k}$ and $\hat{\theta}_{n,l}$. If $\beta = 0$, then $\theta_1 = \theta_0$ and $\smash{\hat{\theta}_{n,k} - \hat{\theta}_{n,l}} = \smash{(\hat{\theta}_{n,k} - \theta_0)} - \smash{(\hat{\theta}_{n,l} - \theta_0)}\approx 0$. Hence, even after rescaling by the convergence rates of the two least squares estimates, the difference
$\sqrt{m_{n,k}} (\hat{\theta}_{n,k} - \theta_0) - \sqrt{m_{n,l}} (\hat{\theta}_{n,l} - \theta_0)$
will be bounded in probability, whereas if $\theta_1 > \theta_0$ the display diverges to positive infinity as $n\to\infty$. Under standard assumptions, $\sqrt{m_{n,k}} (\hat{\theta}_{n,k} - \theta_0)$ and $\sqrt{m_{n,l}} (\hat{\theta}_{n,l} - \theta_0)$ have asymptotic linear representations with positive variance. The same must be true for $\sqrt{n/q_n} (\hat{\theta}_{n,k} - \theta_0)$ and $\sqrt{n/q_n} (\hat{\theta}_{n,l} - \theta_0)$ if $m_{n,k}q_n/n$ converges to a positive constant for every $1\leq k\leq q_n$.   \sqed
\end{example}

\begin{example}[Difference in differences, continued]\label{ex:diffindiff2} Suppose $k$ was treated but $l$ was not. View each cluster as a separate regression and rewrite \eqref{eq:diffindiff} as
\begin{equation*}
Y_{t,k} = \theta_1 I_t + \eta_k' X_{t,k} + \zeta_k + U_{t,k}\qquad\text{and}\qquad Y_{t,l} = \theta_0 I_t + \eta_l' X_{t,l} + \zeta_l + U_{t,l}.
\end{equation*}
The least squares estimates $\hat{\theta}_{n,k}$ and $\hat{\theta}_{n,l}$ of the slope parameters $\theta_1$ and $\theta_0$ are again suitable cluster-level estimates. \sqed
\end{example}

Order the data such that indices $1\leq k\leq q_{1,n}$ correspond to treated clusters and indices $q_{1,n}+1\leq k\leq q_n$ correspond to untreated clusters. Define the comparison-of-means function \[ (x_1, \dots, x_{q_n}) \mapsto \mean{T}_n(x_1, \dots, x_{q_n}) = \frac{1}{q_{1,n}}\sum_{k=1}^{q_{1,n}} x_k - \frac{1}{q_{0,n}}\sum_{k=1}^{q_{0,n}} x_{q_{1,n} + k} = \frac{1}{q_{1,n} q_{0,n}} \sum_{k=1}^{q_{1,n}}\sum_{l=1+q_{1,n}}^{q_n} \! \! \!\! (x_k-x_l). \] A comparison of means of the cluster estimates $\hat{\theta}_n = (\hat{\theta}_{n,1}, \dots, \hat{\theta}_{n,q_n})$, 
\begin{equation}\label{eq:tstat}
 T_n := \mean{T}_n(\hat{\theta}_n) = \frac{1}{q_{1,n} q_{0,n}} \sum_{k=1}^{q_{1,n}}\sum_{l=1+q_{1,n}}^{q_n} (\hat{\theta}_{n,k} - \hat{\theta}_{n,l}),
\end{equation}
summarizes all possible pairwise comparisons of treated and untreated clusters. The goal is now to construct a consistent permutation test based on $T_n$. This is possible because the scaled differences of cluster-level estimates $\rn(\hat{\theta}_{n,k} - \hat{\theta}_{n,l})$ remain stochastically bounded as $n$ grows under the null hypothesis $\beta=0$ and diverge to positive or negative infinity depending on whether $\beta > 0$ or $\beta < 0$. The results in Section \ref{s:asymptotics} show that the summary statistic $T_n$ inherits these features.

The idea underlying the permutation test based on \eqref{eq:tstat} is that if $H_0\colon \theta_1 = \theta_0$ is true, then the behavior of the estimates of $\theta_1$ and $\theta_0$ from any pair of clusters should be approximately the same---regardless of whether any of these clusters actually received treatment or not. Hence, \emph{all} differences $\hat{\theta}_{n,k} - \hat{\theta}_{n,l}$, $k\neq l\in\{1,2,\dots,q_n\}$, should behave similarly under the null. The statistic in \eqref{eq:tstat} could therefore be recomputed for every possible permutation of the indices $1,\dots,q_n$ and the distribution of these permuted statistics could be interpreted as an estimate of the null distribution of $T_n$. This is equivalent to assigning a placebo treatment indicator to every possible subset of $q_{1,n}$ clusters of the $q_n$ available clusters and recording the corresponding placebo realization of $T_n$. The original statistic $T_n$ could then be compared to the quantiles of the distribution of these placebo statistics. As I show in the next section, this intuition is correct in a very large class of models of practical interest if $q_{1,n}/q_n\to 1/2$ (see Corollary \ref{c:testconsistencyeq}) or in situations where the clusters are very similar (Theorem \ref{t:testconsistencyfin}). In general, however, the placebo realizations of $T_n$ have to be adjusted with a correction factor based on the two-sample variance function
\begin{align*}
(x_1, \dots, x_{q_n}) \mapsto \hat{S}_n^2(x_1, \dots, x_{q_n}) = &\frac{1}{q_{1,n}(q_{1,n} - 1)}\sum_{k=1}^{q_{1,n}} \biggl(x_k - \frac{1}{q_{1,n}}\sum_{l=1}^{q_{1,n}} x_l\biggr)^2\\ &\qquad + \frac{1}{q_{0,n}(q_{0,n} - 1)}\sum_{k=1+q_{1,n}}^{q_n} \biggl(x_k - \frac{1}{q_{0,n}}\sum_{l=1+q_{1,n}}^{q_n} x_k\biggr)^2
\end{align*} 
to yield a consistent test (Theorem \ref{t:testconsistency}). The reason for this adjustment is that a na\"ive permutation test would require the joint distribution of $\hat{\theta}_n = (\hat{\theta}_{n,1}, \dots, \hat{\theta}_{n,q_n})$ to be---at least asymptotically---invariant to permutation, which is not necessarily satisfied here.

To define the placebo statistics, let $\Pi^*$ be the set of permutations of $(1, 2, \dots, q_n)$. Each permutation $\pi\in \Pi^*$ depends on $n$ whenever $q_n$ does, but this is suppressed in the notation to prevent clutter. View the cluster-level estimates as maps $k\mapsto \hat{\theta}_{n,k}$ and, for each $\pi$, take $\pi(k)$ to be the $k$-th coordinate of $\pi$ so that $\hat{\theta}_{n,\pi(k)}$ is computed with data from cluster $\pi(k)$. Denote the permuted vector of cluster-level estimates by \[\pi\mapsto \pi \hat{\theta}_n = (\hat{\theta}_{n,\pi(1)}, \dots, \hat{\theta}_{n,\pi(q_n)}),\qquad \pi\in\Pi^*. \]
This gives rise to the family of (adjusted) placebo statistics
\begin{equation}\label{eq:permstat}
 T_n(\pi \hat{\theta}_n) = \mean{T}_n(\pi \hat{\theta}_n)\frac{\hat{S}_n(\hat{\theta}_n)}{\hat{S}_n(\pi \hat{\theta}_n)}
\end{equation}
indexed by $\pi\in\Pi^*$. Also note that $T_n(\hat{\theta}_n) = \mean{T}_n( \hat{\theta}_n) = T_n$. I will occasionally use $T_n(\hat{\theta}_n)$ and $\mean{T}_n( \hat{\theta}_n)$ to emphasize the dependence of $T_n$ on $\hat{\theta}_n$.

By construction, the ordering of the numbers $\pi(1),\dots, \pi(q_{1,n})$ and $\pi(q_{1,n} + 1),\dots, \pi(q_n)$ does not change the value of $T_n(\pi\hat{\theta}_n)$. Hence, it suffices to compute the $\binom{q_n}{q_{1,n}}$ placebo statistics indexed by the set $\Pi\subset \Pi^*$ for which the combination $\{\pi(1),\dots, \pi(q_{1,n})\}$ (and therefore the combination $\{\pi(q_{1,n}+ 1),\dots, \pi(q_n)\}$) 
is unique. One way of representing this set is \[ \Pi = \Bigl\{\pi\in\Pi^* : \pi(1) < \dots < \pi(q_{1,n})\text{~and~}\pi(q_{1,n}+1) < \dots < \pi(q_n)\Bigr\}. \]
Denote by $\lceil a \rceil$ the smallest integer larger than $a$ and let $|A|$ denote cardinality of a set $A$. The $1-\alpha$ quantile \[ c_{n,\alpha} = c_{n,\alpha}(\hat{\theta}_n) \] of $T_n(\pi\hat{\theta}_n)$ as $\pi$ varies over $\Pi$ is then the $\lceil |\Pi|(1-\alpha)\rceil$-th largest element of $\{T_n(\pi\hat{\theta}_n) : \pi\in\Pi\}$.

The following procedure provides a generic test for placebo inference with cluster-level statistics. Throughout this paper I consider one-sided tests against the alternative $\beta > 0$ but all results below remain valid for the alternatives $\beta < 0$ and $\beta \neq 0$. See the Remarks immediately below for the necessary modifications.

\begin{algorithm}[Placebo test]\phantomsection 
\label{al:placebo}
\begin{enumerate}[(i)]
	\item Compute $T_n$ as in \eqref{eq:tstat}.
	\item For each permutation $\pi\in\Pi$, compute $T_n(\pi\hat{\theta}_n)$ as in \eqref{eq:permstat}. 
	\item Reject the null hypothesis $\beta = 0$ in favor of the alternative $\beta > 0$ if $T_n$ exceeds $c_{n,\alpha}$, the $1-\alpha$ quantile of $\{T_n(\pi\hat{\theta}_n) : \pi\in\Pi\}$.
	\end{enumerate}
\end{algorithm}

\begin{remarks}
(i)~The adjustment factor in \eqref{eq:permstat} converges to one if $q_{1,n}/q_n\to 1/2$. For the placebo test it is therefore sufficient to compare $T_n$ to $\mean{c}_{n,\alpha} = \mean{c}_{n,\alpha}(\hat{\theta}_n)$, the $1-\alpha$ quantile of $\{\mean{T}_n(\pi\hat{\theta}_n) : \pi\in\Pi\}$, if $q_{1,n} = q_{0,n}$. The remaining parts of this Remark also apply to this version of the placebo test with $\mean{c}_{n,\alpha}$ in place of $c_{n,\alpha}$.

(ii)~The test decision can also be made with the \emph{p}-value 
\begin{equation}\label{eq:pval}
p_n(\hat{\theta}_n) =  \inf\{ p \in (0,1) : T_n(\hat{\theta}_n) > c_{n,p}(\hat{\theta}_n) \} = \frac{1}{|\Pi|}\sum_{\pi\in\Pi}1\{ T_{n}(\pi\hat{\theta}_n) \geq T_n(\hat{\theta}_n) \}
\end{equation}
because the statement $T_n > c_{n,\alpha}$ is equivalent to $p_n(\hat{\theta}_n) \leq \alpha$. See Appendix \ref{s:proofs} for proofs of this assertion and the second equality in the preceding display.

(iii)~For a one-sided test against $\beta < 0$, reject if $T_n(-\hat{\theta}_n) >  c_{n,\alpha}(-\hat{\theta}_n)$ or, equivalently, if  $T_n(\hat{\theta}_n)$ is smaller than the $\lfloor |\Pi|\alpha\rfloor$-th largest element of $\{T_n(\pi\hat{\theta}_n) : \pi\in\Pi\}$, where $\lfloor a \rfloor$ is the largest integer smaller than $a$. For a two-sided test against $\beta \neq 0$, reject if $T_n(\hat{\theta}_n) > c_{n,\alpha/2}(\hat{\theta}_n)$ or $T_n(-\hat{\theta}_n) > c_{n,\alpha/2}(-\hat{\theta}_n)$. A $p$-value for a two-sided test can be defined as $2\min\{ p_n(\hat{\theta}_n), p_n(-\hat{\theta}_n)\}.$

(iv)~From a theoretical standpoint, there is no difference in tests based on $\Pi$ or $\Pi^*$ because they lead to identical test decisions at any sample size. However, in practice the permutation statistics should always be computed from $\Pi$ in order to reduce the number of computations to $\binom{q_n}{q_{1,n}}$ statistics. For example, if there are $6$ treated and $6$ untreated clusters, then computing all permutations from $\Pi^*$ requires evaluation of $12! \approx 497$ million statistics whereas $\Pi$ requires only $\binom{12}{6} = 924$. If $\Pi$ is too large to make computations feasible, then $\Pi$ can be replaced with a set $\Pi_M$ consisting of $M$ draws from the uniform distribution on $\Pi$. It is easy to see that the resulting $p$-value $p_{n,M}(\hat{\theta}_n)$ approximates $p_n(\hat{\theta}_n)$ with arbitrary precision for $M$ large enough in the sense that $p_{n,M}(\hat{\theta}_n) \pto p_n(\hat{\theta}_n)$ as $M\to\infty$. \phantomsection\label{rev:randomperms}

(v)~If desired, the test decision in Algorithm \ref{al:placebo} can be based on a randomized test instead of the nonrandomized test $t\mapsto 1\{t > c_{n,\alpha}\}$. The randomized test can be constructed with the help of a test function $t\mapsto 1\{\varphi_{n,\alpha}(t)\geq U\}$ defined by an independent variable $U$ with a uniform distribution on $[0,1]$ and the nonrandomized test function 
\begin{equation} \label{eq:randtest}
\varphi_{n,\alpha}(t) = \begin{cases} 1 & t > c_{n,\alpha}\\ \delta_n & t = c_{n,\alpha}\\ 0 & t < c_{n,\alpha}, \end{cases}\quad \text{where }\delta_n = \frac{|\Pi|\alpha - |\{\pi\in\Pi : T_n(\pi\hat{\theta}_n) > c_{n,\alpha}\}|}{|\{\pi\in\Pi : T_n(\pi\hat{\theta}_n) = c_{n,\alpha}\}|}. 
\end{equation}
The randomized test has the advantage that if the distribution of the $\hat{\theta}_{n,1}, \dots, \hat{\theta}_{n,q_n}$ were invariant to permutation under the null hypothesis, then this test would be exact because $\prob (\varphi_{n,\alpha}(T_n) \geq U) = \alpha$ by a standard argument due to \citet{hoeffding1952}. However, such a test is of little use in practice because rejecting the null if $\varphi_{n,\alpha}(T_n)\geq U$ bases the test decision on a \emph{single} draw from the uniform distribution. Two researchers with identical data sets could therefore arrive at opposite test decisions simply because of two different draws of $U$ or could draw until a desired conclusion was reached.

(vi)~It should be noted that, under extreme circumstances, the power of a test that rejects if $T_n$ exceeds a critical value obtained from a permutation distribution can be zero. This is because $T_n\in \{T_n(\pi\hat{\theta}_n) : \pi\in\Pi\}$ and $T_n$ cannot be larger than $c_{n,\alpha}$ whenever $\lceil |\Pi|(1-\alpha)\rceil = |\Pi|$ or, equivalently, $|\Pi| < \alpha^{-1}$. (This does not change if $\Pi^*$ is used instead of $\Pi$.) Hence, if $\alpha=.05$, one needs $3$ treated and $3$ untreated clusters to ensure that $|\Pi| = \binom{6}{3} = 20 \geq \alpha^{-1}$ in order for Algorithm \ref{al:placebo} to have nonzero power. In contrast, the test of \citet{canayetal2014} has $2^{\min\{q_1,q_0\}}$ possible permutations and therefore their nonrandomized test needs at least $5$ treated and $5$ untreated clusters in order to have nonzero power with a significance level of $.05$. This problem persists in randomized versions of these tests because the test function $\varphi_{n,\alpha}$ in the preceding display is then bounded above by $\delta_n$, which in turn is strictly smaller than 1 if $|\Pi| < \alpha^{-1}$.\sqed
\end{remarks}

The next two sections prove the consistency of Algorithm \ref{al:placebo} under explicit regularity conditions and explain how these conditions can be verified. Section~\ref{s:montecarlo} illustrates the finite sample behavior of the placebo test relative to other methods of inference in three Monte Carlo experiments and an empirical application.

\section{Assumptions and generic asymptotics}\label{s:asymptotics}
This section introduces a simple high-level condition (Assumption \ref{as:decom}) that I use in Theorem \ref{t:testconsistency} to prove the validity of the placebo test. I then discuss a version of the placebo test that---under stronger assumptions---can be consistent with a fixed number of clusters.

The standard proof strategy for establishing weak convergence of an estimator is to decompose it into a well-behaved leading term (to which distributional limit theory is applied) and a small remainder. I assume that such a decomposition is available for each of the cluster-level statistics $\hat{\theta}_{n,k}$. I discuss this condition in detail immediately below. The next section shows how it can be verified for several standard estimators.
\begin{assumption}\label{as:decom}
There are constants  $\theta_1, \theta_0$ and a sequence $\rn\to\infty$ such that 
\begin{equation}\label{eq:decom}
\rn(\hat{\theta}_{n,k} - \theta_{1\{k\leq q_{1,n}\}}) = Z_{n,k} + R_{n,k}, \qquad 1\leq k\leq q_n,
\end{equation} 
where {\upshape (i)}~$\min\{q_{1,n},q_{0,n}\} \to\infty$ and $q_{1,n}/q_n\to \lambda\in (0,1)$, {\upshape (ii)}~$\ev Z_{n,k} = 0$ for each $k$, {\upshape (iii)}~$\ev |Z_{n,k}|^p$ is uniformly bounded in $n$ and $k$ for some $p > 2$, {\upshape (iv)}~$Z_{n,1},\dots, \smash{Z_{n,q_n}}$ are independent, {\upshape (v)}~$\smash{\sum_{k=1}^{q_n}} \var Z_{n,k}/q_n$ is bounded away from zero, and {\upshape (vi)}
\begin{equation}\label{eq:unegli}
\sum_{k=1}^{q_n} |R_{n, k}| = \op(\sqrt{q_n}).
\end{equation}
\end{assumption}

The leading term $Z_{n,k}$ in Assumption \ref{as:decom} is typically a byproduct of studying the large sample distribution of the estimator on which the $\hat{\theta}_{n,k}$ are based. Conditions (ii)-(v) impose a simple set of restrictions on this term. In particular, if $\smash{\hat{\theta}_{n,k}}$ is an estimator with an asymptotic linear representation (such as any of the estimators mentioned in the previous section), then $Z_{n,k}$ plays the role of that representation and $\ev |Z_{n,k}|^p$ in (iii) can be simply bounded further by inequalities for dependent data; see, among many others, \citet{wu2005b} and \citet{machkouriaetal2013} for easy-to-use results that apply to very general dependence structures. If the cluster structure is generated by time or spatial dependence, an informative bound on $\ev |Z_{n,k}|^p$ will be related to the summability of the auto-covariance function of the asymptotic linear representation. The condition on $\ev |Z_{n,k}|^p$ then guarantees that the data within all clusters exhibit short-range dependence, even though the precise form of the dependence does not need to be known to the researcher. This condition can be expected to fail, e.g., if individual time series in a panel (where cluster $k$ contains the time series of the $k$-th individual) contain unit roots. Condition (iv) imposes that data across clusters are independent. Condition (v) ensures that the leading terms in \eqref{eq:decom} generally do not have a degenerate distribution. It is \emph{not} required that the $Z_{n,k}$ converge in distribution, have the same variance, or that estimates of their variances are available. 

Conditions (i) and (vi) are the only restrictions on the total number of clusters $q_n$ and the remainder terms $R_{n,k}$. Together, they impose that the absolute sum of the remainder terms $R_{n,k}$ is small as $q_n$ grows. Neither a specific rate of divergence of $q_n$ nor closed-form expressions for $R_{n,k}$ are needed. However, as it stands, condition (vi) requires knowledge about the behavior of the collection of remainders $R_{n,1},\dots,R_{n,q_n}$ as $q_n$ grows, whereas in applications it is typically only known that $R_{n,k}\pto 0$ for each fixed $k$. Fortunately, as the following result shows, convergence at each $k$ is already enough to guarantee the existence of sequences $q_{1,n}$ and $q_{0,n}$ that satisfy conditions (i) and (vi). Inspection of its proof reveals that such sequences necessarily grow slowly with the total sample size.  This can be gleaned from the fact that if $q_n$ were fixed, convergence of the remainder terms to zero would imply \eqref{eq:unegli} by the continuous mapping theorem. Hence, condition (vi) is particularly appropriate in the present situation where the number of clusters is small relative to the sample size. 

\begin{proposition}\label{p:uniformremainder}
Suppose that $R_{n,k}\pto 0$ for each $k$. Then there are sequences $q_{1,n}\to \infty$ and $q_{0,n}\to \infty$ such that $q_{1,n}/(q_{1,n}+q_{0,n})\to \lambda\in (0,1)$ as $n\to\infty$ and \eqref{eq:unegli} holds.
\end{proposition}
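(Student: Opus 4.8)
The plan is to exploit the elementary fact that a \emph{finite} sum of random variables, each converging to zero in probability, again converges to zero in probability: for every fixed positive integer $K$ the hypothesis $R_{n,k}\pto 0$ yields $\sum_{k=1}^{K}|R_{n,k}| \pto 0$ as $n\to\infty$. The only obstruction to \eqref{eq:unegli} is that the number of summands $q_n$ in \eqref{eq:unegli} is itself growing. The idea is to let $q_n$ diverge, but so slowly that at sample size $n$ the partial sum $\sum_{k=1}^{q_n}|R_{n,k}|$ is still governed by a tail bound that was established for a \emph{fixed} number of clusters; a standard diagonal argument then turns this into an explicit construction of $q_{1,n}$ and $q_{0,n}$.

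First I would record that for each positive integer $j$ there is an $N_j$, which may be taken strictly increasing in $j$, such that
$$\prob\Bigl(\textstyle\sum_{k=1}^{j}|R_{n,k}| \geq j^{-1/2}\Bigr) \leq \frac1j \qquad\text{for all } n\geq N_j .$$
This is the convergence of the previous paragraph applied with $K=j$. Setting $g(n) = j$ for $N_j\leq n< N_{j+1}$ makes $n\mapsto g(n)$ nondecreasing with $g(n)\to\infty$, and whenever $N_j\leq n< N_{j+1}$,
$$\prob\Bigl(g(n)^{-1/2}\textstyle\sum_{k=1}^{g(n)}|R_{n,k}| \geq j^{-1}\Bigr) = \prob\Bigl(\textstyle\sum_{k=1}^{j}|R_{n,k}| \geq j^{-1/2}\Bigr)\leq \frac1j .$$
Since $j=g(n)\to\infty$ as $n\to\infty$, for any fixed $\varepsilon>0$ the probability that $g(n)^{-1/2}\sum_{k=1}^{g(n)}|R_{n,k}|\geq\varepsilon$ is, for all large $n$, at most $1/g(n)$ and hence tends to zero; that is, $\sum_{k=1}^{g(n)}|R_{n,k}| = \op(\sqrt{g(n)})$.

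It then remains only to split $g(n)$ in a prescribed proportion. For any $\lambda\in(0,1)$ I would put $q_{1,n} = \lceil \lambda g(n)\rceil$ and $q_{0,n} = g(n) - q_{1,n}$ for all $n$ large enough that $\lceil\lambda g(n)\rceil < g(n)$, and $q_{1,n}=q_{0,n}=1$ for the remaining (finitely many) $n$; as every conclusion to be proved is asymptotic, this redefinition is harmless. Then $q_n := q_{1,n}+q_{0,n}$ equals $g(n)$ for all large $n$, so $q_n\to\infty$ and $\sum_{k=1}^{q_n}|R_{n,k}| = \op(\sqrt{q_n})$, which is \eqref{eq:unegli}; moreover $q_{0,n}\geq (1-\lambda)g(n)-1\to\infty$ because $\lambda<1$, $q_{1,n}\geq\lambda g(n)\to\infty$ because $\lambda>0$, and $q_{1,n}/q_n = \lceil\lambda g(n)\rceil/g(n)\to\lambda\in(0,1)$.

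I do not expect a genuinely hard step here: this is a routine diagonalization, and its only subtlety is the order in which the choices are made — the cutoffs $N_j$ must be fixed \emph{after} committing to counting exactly $j$ clusters on the block $[N_j,N_{j+1})$. This ordering is also what lies behind the remark in the text that the resulting sequences must grow slowly with $n$: the function $g$ increments only at the times $N_j$, which may be spaced arbitrarily far apart depending on how slowly the remainders $R_{n,k}$ approach zero.
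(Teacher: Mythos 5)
Your proof is correct and follows essentially the same route as the paper: the paper applies the continuous mapping theorem to a finite sum for each fixed $\varepsilon$ and then invokes Lemma \ref{l:pseq} to extract a slowly diverging sequence, and your block construction with the cutoffs $N_j$ is precisely the diagonalization used in the proof of that lemma, merely inlined. The only cosmetic difference is that you diagonalize on the total count $g(n)$ and split it into $q_{1,n}$ and $q_{0,n}$ afterwards, whereas the paper fixes the proportional split via $q_1(\varepsilon)$ and $q_0(\varepsilon)$ before diagonalizing; both orderings work.
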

\begin{remark}
\phantomsection\label{rev:remainderremark}%
The model at hand determines how large $q_n$ can be relative to other quantities. For example, I show in the next section that \eqref{eq:unegli} holds in a linear model with very general weak dependence assumptions under the mild condition $q_n/\inf_k m_{n,k}^2\to 0$. (Recall that $m_{n,k}$ is the size of the $k$-th cluster.) \sqed
\end{remark}

The following result establishes the consistency of the placebo test introduced in Algorithm~\ref{al:placebo} under Assumption \ref{as:decom}. The theorem is stated as a one-sided test to the right but remains valid with obvious modifications as a one-sided test to the left or as a two-sided test.
\begin{theorem}\label{t:testconsistency}
Suppose Assumption \ref{as:decom} is satisfied. For all $\alpha\in(0,1)$,
\begin{enumerate}[\upshape (i)]
\item if $\beta = 0$, then $\prob(T_n > c_{n,\alpha}) \to \alpha$ and
\item if $\beta > 0$, then $\prob(T_n > c_{n,\alpha})\to 1$.
\end{enumerate}
\end{theorem}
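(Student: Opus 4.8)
The plan is to handle all $\binom{q_n}{q_{1,n}}$ placebo statistics at once by rewriting them in studentized form and then invoking a conditional permutation central limit theorem. Write $\psi_n$ for the vector whose first $q_{1,n}$ coordinates equal $\theta_1$ and whose last $q_{0,n}$ coordinates equal $\theta_0$, so that Assumption~\ref{as:decom} reads $\rn(\hat\theta_n-\psi_n)=Z_n+R_n$ with $Z_n=(Z_{n,1},\dots,Z_{n,q_n})$ and $R_n=(R_{n,1},\dots,R_{n,q_n})$. Two structural facts organize everything. First, $\mean{T}_n$ is linear and annihilates any vector that is constant on $\{1,\dots,q_{1,n}\}$ and constant on $\{q_{1,n}+1,\dots,q_n\}$, while $\hat{S}_n$ is unchanged by adding such a vector and is positively homogeneous of degree one; since $\psi_n$ is block-constant this gives $\hat{S}_n(\hat\theta_n)=\rn^{-1}\hat{S}_n(Z_n+R_n)$ \emph{regardless of the value of $\beta$}, and by Assumption~\ref{as:decom}(v), a weak law of large numbers for sample variances of arrays with uniformly bounded $L^{p/2}$ norms ($p/2>1$), and \eqref{eq:unegli}, one gets $\rn^2\hat{S}_n^2(\hat\theta_n)=v_n(1+\op(1))$ with $v_n:=\var\mean{T}_n(Z_n)$ of exact order $q_n^{-1}$; in particular $\hat{S}_n(\hat\theta_n)=\Op(\rn^{-1}q_n^{-1/2})$. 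Second, since $\hat{S}_n(\hat\theta_n)\geq 0$ does not depend on the permutation, the placebo critical value factors as $c_{n,\alpha}=\hat{S}_n(\hat\theta_n)\,\hat{q}_{n,1-\alpha}$, where $\hat{q}_{n,1-\alpha}$ is the $(1-\alpha)$-quantile of the distribution (determined by the data) of the studentized two-sample statistic $t_n(\pi):=\mean{T}_n(\pi\hat\theta_n)/\hat{S}_n(\pi\hat\theta_n)$ as $\pi$ is drawn uniformly from $\Pi$. I would use \eqref{eq:unegli} throughout to replace $Z_n+R_n$ by $Z_n$ inside any average of the $q_n$ coordinates, at a cost of $\op(q_n^{-1/2})$ in a numerator and $\op(q_n^{-1})$ in a squared scale term.

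The core step is to show that, conditionally on the data, $t_n(\pi)\leadsto N(0,1)$ in probability, so that $\hat{q}_{n,1-\alpha}\pto z_{1-\alpha}$, the standard normal quantile. Setting $b_k:=\hat\theta_{n,k}-\frac{1}{q_n}\sum_l\hat\theta_{n,l}$, one has $\mean{T}_n(\pi\hat\theta_n)=\frac{q_n}{q_{1,n}q_{0,n}}\sum_{k=1}^{q_{1,n}}b_{\pi(k)}$, a rescaled sum of a simple random sample without replacement from the finite population $b_1,\dots,b_{q_n}$, with conditional variance $\tau_n^2:=\frac{q_n}{q_{1,n}q_{0,n}}\cdot\frac{1}{q_n-1}\sum_k b_k^2$. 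Haj\'ek's finite-population CLT gives $\mean{T}_n(\pi\hat\theta_n)/\tau_n\leadsto N(0,1)$ conditionally, in probability, provided the Lindeberg ratio $\max_k b_k^2/\sum_k b_k^2$ is $\op(1)$; the $\rn$-scaling cancels in this ratio, which is $\op(1)$ in both cases $\beta=0$ and $\beta>0$ because $\sum_k b_k^2$ dominates $\max_k b_k^2$ (the $Z_n$ contribution to the ratio is $\Op(q_n^{2/p-1})=\op(1)$ since $p>2$, the $R_n$ contribution is $\op(1)$ by \eqref{eq:unegli}, and under $\beta>0$ the block-constant part of $b_k$ makes $\sum_k b_k^2$ of order $q_n$ while $\max_k b_k^2=\Op(1)$). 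Simultaneously, $\hat{S}_n^2(\pi\hat\theta_n)$ has conditional mean exactly $\tau_n^2$ and conditional variance of strictly smaller order: a variance-of-a-sample-variance computation bounds it by a constant times $q_n^{-3}$ times the fourth empirical moment of the $b_k$, which is at most $\max_k b_k^2$ times the sample variance, so the $p>2$ margin forces $\var(\hat{S}_n^2(\pi\hat\theta_n)\mid\text{data})/\tau_n^4=\op(1)$ and hence $\hat{S}_n^2(\pi\hat\theta_n)/\tau_n^2\pto 1$ jointly in the data and $\pi$. A conditional Slutsky argument then yields $t_n(\pi)\leadsto N(0,1)$ conditionally, in probability---this is the studentized two-sample permutation CLT of \citet{neuhaus1993} and \citet{janssen1997,janssen2005}, whose hypotheses are the content of Assumption~\ref{as:decom}. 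Pointwise-in-probability convergence of the conditional distribution function to the continuous $\Phi$ upgrades to uniform convergence, and since $|\Pi|=\binom{q_n}{q_{1,n}}\to\infty$ the discretization in the definition of $c_{n,\alpha}$ is asymptotically negligible, so $\hat{q}_{n,1-\alpha}\pto z_{1-\alpha}$.

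The two conclusions now follow from the unconditional behavior. If $\beta=0$ then $\theta_1=\theta_0$ and $\hat\theta_n=\theta_1\mathbf{1}+\rn^{-1}(Z_n+R_n)$, so by the factorization $T_n>c_{n,\alpha}$ is equivalent to $\mean{T}_n(Z_n+R_n)/\hat{S}_n(Z_n+R_n)>\hat{q}_{n,1-\alpha}$; the left-hand side is an ordinary studentized two-sample statistic of independent, mean-zero, heteroscedastic summands and converges to $N(0,1)$ by the Lindeberg--Feller theorem (its Lindeberg condition follows from \ref{as:decom}(iii)--(v) and $v_n\asymp q_n^{-1}$) together with $\hat{S}_n^2(Z_n+R_n)/v_n\pto 1$; since $\hat{q}_{n,1-\alpha}\pto z_{1-\alpha}$, Slutsky gives $\prob(T_n>c_{n,\alpha})\to\prob(N(0,1)>z_{1-\alpha})=\alpha$. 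If $\beta>0$ then, by linearity of $\mean{T}_n$, $T_n=\mean{T}_n(\psi_n)+\rn^{-1}\mean{T}_n(Z_n+R_n)=\beta+\Op(\rn^{-1}q_n^{-1/2})\pto\beta>0$, whereas $c_{n,\alpha}=\hat{S}_n(\hat\theta_n)\,\hat{q}_{n,1-\alpha}=\Op(\rn^{-1}q_n^{-1/2})\cdot(z_{1-\alpha}+\op(1))=\op(1)$ because $\rn\to\infty$; hence $\prob(T_n>c_{n,\alpha})\to 1$. The one-sided left and two-sided versions are obtained by applying this to $-\hat\theta_n$ and to both tails, as in the Remarks following Algorithm~\ref{al:placebo}.

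The main obstacle is the conditional permutation CLT of the second paragraph, and within it the consistency of the \emph{permuted} scale estimator $\hat{S}_n^2(\pi\hat\theta_n)$ for the conditional variance $\tau_n^2$, joint over the data and the permutation: this is the step that genuinely exploits the $p>2$ moment bound in Assumption~\ref{as:decom}(iii), since it is what renders a fourth-moment-type quantity of order $q_n^{2/p}$ negligible against the order-$q_n$ scale of $\sum_k b_k^2$. Everything else is Lindeberg--Feller, Haj\'ek's theorem, the elementary seminorm/homogeneity properties of $\hat{S}_n$, and bookkeeping of the remainders via \eqref{eq:unegli}.
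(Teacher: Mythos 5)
Your proof is correct and follows essentially the same route as the paper: a Neuhaus--Janssen-type studentized permutation CLT (which the paper imports as Theorem~\ref{t:arrayclt} and you re-derive from H\'ajek's finite-population CLT plus a conditional variance bound on $\hat{S}_n^2(\pi\hat\theta_n)$) pins the rescaled critical value at $z_{1-\alpha}$, the Lindeberg--Feller theorem handles the studentized original statistic under the null, and under the alternative $T_n \pto \beta$ while the critical value is $\op(1)$. Your factorization $c_{n,\alpha}=\hat{S}_n(\hat\theta_n)\,\hat{q}_{n,1-\alpha}$ is a clean way to organize the same bookkeeping the paper performs by dividing the rejection event through by $S_n(\hat\theta_n)$ and invoking quantile-function properties.
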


\begin{remark}
The event $\{T_n > c_{n,\alpha}\}$ is invariant to multiplication of both $T_n$ and the placebo statistics $\{T_n(\pi\hat{\theta}_n) : \pi\in\Pi\}$ by the same positive scalar. The proof of Theorem \ref{t:testconsistency} exploits this invariance to apply limit theory to properly scaled versions of both $T_n$ and $T_n(\pi\hat{\theta}_n)$. The main issue is the dependence structure induced by the permutation procedure. The proof of Theorem \ref{t:testconsistency}(i) also appeals to a remarkable result due to \citet{janssen2005}, who shows that consistent permutation tests of certain hypotheses are possible even in situations where the joint distribution of the data is not invariant to permutations under the null hypothesis. \citeauthor{janssen2005}'s results in turn rely on an insight of \citet{neuhaus1993}. \sqed
\end{remark}

If the number of treated and untreated clusters is approximately the same in the sense that $q_{1,n}/q_n\to 1/2$, then $\hat{S}_n(\hat{\theta}_n)/\hat{S}_n(\pi\hat{\theta}_n)\pto 1$. In that case it suffices to compare $T_n$ to $\mean{c}_{n,\alpha}$, the $1-\alpha$ quantile of $\{\mean{T}_{n}(\pi \hat{\theta}_n) : \pi\in\Pi\}$. (See the discussion above equation \eqref{eq:tstat} for definitions.)
A similar observation was made by \citet{romano1989} in a related context. 
This theoretical result is also confirmed by the Monte Carlo simulations in Section \ref{s:montecarlo}, which suggest that estimating the adjustment factor can be safely avoided if $q_{1,n} = q_{0,n}$.
\begin{corollary}\label{c:testconsistencyeq}
In the situation of Theorem \ref{t:testconsistency} with $q_{1,n}/q_n\to 1/2$,
\begin{enumerate}[\upshape (i)]
\item if $\beta = 0$, then $\prob(T_n > \mean{c}_{n,\alpha}) \to \alpha$ and
\item if $\beta > 0$, then $\prob(T_n > \mean{c}_{n,\alpha})\to 1$.
\end{enumerate}
\end{corollary}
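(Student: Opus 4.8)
The plan is to show that, when $q_{1,n}/q_n\to1/2$, the test rejecting for $T_n>\mean{c}_{n,\alpha}$ is asymptotically equivalent to the test of Theorem~\ref{t:testconsistency}, so that the corollary reduces to it. Because $\{T_n>\mean{c}_{n,\alpha}\}$ is unchanged when $T_n$ and all placebo statistics $\mean{T}_n(\pi\hat\theta_n)$ are multiplied by the positive data-dependent constant $\rn\sqrt{q_n}$, I would work throughout with $\rn\sqrt{q_n}\mean{T}_n(\pi\hat\theta_n)$, $\pi\in\Pi$, and with $\rn\sqrt{q_n}T_n=\rn\sqrt{q_n}\mean{T}_n(\hat\theta_n)$. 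The corollary then reduces to two claims: (a)~under $H_0$, $\rn\sqrt{q_n}T_n\leadsto N(0,\tau^2)$ for some $\tau^2>0$; and (b)~the empirical distribution of $\{\rn\sqrt{q_n}\mean{T}_n(\pi\hat\theta_n):\pi\in\Pi\}$ converges in probability, at every continuity point, to the \emph{same} limit $N(0,\tau^2)$. Given (a) and (b), the quantile-convergence step from the proof of Theorem~\ref{t:testconsistency} gives that $\rn\sqrt{q_n}\mean{c}_{n,\alpha}$ converges in probability to the $1-\alpha$ quantile of $N(0,\tau^2)$, so $\prob(T_n>\mean{c}_{n,\alpha})=\prob(\rn\sqrt{q_n}T_n>\rn\sqrt{q_n}\mean{c}_{n,\alpha})\to\alpha$, which is part~(i).

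\textbf{Proof of (i).}
Under $H_0$, Assumption~\ref{as:decom} gives $\hat\theta_{n,k}=\theta_0+\rn^{-1}(Z_{n,k}+R_{n,k})$, and since $\mean{T}_n$ is translation-invariant, $\rn\sqrt{q_n}\mean{T}_n(\pi\hat\theta_n)=\sqrt{q_n}\mean{T}_n(\pi Z_n)+\sqrt{q_n}\mean{T}_n(\pi R_n)$ with $Z_n=(Z_{n,k})_k$ and $R_n=(R_{n,k})_k$; here $|\sqrt{q_n}\mean{T}_n(\pi R_n)|\le\sqrt{q_n}\,(q_{1,n}^{-1}+q_{0,n}^{-1})\sum_k|R_{n,k}|=\op(1)$ uniformly in $\pi$ by \eqref{eq:unegli}, so it suffices to treat $\sqrt{q_n}\mean{T}_n(\pi Z_n)$. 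Conditionally on the data this is $\sqrt{q_n}$ times a two-sample statistic formed by a uniform random split of the fixed vector $Z_n$ into groups of sizes $q_{1,n}$ and $q_{0,n}$; its permutation mean is $0$ and its permutation variance equals $\tfrac{q_n^2}{q_{1,n}q_{0,n}(q_n-1)}\sum_k(Z_{n,k}-\mean{Z}_n)^2$, which converges in probability to $\tau_{\mathrm{perm}}^2:=\bar\sigma^2/\{\lambda(1-\lambda)\}$, where $\bar\sigma^2:=\lim_n q_n^{-1}\sum_k\var Z_{n,k}$, by a weak law for the independent, uniformly $L^{p/2}$-bounded ($p/2>1$) summands $Z_{n,k}^2$ (Assumption~\ref{as:decom}(iii)). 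The Noether--Lindeberg ratio $\max_k(Z_{n,k}-\mean{Z}_n)^2/\sum_k(Z_{n,k}-\mean{Z}_n)^2$ is $\Op(q_n^{2/p-1})=\op(1)$ by Assumption~\ref{as:decom}(iii),(v), so a combinatorial central limit theorem (as in the proof of Theorem~\ref{t:testconsistency}) applies conditionally on the data and yields claim~(b) with $\tau^2=\tau_{\mathrm{perm}}^2$. For claim~(a), a Lyapunov central limit theorem for the independent array $\{Z_{n,k}\}$ gives $\rn\sqrt{q_n}T_n\leadsto N(0,\tau_{\mathrm{true}}^2)$ under $H_0$ with $\tau_{\mathrm{true}}^2=\lim_n\bigl(\tfrac{q_n}{q_{1,n}^{2}}\sum_{k\le q_{1,n}}\var Z_{n,k}+\tfrac{q_n}{q_{0,n}^{2}}\sum_{k>q_{1,n}}\var Z_{n,k}\bigr)$. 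Writing $A$ and $B$ for the limits of $q_{1,n}^{-1}\sum_{k\le q_{1,n}}\var Z_{n,k}$ and $q_{0,n}^{-1}\sum_{k>q_{1,n}}\var Z_{n,k}$, one gets $\tau_{\mathrm{perm}}^2=A/(1-\lambda)+B/\lambda$ and $\tau_{\mathrm{true}}^2=A/\lambda+B/(1-\lambda)$, so $\tau_{\mathrm{true}}^2-\tau_{\mathrm{perm}}^2=(A-B)(1-2\lambda)/\{\lambda(1-\lambda)\}$, which vanishes precisely because $q_{1,n}/q_n\to1/2$. Hence (a) and (b) hold with a common $\tau^2>0$ and part~(i) follows as above. (If $\bar\sigma^2,A,B$ fail to converge the argument runs along subsequences, which suffices since the limit $\alpha$ is the same along each.)

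\textbf{Proof of (ii) and the main obstacle.}
Under $\beta>0$, Assumption~\ref{as:decom} and a weak law give $T_n=\mean{T}_n(\hat\theta_n)\pto\theta_1-\theta_0=\beta>0$, while the permutation variance of $\mean{T}_n(\pi\hat\theta_n)$ equals $\tfrac{q_n}{q_{1,n}q_{0,n}(q_n-1)}\sum_k(\hat\theta_{n,k}-\overline{\hat\theta}_n)^2=\Op(q_n^{-1})$, where $\overline{\hat\theta}_n:=q_n^{-1}\sum_l\hat\theta_{n,l}$, since the ANOVA decomposition of $q_n^{-1}\sum_k(\hat\theta_{n,k}-\overline{\hat\theta}_n)^2$ gives $q_n^{-1}\sum_k(\hat\theta_{n,k}-\overline{\hat\theta}_n)^2\pto\lambda(1-\lambda)\beta^2$ (the within-group contributions being $\Op(\rn^{-2})$). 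By Markov's inequality the empirical distribution of $\{\mean{T}_n(\pi\hat\theta_n):\pi\in\Pi\}$ concentrates at $0$, so $\mean{c}_{n,\alpha}\pto0$ and $\prob(T_n>\mean{c}_{n,\alpha})\to1$. The main obstacle is claim~(b): because the joint law of $(\hat\theta_{n,1},\dots,\hat\theta_{n,q_n})$ is not invariant under permutations, the permutation distribution does not in general reproduce the sampling distribution of $T_n$---the studentization in \eqref{eq:permstat} is exactly what corrects this in Theorem~\ref{t:testconsistency}, and once it is dropped the mismatch between $\tau_{\mathrm{perm}}^2$ and $\tau_{\mathrm{true}}^2$ can be removed only by the symmetry $q_{1,n}=q_{0,n}$. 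Establishing~(b) therefore requires the conditional combinatorial central limit theorem together with the variance identification $\tau_{\mathrm{perm}}^2=\tau_{\mathrm{true}}^2$ above; the remaining steps---uniform negligibility of the remainder and passing from pointwise convergence of the placebo empirical distribution to convergence of its $1-\alpha$ quantile---are routine and already appear in the proof of Theorem~\ref{t:testconsistency}.
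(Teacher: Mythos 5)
Your proposal is correct and follows essentially the same route as the paper: under the null you match the unconditional normal limit of $\rn\sqrt{q_n}\,T_n$ with the conditional permutation limit of the unadjusted placebo statistics, the two asymptotic variances coinciding precisely because $q_{1,n}/q_n\to 1/2$, and under the alternative you show the unadjusted permutation quantile collapses to zero while $T_n\pto\beta$. The only cosmetic difference is that the paper routes the conditional CLT through its Theorem~\ref{t:arrayclt2} (verifying that the permuted weights converge to a Rademacher variable) while standardizing both statistics by $\sigma_n$, whereas you compute $\tau^2_{\mathrm{perm}}$ and $\tau^2_{\mathrm{true}}$ explicitly and invoke a combinatorial central limit theorem directly---your identity $(A-B)(1-2\lambda)=0$ is exactly the variance matching the paper's argument leaves implicit.
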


I now briefly discuss the situation where the number of treated clusters $q_{1,n} = q_1$ and untreated clusters $q_{0,n} = q_0$ is fixed. As the following results shows, the result in Theorem~\ref{t:testconsistency} under the alternative remains unchanged as long as the significance level is not too small. (See the Remarks below Algorithm~\ref{al:placebo} for a discussion why the power is zero if $\alpha < |\Pi|^{-1}$.)

\begin{corollary}\label{c:testpowerfin}
Suppose there are constants  $\theta_1$ and $\theta_0$ such that $\rn(\hat{\theta}_{n,k} - \theta_{1\{k\leq q_1\}}) = Z_{n,k} + R_{n,k}, 1\leq k\leq q,$ where {\upshape (i)}~$\ev Z_{n,k} = 0$ for each $k$, {\upshape (ii)}~$\ev |Z_{n,k}|^p$ is uniformly bounded in $n$ and $k$ for some $p > 2$, {\upshape (iii)}~$Z_{n,1},\dots, \smash{Z_{n,q}}$ are independent, {\upshape (iv)}~$\var Z_{n,k}$ is bounded away from zero, $1\leq k\leq q$, and {\upshape (vi)} $\sum_{k=1}^{q} |R_{n, k}| \pto 0.$ If $\beta > 0$, then $\prob(T_n > c_{n,\alpha})\to 1\{ \alpha \geq |\Pi|^{-1} \}$.
\end{corollary}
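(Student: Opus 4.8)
Because the number of clusters $q=q_1+q_0$ is fixed here, the argument needs only finite-dimensional continuous-mapping manipulations rather than the permutation-CLT machinery behind Theorem~\ref{t:testconsistency}. The plan is to show that, under $\beta>0$, $T_n$ is with probability tending to one the strict maximum of the $|\Pi|$ placebo statistics $\{T_n(\pi\hat\theta_n):\pi\in\Pi\}$, and then to read off where the empirical quantile $c_{n,\alpha}$ sits relative to that maximum. First, from the assumed decomposition $\hat\theta_{n,k}-\theta_{1\{k\leq q_1\}}=(Z_{n,k}+R_{n,k})/\rn$, condition~(ii) gives $Z_{n,k}=\Op(1)$ by Markov's inequality, condition~(vi) gives $R_{n,k}=\op(1)$, and $\rn\to\infty$; with only finitely many clusters this yields $\hat\theta_n\pto\theta^\ast:=(\theta_1,\dots,\theta_1,\theta_0,\dots,\theta_0)$, with $q_1$ coordinates equal to $\theta_1$ and $q_0$ equal to $\theta_0$. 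Since $\mean{T}_n$ and $\hat S_n$ are continuous, $T_n=\mean{T}_n(\hat\theta_n)\pto\mean{T}_n(\theta^\ast)=\theta_1-\theta_0=\beta>0$ and, for every $\pi\in\Pi$, $\mean{T}_n(\pi\hat\theta_n)\pto\mean{T}_n(\pi\theta^\ast)$ and $\hat S_n(\pi\hat\theta_n)\pto\hat S_n(\pi\theta^\ast)$.

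Next I would analyse the placebo statistics $T_n(\pi\hat\theta_n)=\mean{T}_n(\pi\hat\theta_n)\,\hat S_n(\hat\theta_n)/\hat S_n(\pi\hat\theta_n)$ for $\pi\neq\mathrm{id}$, where $\mathrm{id}$ is the identity permutation, the unique element of $\Pi$ with $\{\pi(1),\dots,\pi(q_1)\}=\{1,\dots,q_1\}$, so that $T_n(\mathrm{id}\,\hat\theta_n)=T_n\pto\beta$. Because $\hat S_n^2$ is a positive semidefinite quadratic form with $\theta^\ast$ in its null space, $\hat S_n^2(\hat\theta_n)=\hat S_n^2(\hat\theta_n-\theta^\ast)=\rn^{-2}\hat S_n^2\bigl(\rn(\hat\theta_n-\theta^\ast)\bigr)=\Op(\rn^{-2})$, hence $\hat S_n(\hat\theta_n)=\Op(\rn^{-1})$. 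A short combinatorial count shows that, apart from $\mathrm{id}$, the only element of $\Pi$ that can satisfy $\hat S_n^2(\pi\theta^\ast)=0$ is the permutation interchanging the two blocks, and that this permutation belongs to $\Pi$ only when $q_1=q_0$. For every other $\pi$, $\hat S_n^2(\pi\theta^\ast)>0$, so $\hat S_n(\pi\hat\theta_n)$ is bounded away from zero in probability, the ratio $\hat S_n(\hat\theta_n)/\hat S_n(\pi\hat\theta_n)$ is $\Op(\rn^{-1})$, and, $\mean{T}_n(\pi\hat\theta_n)$ being bounded in probability, $T_n(\pi\hat\theta_n)\pto0<\beta$. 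For the block-interchanging permutation with $q_1=q_0$, $\hat S_n^2$ is invariant under swapping the two equal-sized blocks (its two summands simply exchange roles), so the adjustment factor is identically one and $T_n(\pi\hat\theta_n)=\mean{T}_n(\pi\hat\theta_n)=-T_n\pto-\beta<\beta$. Thus $T_n(\pi\hat\theta_n)\pto\ell_\pi$ with $\ell_{\mathrm{id}}=\beta$ and $\ell_\pi<\beta$ for all $\pi\neq\mathrm{id}$; as $\Pi$ is finite, $\min_{\pi\neq\mathrm{id}}\bigl(T_n-T_n(\pi\hat\theta_n)\bigr)\pto\min_{\pi\neq\mathrm{id}}(\beta-\ell_\pi)>0$, so with probability tending to one $T_n$ is the strict maximum of $\{T_n(\pi\hat\theta_n):\pi\in\Pi\}$.

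Finally, $c_{n,\alpha}$ is, by definition of the $1-\alpha$ empirical quantile, the $\lceil|\Pi|(1-\alpha)\rceil$-th value of the placebo statistics listed in increasing order, a rank that does not depend on the data. If $\alpha<|\Pi|^{-1}$, then (as explained in Remark~(vi) below Algorithm~\ref{al:placebo}) $\lceil|\Pi|(1-\alpha)\rceil=|\Pi|$, so $c_{n,\alpha}=\max_{\pi\in\Pi}T_n(\pi\hat\theta_n)\geq T_n$ because $T_n$ is itself one of the placebo statistics; hence $\prob(T_n>c_{n,\alpha})=0$ for every $n$. If $\alpha\geq|\Pi|^{-1}$, then $\lceil|\Pi|(1-\alpha)\rceil\leq|\Pi|-1$, so at least two placebo statistics are weakly larger than $c_{n,\alpha}$; on the event that $T_n$ is the strict maximum of the placebo statistics---which has probability tending to one---at least one of those two is strictly below $T_n$, forcing $c_{n,\alpha}<T_n$, so that $\prob(T_n>c_{n,\alpha})\to1$. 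Combining the two cases gives $\prob(T_n>c_{n,\alpha})\to1\{\alpha\geq|\Pi|^{-1}\}$. I expect the only real obstacle to be the $0/0$ behaviour of the ratio $\hat S_n(\hat\theta_n)/\hat S_n(\pi\hat\theta_n)$ at $\pi=\mathrm{id}$ and at the block-interchanging permutation, where numerator and denominator both vanish at rate $\rn^{-1}$ and the limit cannot be deduced from rates alone; this is precisely what forces the null-space observation and the exact-invariance argument of the previous paragraph, and it is the only place where the structure of $\hat S_n$, rather than of $\mean{T}_n$ alone, enters the argument.
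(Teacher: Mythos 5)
Your proof is correct and follows essentially the same route as the paper's, which establishes this corollary in the final paragraph of the proof of Theorem~\ref{t:testconsistency}: use that $q$ is fixed to get $\hat{\theta}_{n,k}\pto\theta_0+\beta 1\{k\leq q_1\}$ by the continuous mapping theorem, deduce $T_n\pto\beta$ while every other placebo statistic converges to a limit strictly below $\beta$, and then read off the position of $c_{n,\alpha}$ among the order statistics. One point where you are actually more careful than the paper: the paper asserts that $\hat{S}_n^2(\pi\hat{\theta}_n)$ has a strictly positive limit for every $\pi\neq\pi_0$, which fails for the block-interchanging permutation when $q_1=q_0$ (there $p_1(\pi)=0$ and $p_0(\pi)=1$, so the limit is zero); your explicit treatment of that permutation via the exact invariance $\hat{S}_n^2(\pi\hat{\theta}_n)=\hat{S}_n^2(\hat{\theta}_n)$, giving $T_n(\pi\hat{\theta}_n)=-T_n\pto-\beta$, closes this gap cleanly.
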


\noindent However, with fixed $q_{1,n} = q_1$ and $q_{0,n} = q_0$, Assumption \ref{as:decom} does not seem to put enough conditions on the data to analyze the placebo test under the null hypothesis. To construct proper conditions, assume instead the joint convergence
\begin{equation}\label{eq:jointconv}
\rn(\hat{\theta}_{n,1} - \theta_{1},\dots, \hat{\theta}_{n,q_1} - \theta_{1},\hat{\theta}_{n,q_1+1} - \theta_{0},\dots, \hat{\theta}_{n,q} - \theta_{0}) \leadsto (Z_1,\dots,Z_{q}),
\end{equation}
\phantomsection\label{rev:behrensfishdisc}%
where the $Z_1,\dots,Z_{q}$ are independent. In the canonical situation where $Z = (Z_1,\dots,Z_{q})$ is mean-zero independent normal with standard deviations $\sigma = (\sigma_1, \dots, \sigma_q)$, extensive numerical computations suggest that $\sup_{\sigma\geq 0}\prob_\sigma(T_n(Z) > c_{n,\alpha}(Z))$ slightly exceeds $\alpha$. For example, for $q_1 = q_0 = 5$, I was unable to produce a rejection frequency above $.077$ for a test with $\alpha = .05$. This worst-case rejection frequency was achieved by combinations of near-infinite and zero variances. Less extreme values of $\sigma$ produced tests that were, for all practical purposes, correct. Heavily imbalanced $q_1$ and $q_0$ lead to worse performance but, as can be expected from Theorem \ref{t:testconsistency}, the worst-case rejection frequencies were lower with larger $q_1$ and $q_0$. For $q_1 = q_0 = 7$, they did not exceed $.065$ for tests with a $.05$ nominal level.

If the clusters are comparable enough such that the $(Z_1,\dots,Z_{q})$ in \eqref{eq:jointconv} are iid with a smooth but otherwise arbitrary distribution, then validity of the placebo test for fixed $q$ can be restored in the sense that the placebo test is asymptotically conservative. Under the alternative, the same conditions are enough to guarantee that the test has power. The results here are given for the unadjusted critical values $\mean{c}_{n,\alpha}$. This is because the underlying asymptotics no longer rely on a normal approximation, which makes variances insufficient to describe the behavior of $(\hat{\theta}_{n,1}, \dots, \hat{\theta}_{n,q})$. Standardization is therefore no longer useful. 

\begin{theorem}\label{t:testconsistencyfin}
Suppose that \eqref{eq:jointconv} holds and the $Z_k$ are iid with continuous distribution. For all $\alpha\in(0,1)$,
\begin{enumerate}[\upshape (i)]
\item if $\beta = 0$ , then $\limsup_{n\to\infty}\prob(T_n > \mean{c}_{n,\alpha}) \leq \alpha$ and
\item if $\beta > 0$, then $\prob(T_n > \mean{c}_{n,\alpha})\to 1\{ \alpha \geq |\Pi|^{-1} \}$.
\end{enumerate}
\end{theorem}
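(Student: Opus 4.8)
The plan is to exploit that, with $q = q_1 + q_0$ fixed, the comparison-of-means map $\mean T_n$, the index set $\Pi$, and the map sending a data vector to its critical value are all independent of $n$. Identify $\Pi$ with the family $\mathcal A$ of $q_1$-element subsets of $\{1,\dots,q\}$, write $A_0 = \{1,\dots,q_1\}$ for the ``true'' subset, and set $\mean T_A(x) = q_1^{-1}\sum_{k\in A}x_k - q_0^{-1}\sum_{k\notin A}x_k$, so that $\{\mean T_A(\hat\theta_n):A\in\mathcal A\}$ is exactly the family of unadjusted placebo statistics and $\mean T_{A_0}(\hat\theta_n) = T_n$. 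By the argument behind Remark~(ii) after Algorithm~\ref{al:placebo} (applied to the unadjusted statistics), the event $\{T_n > \mean c_{n,\alpha}\}$ equals $\{\mean p_n\le\alpha\}$, where $\mean p_n = |\Pi|^{-1}\#\{A\in\mathcal A : \mean T_A(\hat\theta_n)\ge\mean T_{A_0}(\hat\theta_n)\}$; the count is always at least one (take $A = A_0$), so this event is empty whenever $\alpha < |\Pi|^{-1}$. I will also repeatedly use the identity $\mean T_A(x) - \mean T_{A'}(x) = (q_1^{-1}+q_0^{-1})\bigl(\sum_{k\in A}x_k - \sum_{k\in A'}x_k\bigr)$, which shows both that $A\mapsto\mean T_A(x)$ is maximized by the subset collecting the $q_1$ largest coordinates of $x$, and that $\mean T_A(x) = \mean T_{A'}(x)$ with $A\ne A'$ forces $x$ onto the hyperplane $\{\sum_{k\in A}x_k = \sum_{k\in A'}x_k\}$, whose normal is supported on the nonempty set $A\triangle A'$.

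For part~(ii), \eqref{eq:jointconv} together with $\rn\to\infty$ gives $\hat\theta_{n,k}\pto\theta_1$ for $k\le q_1$ and $\hat\theta_{n,k}\pto\theta_0$ for $k > q_1$; since $\theta_1 > \theta_0$, the event $E_n = \{\min_{k\le q_1}\hat\theta_{n,k} > \max_{k > q_1}\hat\theta_{n,k}\}$ satisfies $\prob(E_n)\to 1$. On $E_n$ every $A\ne A_0$ trades at least one treated index for an untreated index of strictly smaller value, so $\sum_{k\in A}\hat\theta_{n,k} < \sum_{k\in A_0}\hat\theta_{n,k}$ and hence $T_n = \mean T_{A_0}(\hat\theta_n)$ is the strict maximum of the placebo statistics; the count defining $\mean p_n$ then equals one, so $T_n > \mean c_{n,\alpha}$ holds iff $1\le|\Pi|\alpha$. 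Combining this with $\prob(E_n)\to 1$ yields $\prob(T_n > \mean c_{n,\alpha})\to 1\{\alpha\ge|\Pi|^{-1}\}$.

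For part~(i) we have $\theta_1 = \theta_0 =: \theta$, so $W_n := \rn(\hat\theta_n - \theta\mathbf 1)\leadsto Z = (Z_1,\dots,Z_q)$ with the $Z_k$ iid and continuous. Each $\mean T_A$ is invariant to adding a multiple of $\mathbf 1$ and homogeneous of degree one, so $\mean T_A(W_n) - \mean T_{A'}(W_n) = \rn\bigl(\mean T_A(\hat\theta_n) - \mean T_{A'}(\hat\theta_n)\bigr)$ and the signs of all pairwise differences, hence the value of $\mean p_n$, are unchanged; consequently $\prob(T_n > \mean c_{n,\alpha}) = \ev\psi(W_n)$ for the fixed bounded function $\psi(w) = 1\{\#\{A\in\mathcal A : \mean T_A(w)\ge\mean T_{A_0}(w)\}\le|\Pi|\alpha\}$. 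On the open set $U = \{w : \mean T_A(w)\ne\mean T_{A'}(w)\text{ for all }A\ne A'\}$ the ordering of the placebo values, and hence $\psi$, is locally constant, so the discontinuity set of $\psi$ lies in $U^c$; but $U^c$ is a finite union of the hyperplanes above, each of which is null for the law of $Z$ since, conditioning on all coordinates but one, a non-atomic independent coordinate cannot equal a fixed constant. Thus $\psi$ is continuous at $Z$-almost every point, and the continuous mapping theorem gives $\ev\psi(W_n)\to\ev\psi(Z)$.

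To finish part~(i) one bounds $\ev\psi(Z)$ by a finite-permutation-group symmetry argument. Let $\sigma$ be uniform on the symmetric group $S_q$ and independent of $Z$; exchangeability of the iid vector $Z$ gives $\sigma Z\stackrel{\mathrm d}{=}Z$, while $\mean T_A(\sigma Z) = \mean T_{\sigma(A)}(Z)$ shows that the placebo family $\{\mean T_A(\sigma Z):A\in\mathcal A\}$ equals $\{\mean T_B(Z):B\in\mathcal A\}$ as a multiset and that the value at $A_0$ becomes $\mean T_{\sigma(A_0)}(Z)$ with $\sigma(A_0)$ uniform on $\mathcal A$ and independent of $Z$. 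On $\{Z\in U\}$ the $|\Pi|$ numbers $\mean T_B(Z)$ are distinct, so the rank of $\mean T_{\sigma(A_0)}(Z)$ among them is uniform on $\{1,\dots,|\Pi|\}$; hence $\#\{A : \mean T_A(\sigma Z)\ge\mean T_{A_0}(\sigma Z)\}$ is uniform on $\{1,\dots,|\Pi|\}$ and $\ev\psi(\sigma Z) = \lfloor|\Pi|\alpha\rfloor/|\Pi|\le\alpha$. Since $\psi$ is Borel measurable and $\sigma Z\stackrel{\mathrm d}{=}Z$, $\ev\psi(Z) = \ev\psi(\sigma Z)\le\alpha$, and therefore $\limsup_{n\to\infty}\prob(T_n > \mean c_{n,\alpha})\le\alpha$. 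The delicate points are the continuous-mapping step — one must check the tie set is $Z$-null, which uses independence and not just continuity of the marginals to make each linear functional $\mean T_A(Z)-\mean T_{A'}(Z)$ non-atomic — and the fact that, because $\alpha$ may be an exact multiple of $|\Pi|^{-1}$, this cannot be sidestepped by a small perturbation of $\alpha$, so the symmetry argument must be run directly on the weak limit rather than evaluating $\prob(\psi(Z)=1)$ by hand.
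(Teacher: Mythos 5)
Your proposal is correct, and for part (i) it takes a genuinely different route from the paper. The paper bounds the nonrandomized test by the randomized test $1\{\varphi_{n,\alpha}(T_n)\geq U\}$ and then invokes Theorem~3.1 of \citet{canayetal2014}, whose only nontrivial hypothesis (their Assumption~3.1(iii)) is verified by exactly the tie-null argument you give: for $\pi\neq\pi'$ the difference of the two placebo statistics is a nonzero linear combination of the independent continuous $Z_k$, hence has no atom at zero. You instead prove the limit transfer and the level bound from scratch: the extended continuous mapping theorem applied to the a.s.\ continuous indicator $\psi$ of the $p$-value event, followed by the exchangeability/rank argument showing $\ev\psi(Z)=\lfloor|\Pi|\alpha\rfloor/|\Pi|\leq\alpha$. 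This is essentially the content of the cited result unpacked, so nothing is gained in generality, but your version is self-contained, makes explicit that the asymptotic level is exactly $\lfloor|\Pi|\alpha\rfloor/|\Pi|$ rather than merely at most $\alpha$, and avoids the detour through the randomized test. Your reformulation of the rejection event as $\{\mean{p}_n\leq\alpha\}$ via the $p$-value identity in Remark~(ii) is legitimate (the appendix proof of that identity is generic in the family of placebo statistics and applies verbatim to the unadjusted ones), and it cleanly handles the $\alpha<|\Pi|^{-1}$ degeneracy. For part (ii) your argument is equivalent to the paper's---both reduce to showing that, with probability tending to one, the identity assignment is the unique strict maximizer of the placebo family, so the count in the $p$-value equals one---though your route through the event $E_n=\{\min_{k\leq q_1}\hat{\theta}_{n,k}>\max_{k>q_1}\hat{\theta}_{n,k}\}$ is somewhat more streamlined than the paper's quantile-function manipulations. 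I found no gaps.
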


\begin{remark}
(i)~The proof of part (i) of the theorem relies heavily on results of \citet{canayetal2014}, who deal with parameters that are identified within clusters. Their results focus on symmetric data. To the best of my knowledge, the results in part (ii) of the theorem and Corollary \ref{c:testpowerfin} are new.

(ii)~See, e.g., Lemma 1 of \citet{besteretal2014} for a result that is appropriate for establishing the distributional convergence \eqref{eq:jointconv}. Results of this type usually require additional assumptions on how the space evolves as the number of data points increases and on the location of the data on that space. As the examples in the next section show, this difficulty can be avoided when working with Assumption \ref{as:decom}.

(iii)~If $q_1 = q_0$, then Corollary \ref{c:testconsistencyeq} and Theorem \ref{t:testconsistencyfin} together imply that the placebo test is ``doubly robust'' in the sense that it is consistent if the conditions of either Corollary \ref{c:testconsistencyeq} or Theorem \ref{t:testconsistencyfin} are met. 

\phantomsection\label{rev:optimalitydisc}%
(iv)~If $\alpha$ is a multiple of $1/q!$, the test $T_n(Z) > \mean{c}_{n,\alpha}(Z)$ with $Z$ as in Theorem \ref{t:testconsistencyfin} is uniformly most powerful similar against $Z_1,\dots, Z_{q_1} \sim N(\mu_1, \sigma^2)$ and $Z_{q_1+1},\dots, Z_{q} \sim N(\mu_0, \sigma^2)$ with $\mu_1 > \mu_0$, and $\sigma^2$ unknown. This ``optimality'' property follows from \citet{lehmannstein1949}. If $\alpha$ is not a multiple of $1/q!$, then randomization can restore similarity. The tests of \citet{ibragimovmueller2010} and \citet{canayetal2014} also have optimality properties in a one-sample setting for similarly restrictive classes of alternatives. \sqed
\end{remark}

\section{Examples}\label{s:examples}

This section shows how the central conditions of Assumption \ref{as:decom} can be established in a given application. Examples treated explicitly are least squares regression (Example \ref{ex:clusterreg3}) and binary choice (Example \ref{ex:binreg}) with cluster-level treatment, although the techniques described in this section apply more generally. 

For simplicity, I represent each cluster $k$ by a stationary random field \[\xi_{s,k} = \xi_k(\varepsilon_{s-h,k} \colon h\in\mathbb{Z}^d), \qquad s\in\mathbb{Z}^d, \] where fields $(\xi_{s,k})_{s\in\mathbb{Z}^d}$ and $(\xi_{s,l})_{s\in\mathbb{Z}^d}$ are independent whenever $k\neq l$. Here, $(\varepsilon_{s,k})_{s\in\mathbb{Z}^d}$ is a field of iid copies of a random variable $\varepsilon$ and $\xi_k$ is a measurable, possibly unknown function with values in $\mathbb{R}^{d_k}$ that transforms the input $(\varepsilon_{s,k})_{s\in\mathbb{Z}^d}$ into the output $\xi_{s,k}$. A very large class of commonly used time series ($d=1$) and spatial models are of this form; see \citet{shaowu2007} and \citet{machkouriaetal2013}. Many other dependence structures are possible \citep[see][and the references therein]{dedeckeretal2007}, but this type of random field leads to particularly simple conditions.

For the class of fields given in the preceding display, the dependence within each cluster can be measured by comparing $\xi_{s,k}$ to a slightly perturbed version of itself. Let $\varepsilon_{0,k}^*$ be an iid copy of $\varepsilon$ and let $\varepsilon_{s,k}^* =\varepsilon_{s,k}$ for $s\neq 0$, so that the only difference between $\xi_{s,k}$ and its coupled version $\xi_{s,k}^* = \xi_k(\varepsilon^*_{s-h,k} \colon h\in\mathbb{Z}^d)$ is the input at coordinate~$0$. \citet{machkouriaetal2013}\ call a random field \emph{$p$-stable} if $\sum_{s\in\mathbb{Z}^d}\Vert \xi_{s,k} - \xi_{s,k}^* \Vert_p < \infty$. Stability implies summability of the auto-covariance function of $(\xi_{s,k})_{s\in\mathbb{Z}^d}$ whenever $p\geq 2$ and is therefore a ``short-range'' dependence condition. If, e.g., the process is linear such that $\xi_{s,k} = \sum_{h\in\mathbb{Z}^d} \alpha_{h} \varepsilon_{s-h, k}$ and $\ev |\varepsilon|^p < \infty$, then $\Vert \xi_{s,k} - \xi_{s,k}^* \Vert_p$ = $|\alpha_s| \Vert \varepsilon_{0,k} - \varepsilon^*_{0,k} \Vert_p$ and $p$-stability is equivalent to $\sum_{h\in\mathbb{Z}^d} |\alpha_h| < \infty$. 

The following result is an immediate consequence of the moment bound developed by \citet{machkouriaetal2013}. It relates moments of weighted sums of vectors at arbitrary coordinates in the field to the stability property. It is useful in the standard case where the leading term $Z_{n,k}$ in Assumption \ref{as:decom} is linear. Recall that $\xi_{s,k}\in\mathbb{R}^{d_k}$.
\begin{proposition}[\citealp{machkouriaetal2013}]\phantomsection 
\label{p:momentbound}
Let $S$ be a finite subset of $\mathbb{Z}^d$ and let $(a_s)_{s\in S}$ be real numbers. If $\ev|\xi_{0,k}|^p<\infty$ for some $p\geq 2$, then \[\biggl\Vert\sum_{s\in S} a_s (\xi_{s,k} - \ev \xi_{s,k})\biggr\Vert_p \leq \sqrt[\uproot{2}2p]{d_k}\biggl(2p \sum_{s\in S} a_j^2\biggr)^{1/2} \sum_{s\in\mathbb{Z}^d}\Vert \xi_{s,k} - \xi_{s,k}^* \Vert_p.\] 
\end{proposition}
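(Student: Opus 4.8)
\emph{Proof proposal.} The statement is a specialization of the moment bound for partial sums of stationary random fields in \citet{machkouriaetal2013}, so the plan is to invoke that bound after checking that the present hypotheses match its requirements and recording two routine extensions (arbitrary real weights, and $\mathbb{R}^{d_k}$-valued rather than scalar fields). All the hypotheses their result needs hold under the standing setup: $(\xi_{s,k})_{s\in\mathbb{Z}^d}$ is stationary and, by construction, a measurable function of the iid innovation field $(\varepsilon_{s,k})_{s\in\mathbb{Z}^d}$; $\ev|\xi_{0,k}|^p<\infty$ is assumed with $p\geq2$; and $S$ is finite, so the left-hand side is an honest finite sum. Moreover, the quantity that controls their bound is exactly the physical dependence measure $\|\xi_{s,k}-\xi_{s,k}^*\|_p$ appearing in the $p$-stability condition, which is why $p$-stability makes the right-hand side finite.

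To see where the factors come from, I would reproduce the skeleton of the \citet{machkouriaetal2013} argument for the scalar weighted case. Fix a translation-invariant total order $\preceq$ on $\mathbb{Z}^d$ (e.g., the lexicographic order), put $\mathcal{F}_{j,k}=\sigma(\varepsilon_{h,k}\colon h\preceq j)$, and let $P_{j,k}Y=\ev(Y\mid\mathcal{F}_{j,k})-\ev(Y\mid\mathcal{F}_{j^-,k})$ be the associated martingale-difference projection. Since $\xi_{s,k}$ is a function of $(\varepsilon_{h,k})_h$, one has $\xi_{s,k}-\ev\xi_{s,k}=\sum_{j\in\mathbb{Z}^d}P_{j,k}\xi_{s,k}$ in $L_p$, and conditional Jensen applied to the identity $P_{0,k}Y=\ev(Y-Y^*\mid\mathcal{F}_{0,k})$ (with $Y^*$ the copy of $Y$ obtained by replacing $\varepsilon_{0,k}$ by an independent draw), together with stationarity, yields the Wu-type bound $\|P_{j,k}\xi_{s,k}\|_p=\|P_{0,k}\xi_{s-j,k}\|_p\leq\|\xi_{s-j,k}-\xi_{s-j,k}^*\|_p$. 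Hence $\sum_{s\in S}a_s(\xi_{s,k}-\ev\xi_{s,k})=\sum_{j\in\mathbb{Z}^d}M_{j,k}$ with $M_{j,k}=\sum_{s\in S}a_sP_{j,k}\xi_{s,k}$, and $(M_{j,k})_j$ is a martingale-difference sequence along $\preceq$ because each $P_{j,k}$ is linear --- this is the only place at which it matters that the $a_s$ can be arbitrary reals, and linearity disposes of it. A Burkholder-type moment inequality for martingale differences, followed by Minkowski's inequality in $L_{p/2}$, gives $\|\sum_jM_{j,k}\|_p\leq(2p)^{1/2}\bigl(\sum_j\|M_{j,k}\|_p^2\bigr)^{1/2}$; the triangle inequality gives $\|M_{j,k}\|_p\leq\sum_{s\in S}|a_s|\,\|\xi_{s-j,k}-\xi_{s-j,k}^*\|_p$; and Young's convolution inequality $\|f*g\|_{\ell_2}\leq\|f\|_{\ell_2}\|g\|_{\ell_1}$, applied with $f=(|a_s|)_{s\in S}$ (extended by zero) and $g$ the sequence $t\mapsto\|\xi_{t,k}-\xi_{t,k}^*\|_p$, bounds $\sum_j\|M_{j,k}\|_p^2$ by $\bigl(\sum_{s\in S}a_s^2\bigr)\bigl(\sum_{t\in\mathbb{Z}^d}\|\xi_{t,k}-\xi_{t,k}^*\|_p\bigr)^2$. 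Combining the three steps gives the claimed bound with $d_k=1$. For $\mathbb{R}^{d_k}$-valued $\xi_{s,k}$ one runs the same argument coordinatewise and recombines the $d_k$ scalar bounds with elementary norm comparisons on $\mathbb{R}^{d_k}$, which accounts for the dimension factor $\sqrt[2p]{d_k}$.

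I do not expect a genuine obstacle: all the analytic content --- the martingale-difference decomposition of a field whose dependence structure need not be known, and the passage through the martingale moment inequality --- is already contained in \citet{machkouriaetal2013}. The two points that warrant a moment's care are (i) that general real weights, including sign changes, do not disturb the martingale-difference structure, which follows from linearity of $P_{j,k}$, and (ii) keeping track of the dimension dependence through the coordinatewise reduction, which is routine. I would therefore present the proof as a short verification that the present setup meets the conditions of \citet{machkouriaetal2013}, together with the Young-inequality computation that turns their unweighted bound into the weighted form stated here.
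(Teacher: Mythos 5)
The paper offers no proof of this proposition at all: it is stated as an attributed result, described in the text as ``an immediate consequence of the moment bound developed by'' El Machkouri, Voln\'y, and Wu (2013), and the appendix contains no argument for it. Your proposal therefore goes beyond what the paper does by reconstructing the underlying proof, and your reconstruction of the scalar weighted case is essentially the genuine argument from that reference: the martingale-difference decomposition $\xi_{s,k}-\ev\xi_{s,k}=\sum_j P_{j,k}\xi_{s,k}$ along a lexicographic order, the physical-dependence bound $\Vert P_{j,k}\xi_{s,k}\Vert_p\leq\Vert\xi_{s-j,k}-\xi_{s-j,k}^*\Vert_p$, a Burkholder-type inequality with constant $(2p)^{1/2}$, and Young's convolution inequality to convert the unweighted bound into one with $\bigl(\sum_{s}a_s^2\bigr)^{1/2}$. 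That correctly accounts for every factor in the display except one.

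The step you wave through is the dimension factor. Running the scalar bound coordinatewise and recombining via $\Vert\,|\Sigma|\,\Vert_p^2=\Vert\sum_l\Sigma_l^2\Vert_{p/2}\leq\sum_l\Vert\Sigma_l\Vert_p^2$ together with $\Vert\xi^{(l)}_{t,k}-\xi^{*(l)}_{t,k}\Vert_p\leq\Vert\xi_{t,k}-\xi_{t,k}^*\Vert_p$ yields a factor $\sqrt{d_k}$ (a slightly more careful version gives $d_k^{1/2-1/p}$), not the stated $d_k^{1/(2p)}$; for $p>3$ these are strictly larger, so ``elementary norm comparisons'' do not obviously deliver the constant as written, and you would need to either exhibit the sharper recombination or settle for a weaker dimension constant. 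This is immaterial to how the proposition is used downstream --- in Examples \ref{ex:clusterreg3} and \ref{ex:binreg} the bound is only invoked up to constants depending on $k$ and $p$, so any fixed polynomial in $d_k$ suffices --- but if you present the argument as a proof of the proposition \emph{as stated}, the derivation of $\sqrt[\uproot{2}2p]{d_k}$ is the one piece you have not actually supplied.
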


In the following, if $\xi_{s,k}$ is $p$-stable and has a subvector $V_{s,k}$, then I denote the corresponding subvector of the perturbed version $\xi_{s,k}^*$ by $V_{s,k}^*$. Clearly, if $\xi_{s,k}$ is $p$-stable, then $V_{s,k}$ must be $p$-stable as well. I repeatedly use this result and the proposition above in the next two examples.

\begin{example}[Regression with cluster-level treatment, continued]\phantomsection 
\label{ex:clusterreg3} Write the regression model \eqref{eq:regression} from Examples  \ref{ex:clusterreg} and \ref{ex:clusterreg2} as 
\begin{equation*}
Y_{i,k}  = \theta_{1\{k \leq q_{1,n}\}} + \eta_k' X_{i,k} + U_{i,k}.
\end{equation*} 
Let $\tilde{X}_{i,k} = (1, X_{i,k}')'$. As before, view each cluster as a separate regression and take $\smash{\hat{\theta}_{n,k}}$ to be the first entry of the least squares estimate \[\Biggl(\sum_{i=1}^{m_{n,k}} \tilde{X}_{i,k} \tilde{X}_{i,k}'\Biggr)^{-1}\sum_{i=1}^{m_{n,k}} \tilde{X}_{i,k}Y_{i,k}\] of $(\theta_{1\{k \leq q_{1,n}\}}, \eta_k')'$. If $\Sigma_{n,k} = m_{n,k}^{-1} \sum_{i=1}^{m_{n,k}} \tilde{X}_{i,k} \tilde{X}_{i,k}'$ converges in probability to a positive definite limit $\Sigma_k$, then $\sqrt{n/q_n}(\hat{\theta}_{n,k} - \theta_{1\{k \leq q_{1,n}\}})$ can be represented as
\begin{align*} 
e_1'\Sigma_k^{-1} \Bigl(\frac{m_{n,k}q_n }{n}\Bigr)^{1/2} m_{n,k}^{-1/2} \sum_{i=1}^{m_{n,k}} \tilde{X}_{i,k} U_{i,k} +e_1'(\Sigma_{n,k}^{-1} -\Sigma_k^{-1}) \Bigl(\frac{m_{n,k}q_n}{n}\Bigr)^{1/2} m_{n,k}^{-1/2} \sum_{i=1}^{m_{n,k}} \tilde{X}_{i,k} U_{i,k}
\end{align*}
where $e_1$ is a conformable vector with first element equal to $1$ and $0$ otherwise. Denote the first term by $Z_{n,k}$ and the second term by $R_{n,k}$. This is the decomposition required for Assumption \ref{as:decom}. Suppose $m_{n,k}q_n/n$ converges to a nonzero constant for every $k$ and $\sup_{n,k} m_{n,k}q_n/n < \infty$ so that all clusters provide similarly good estimates; this holds, for example, if the cluster sizes are constant multiples of one another.

I now discuss conditions (i)-(vi) of Assumption \ref{as:decom} in the present context. Condition (ii) is satisfied because $U_k$ is assumed to have conditional mean zero in Example \ref{ex:clusterreg}. For (iii), suppose that for each $k$, the $(X_{i,k}', U_{i,k})'$ come from a $2p$-stable random field in the sense that each $i$ is associated with a possibly unknown coordinate $s$ in the field. Then $\tilde{X}_{i,k}U_{i,k}$ is also part of a stationary random field and an application of the Cauchy-Schwarz inequality after adding and subtracting yields
\[\Vert \tilde{X}_{i,k}U_{i,k} - \tilde{X}_{i,k}^*U_{i,k}^* \Vert_p \leq \Vert U^*_{i,k} \Vert_{2p}\Vert \tilde{X}_{i,k} - \tilde{X}_{i,k}^* \Vert_{2p} + \Vert \tilde{X}_{1,k} \Vert_{2p}\Vert U_{i,k} - U_{i,k}^*\Vert_{2p},\]
where $\Vert U^*_{i,k} \Vert_{2p} = \Vert U_{i,k} \Vert_{2p} = \Vert U_{1,k} \Vert_{2p}$ because $U^*_{i,k}$ and $U_{i,k}$ are identically distributed and $U_{i,k}$ is stationary in $i$. Conclude that if $(X_{i,k}', U_{i,k})'$ is $2p$-stable, then $\tilde{X}_{i,k}U_{i,k}$ is $p$-stable. Proposition \ref{p:momentbound} now implies that there is a constant $C_{p,k}$ depending on only $p$ and $k$ with
\begin{equation}\label{eq:zbound}
\Vert Z_{n,k} \Vert_{p} \lesssim |e_1'\Sigma_k^{-1}|\biggl\Vert m_{n,k}^{-1/2} \sum_{i=1}^{m_{n,k}} \tilde{X}_{i,k} U_{i,k} \biggr\Vert_p \leq C_{p,k}.
\end{equation}
If the entire set of clusters consists of $2p$-stable random fields and the eigenvalues of $\Sigma_k$ are uniformly bounded away from zero, then $\sup_k C_{p,k} < \infty$ and condition (iii) is satisfied. Condition (iv) follows because the random fields are independent and (v) has to be assumed. 

\phantomsection\label{rev:remainderrates}%
For (vi), let $\Sigma_k = \ev \tilde{X}_{1,k}\tilde{X}_{1,k}'$ and apply first the Chebyshev inequality and then Proposition~\ref{p:momentbound} to bound $\prob(|\Sigma_{n,k} - \Sigma_k| > \epsilon)$ by $\epsilon^{-2}$ times 
\begin{equation}\label{eq:mmatrixbound}
 \Vert \Sigma_{n,k} - \Sigma_k \Vert_2^2 = \biggl\Vert m_{n,k}^{-1} \sum_{i=1}^{m_{n,k}} (\tilde{X}_{i,k} - \ev \tilde{X}_{i,k}) \biggr\Vert_4^4 \lesssim m_{n,k}^{-2}.
\end{equation}
To transform this into a statement about inverses, write $\Sigma_{n,k}^{-1} -\Sigma_k^{-1} = \Sigma_{n,k}^{-1} (\Sigma_k - \Sigma_{n,k})\Sigma_k^{-1}$ to obtain the bound $|\Sigma_{n,k}^{-1} -\Sigma_k^{-1}|\leq |\Sigma_{n,k}^{-1}||\Sigma_{k}^{-1}||\Sigma_{n,k} - \Sigma_k|$. Because the eigenvalues of $\Sigma_{k}^{-1}$ are uniformly bounded away from zero, there is a constant $M$ such that $\sup_k |\Sigma_{k}^{-1}| \leq M$. Define the events $S_{n,k} = \{ |\Sigma_{n,k}^{-1}|\leq M \}$ and $S_n = \cap_k S_{n,k}$. Apply the Markov inequality, then the H\"older inequality with exponents $p/(p-1)$ and $p$, and finally \eqref{eq:zbound} to see that $\prob ( \sum_{k=1}^{q_n} |R_{n,k}| > \epsilon \sqrt{q_n}, S_n)$ is bounded above by a constant multiple of \[  q_n^{-1/2}\sum_{k=1}^{q_n} \ev |\Sigma_{n,k} - \Sigma_k| \biggl\vert m_{n,k}^{-1/2} \sum_{i=1}^{m_{n,k}} \tilde{X}_{i,k} U_{i,k} \biggr\vert \lesssim q_n^{-1/2}\sum_{k=1}^{q_n} \Vert \Sigma_{n,k} - \Sigma_k\Vert_{p/(p-1)}. \] Because $\Vert \Sigma_{n,k} - \Sigma_k\Vert_{p/(p-1)}\leq \Vert \Sigma_{n,k} - \Sigma_k\Vert_2\lesssim m_{n,k}^{-1}$, the right-hand side of the display is $O(\sqrt{q_n}/\inf_k m_{n,k})$. Now consider $\prob(S_n^c)\leq \sum_{k=1}^{q_n} \prob(|\Sigma_{n,k}^{-1}| >  M)$. Denote by $\lambda_{\min}(A)$ and $\lambda_{\max}(B)$ the smallest and largest eigenvalues of symmetric matrices $A$ and $B$. Let $r = \sup_k \sqrt{\rank \Sigma_k}$. Then $\prob(|\Sigma_{n,k}^{-1}| >  M) \leq \prob(\lambda_{\min}(\Sigma_{n,k}) < r/M)$ because the matrix Euclidean norm is also the 2-Schatten norm. By the Weyl inequality $|\lambda_{\min}(A) - \lambda_{\min}(B)|\leq \lambda^{1/2}_{\max}((A-B)'(A-B))\leq | A-B |$, this is bounded by \[ \prob\bigl(\lambda_{\min}(\Sigma_{k}) < r/M + | \Sigma_{n,k} - \Sigma_k |\bigr) \leq 1\{ \lambda_{\min}(\Sigma_{k}) < r/M + \delta \} + \prob\bigl(| \Sigma_{n,k} - \Sigma_k | > \delta \bigr) \] for $\delta > 0$.
Choose $M$ large and $\delta$ small enough so that $\inf_k\lambda_{\min}(\Sigma_{k}) \geq r/M + \delta$. Now use the Chebyshev inequality and \eqref{eq:mmatrixbound} to conclude \[ \prob(S_n^c) \leq \sum_{k=1}^{q_n} \prob\bigl(| \Sigma_{n,k} - \Sigma_k | > \delta\bigr) \lesssim \frac{q_n}{\inf_k m_{n,k}^2}. \]
If follows that $\sum_{k=1}^{q_n} |R_{n, k}| = \op(\sqrt{q_n})$ as long as $\sqrt{q_n}/\inf_k m_{n,k}\to 0$, which allows for a wide range of sequences $q_n\to\infty$.
\sqed
\end{example}

\begin{example}[Binary choice with cluster-level treatment]\label{ex:binreg} Suppose the model in the preceding example is the latent model in a binary choice framework where $U_{i,k}$ is independent of treatment assignment and $X_{i,k}$, $U_{i,k}$ has a known, symmetric distribution function $F$, and $\eta_k \equiv \eta_0$. The treatment effect of interest is $F(\theta_1 + \eta_0' x) - F(\theta_0 + \eta_0' x)$ for some $x$ so that $H_0\colon \theta_1 = \theta_0$ is the appropriate hypothesis to test. Assume that, for each $k$, the $(Y_{i,k}, X_{i,k}')'$ come from a stationary random field. Only $1\{Y_{i,k} > 0\}$, $X_{i,k}$, and treatment assignment are observed. Let $\psi_{\theta, \eta}(y, x) = (1, x')'(1\{ y > 0 \} - F(\theta + \eta'x) )$ and suppose the moment condition $\ev \psi_{\theta_{1\{k \leq q_{1,n}\}}, \eta_0}(Y_{1,k}, X_{1,k}) = 0$ holds for every $k$. The corresponding $Z$-estimates $(\hat{\theta}_{n,k}, \hat{\eta}_{n,k}')'$ for the $k$-th cluster are zeros of
\[ (\theta, \eta')'\mapsto \Psi_{n, k}(\theta, \eta) = m_{n,k}^{-1}\sum_{i=1}^{m_{n,k}} \psi_{\theta, \eta}(Y_{i,k}, X_{i,k}). \] 
Denote the derivative of $\Psi_{n, k}$ with respect to $(\theta, \eta')$ by $\dot{\Psi}_{n,k}$ and let $\dot{\Psi}_k = \ev \dot{\Psi}_{n,k}$. If $\dot{\Psi}_k(\theta_{1\{k \leq q_{1,n}\}}, \eta_0)$ is non-singular, then standard arguments for $Z$-estimators \citep[see, e.g.,][]{vandervaart1998} suggest that one can take $Z_{n,k}$ as the first term on the right in
\[ \sqrt{n/q_n}(\hat{\theta}_{n,k} - \theta_{1\{k \leq q_{1,n}\}}) = e_1' \dot{\Psi}_k(\theta_{1\{k \leq q_{1,n}\}}, \eta_0)^{-1} \sqrt{n/q_n} \Psi_{n, k}(\theta_{1\{k \leq q_{1,n}\}}, \eta_0) + R_{n,k}, \]
where $R_{n,k}$ is defined as the difference between the left-hand side and $Z_{n,k}$.

Let $\tilde{X}_{i,k} = (1, X_{i,k}')'$ and assume that $F$ is Lipschitz. Using arguments as in the preceding example, apply the Cauchy-Schwarz inequality after repeated adding and subtracting to bound $\Vert \psi_{\theta_j, \eta}(Y_{i,k}, X_{i,k}) - \psi_{\theta_j, \eta}(Y_{i,k}^*, X_{i,k}^*) \Vert_p$ by a constant multiple of
\[ \Vert \tilde{X}_{i,k} - \tilde{X}_{i,k}^* \Vert_{p} + \Vert\tilde{X}_{i,k} - \tilde{X}_{i,k}^* \Vert_{2p} + \Vert 1\{Y_{i,k} \leq 0\} -  1\{Y_{i,k}^* \leq 0\} \Vert_{2p}.\]
Because $|1\{a<0\} - 1\{b < 0\}|\leq 1\{|a| < |a-b|\}$ for $a,b\in\mathbb{R}$, the last term in the display is at most $\Vert 1\{|Y_{i,k}| \leq |Y_{i,k}-Y_{i,k}^*|\} \Vert_{2p} \leq\Vert 1\{|Y_{i,k}| \leq \epsilon \}\Vert_{2p} + \Vert 1\{|Y_{i,k}-Y_{i,k}^*| > \epsilon \} \Vert_{2p}$ for every $\epsilon > 0$. Use Lipschitz continuity of $F$ and the Chebyshev inequality to bound this by a constant multiple of $\epsilon^{1/(2p)} + \epsilon^{-1}\Vert Y_{i,k}-Y_{i,k}^* \Vert_{2p}$. Choose $\epsilon = \Vert Y_{i,k}-Y_{i,k}^* \Vert_{2p}^{2p/(2p+1)}$. Then the preceding display does not exceed a constant multiple of 
\[ \Vert \tilde{X}_{i,k} - \tilde{X}_{i,k}^* \Vert_{p} + \Vert\tilde{X}_{i,k} - \tilde{X}_{i,k}^* \Vert_{2p} + \Vert Y_{i,k}-Y_{i,k}^* \Vert_{2p}^{1/(2p+1)}.\] Hence, $\psi_{\theta_j, \eta}(Y_{i,k}, X_{i,k})$ is $p$-stable as long as $X_{i,k}$ is $2p$-stable and $Y_{i,k}$ satisfies the $2p$-stability condition with $\Vert\cdot\Vert_{2p}$ strengthened to $\Vert\cdot\Vert_{2p}^{1/(2p+1)}$. (The last condition can be weakened by imposing a stronger stability condition on $X_{i,k}$.) Conclude from Proposition \ref{p:momentbound} that Assumption \ref{as:decom}(iii) is satisfied. Condition (vi) can be established with similar arguments and Proposition \ref{p:uniformremainder}. The remaining conditions follow as in Example \ref{ex:clusterreg3}.\sqed
\end{example}

\section{Numerical results}\label{s:montecarlo}
This section presents several Monte Carlo experiments to illustrate the small-sample properties of the placebo test in comparison to other methods of inference. I discuss linear regression with cluster-level treatment (Example \ref{ex:clusterreg4}), probit regression with cluster-level treatment (Example \ref{ex:binreg2}), and difference-in-differences estimation using placebo interventions in a sample from the Current Population Survey (Example \ref{ex:diffindiff3}). Finally, I use the placebo test to analyze data from an experiment on infinitely repeated games (Example \ref{ex:dalbofrechette}).

\begin{example}[Regression with cluster-level treatment, continued]\label{ex:clusterreg4} 
The regression of interest is a simplified version of \eqref{eq:regression} with $\eta_k\equiv \eta$,
\begin{equation}\label{eq:ex1eq}
	Y_{i,k} = \theta_0 + \beta D_{k} + \eta' X_{i,k} + U_{i,k}.
\end{equation}
For the experiment, the errors $U_{i,k}$ have a ``circular'' dependence structure defined by
\begin{equation}\label{eq:circular}
U_{i,k} = \frac{1}{h+1}\sum_{j = i}^{i+h } \tilde{U}_{c(j),k}, \qquad 1\leq i\leq m_{n,k}\text{ and }c(j) = j~(\bmod{m_{n,k}}),
\end{equation}
where $h \in \{0, 1, \dots, m_{n,k}-1 \} $ and, for each $k$ and $i$, $\tilde{U}_{i,k}$ is independently drawn from $\mathrm{N}(0, 1)$ if $k\leq q_{1,n}$ and $\mathrm{N}(0, 2)$ if $k > q_{1,n}$. This ensures that the $U_{i,k}$ are non-identically distributed and independent across clusters but identically $h$-dependent within clusters. Each element of the $5$-vector of covariates $X_{i,k}$ is also $h$-dependent and is generated as in the preceding display but with independent draws from $\mathrm{N}(0, 1)$ if $k\leq q_{1,n}$ and $\chi^2_2-2$ if $k > q_{1,n}$. The default degree of dependence is $h=10$ and the cluster sizes $m_{n,k}$ are uniformly distributed on $\{ 15,\dots, 25 \}$. For all simulations below I set $\theta_0 = 0$ and $\eta = (1, 1, 1, 1, 1)'$.

In this experiment, I test the hypothesis $H_0\colon \beta = 0$ against the one-sided alternative $H_1\colon \beta > 0$ for different values of the treatment effect $\beta$, the number of treated and untreated clusters $q_{1,n}$ and $q_{0,n}$, and the degree of dependence $h$. I consider (i) placebo inference as in Algorithm \ref{al:placebo}, (ii) two-sample inference with the \citet{ibragimovmueller2016} $t$ test, (iii) randomized and non-randomized inference with the permutation $t$ test of \citet{canayetal2014}, (iv) cluster-robust wild bootstrap inference as described in \citet{cameronetal2008} and \citet{webb2013}, and (v) inference based on the cluster-robust regression $t$ statistic of \citet{besteretal2014}. Several other methods for cluster-robust inference exist; see, e.g., \citet{mackinnonwebb2017} for one such approach and an overview of the recent literature. They are omitted because these methods tend to be designed for very specific situations.

I now give brief descriptions of methods (i)-(v). For (i), I use the unadjusted critical values $\mean{c}_{n,\alpha}$ defined below Algorithm \ref{al:placebo} whenever $q_{1,n}=q_{0,n}$ and use Algorithm \ref{al:placebo} otherwise; the results with adjusted critical values at $q_{1,n}=q_{0,n}$ were very similar and are therefore not reported. 

For (ii), the \citetalias{ibragimovmueller2016} test compares the absolute value of \[ \frac{\mean{T}_n(\hat{\theta}_n)}{\hat{S}_n(\hat{\theta}_n)} \] to the $1-\alpha$ quantile of a $t$ distribution with $\min\{q_{1,n}, q_{0,n} \} - 1$ degrees of freedom. They show that this is a valid two-sided test for a range of fixed $q_{1,n}$ and $q_{0,n}$ if the $\sqrt{n}(\hat{\theta}_{n,k}-\theta_0)$ are asymptotically independent and normal, where the range is determined by $\alpha$. Simulations suggest that the statistic in the preceding display can be used for a one-sided test, which is what I do here for ease of comparison. 

For (iii), the \citetalias{canayetal2014} test only applies if $q_{1,n}=q_{0,n}$ and requires the researcher to match a treated cluster to a control cluster to identify $\beta$. Because both the group of treated clusters and the group of control clusters are relatively homogenous, I match treated and control clusters at random and estimate $\hat{\beta}_{n,k}$ by least squares in \eqref{eq:regression} for each of the $q_{1,n}$ matched pairs. Let $\smean{\beta}_n = q_{1,n}^{-1} \sum_{k=1}^{q_{1,n}}\hat{\beta}_{n,k}$. The \citetalias{canayetal2014} test compares \[ \frac{\smean{\beta}_n}{\sqrt{\sum_{k=1}^{q_{1,n}}(\hat{\beta}_{n,k}- \smean{\beta}_n)^2}} \] to the $1-\alpha$ quantile of the permutation distribution obtained by recomputing this statistic for all possible sign changes of the elements of $(\hat{\beta}_{n,1}, \dots, \hat{\beta}_{n,q_{1,n}})$. They show that their test is valid for fixed $q_{1,n}=q_{0,n}$ if the asymptotic distribution of $\sqrt{n}(\hat{\beta}_{n,1}, \dots, \hat{\beta}_{n,q_{1,n}})$ is asymptotically symmetric and some technical regularity conditions are satisfied. As mentioned in the Remarks below Algorithm \ref{al:placebo}, this test can be randomized. It should also be noted that the experimental design likey overstates the performance of the \citetalias{canayetal2014} test because there are $q_{1,n}!$ ways of matching pairs and hence there are equally many ways of computing the test statistic in the preceding display. In practice one may need to use a multiple testing adjustment to account for this indeterminacy, which can lead to significantly lower rejection rates. 

For (iv), I estimate $\beta$ via least squares in the pooled sample and standardize this estimate with the usual cluster-robust covariance matrix with degrees-of-freedom adjustment $(n-1)q_n/((n-d)(q_n-1))$, where $d$ is the number of controls in the pooled regression. I then compare this estimate to the bootstrap distribution of the $t$ statistic obtained from 199 repetitions of the cluster-robust version of the wild bootstrap using the \citet{webb2013} 6-point distribution and with the null hypothesis imposed on the data. This procedure is outlined in detail in \citet{cameronetal2008}. 

For (v), I compare the statistic from (iv) to the $1-\alpha$ quantile of $t$ distribution with $q_n-1$ degrees of freedom. \citet{besteretal2014} show that this test is valid for certain ranges of fixed $q_n$, where the range depends on $\alpha$, if the distribution of the covariates is very similar across clusters and other technical regularity conditions are satisfied.

\begin{figure}[t]
\centering
\includegraphics[width=.96\textwidth]{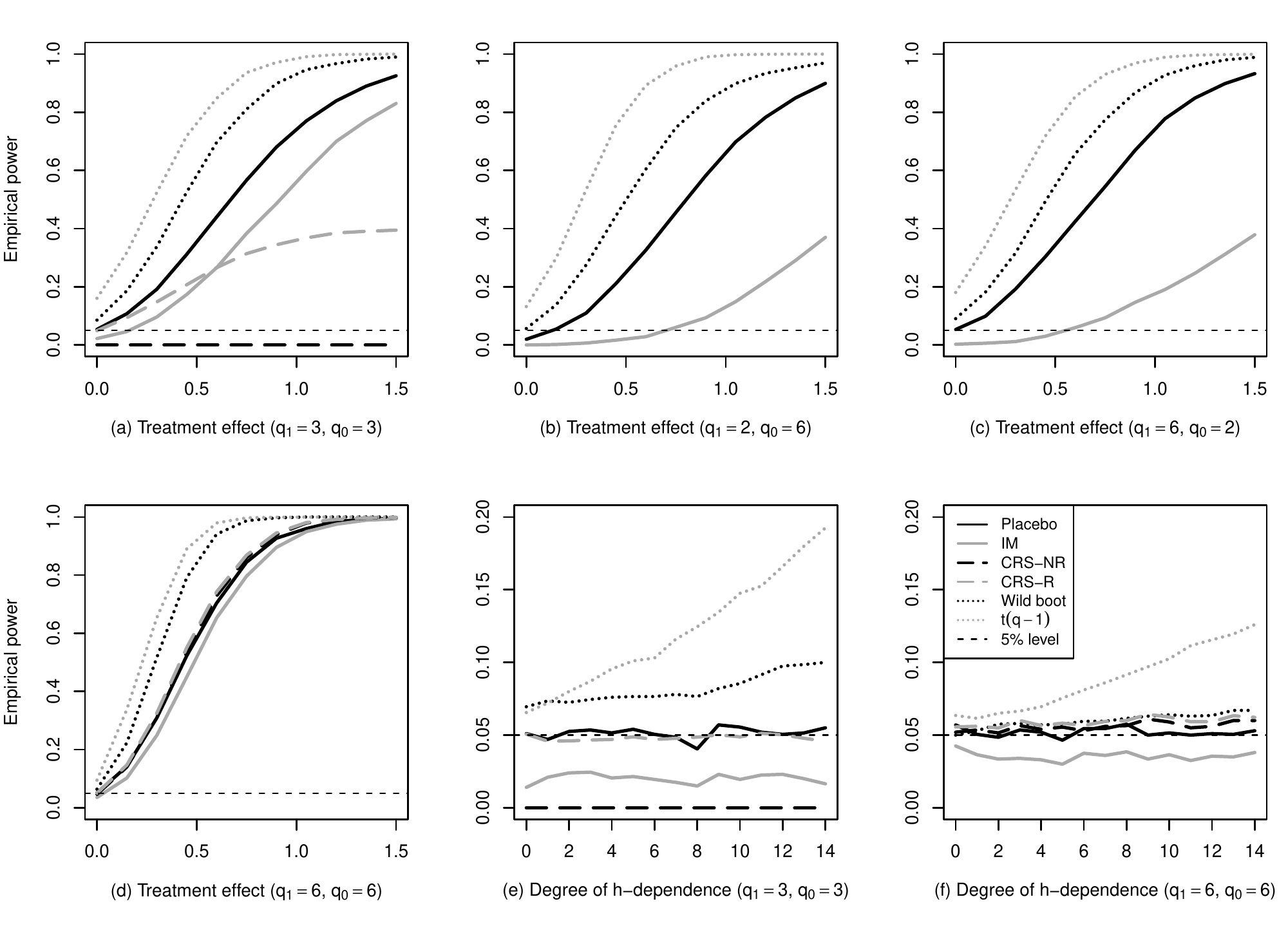}
\caption{Empirical rejection frequencies of the hypothesis $H_0\colon \beta = 0$ in Example \ref{ex:clusterreg4} using the placebo test (solid black lines), \citetalias{ibragimovmueller2016} test (solid grey), \citetalias{canayetal2014} non-randomized (long-dashed black), and randomized permutation test (long-dashed grey), wild bootstrap (dotted black), and $t(q_n-1)$ critical values (dotted grey) at the 5\% level (short-dashed) as a function of the size of the treatment effect for (a)~$q_{1,n} = q_{0,n} = 3$, (b) $q_{1,n} = 2$, $q_{0,n} = 6$, (c) $q_{1,n} = 6$, $q_{0,n} = 2$, (d) $q_{1,n} = q_{0,n} = 6$, and as a function of the degree of dependence $h$ under $H_0$ for (e) $q_{1,n} = q_{0,n} = 3$ and (f) $q_{1,n} = q_{0,n} = 6$. The \citetalias{canayetal2014} test does not apply to (b) and (c).} \label{fig:mc_ex11}
\end{figure}

Panels (a)-(d) of Figure \ref{fig:mc_ex11} show the empirical rejection frequencies of the hypothesis $H_0\colon \beta = 0$ for methods (i)-(v) at the 5\% level (short-dashed line) as a function of $\beta\in \{0, .15, .3,\dots, 1.5\}$ for (a) $q_{1,n} = q_{0, n} = 3$, (b) $q_{1,n} = 2$, $q_{0, n} = 6$, (c) $q_{1,n} = 6$, $q_{0, n} = 2$, and (d) $q_{1,n} = q_{0, n} = 6$. Panels (e)-(f) plot the  empirical rejection frequencies of the correct hypothesis $H_0\colon \beta = 0$ for the same methods as a function of $h\in\{0, 1,\dots, 14\}$ for (e) $q_{1,n} = q_{0, n} = 3$ and (f) $q_{1,n} = q_{0, n} = 6$. Each horizontal coordinate was computed from 2,000 simulations and all five methods were faced with the same data.

\phantomsection\label{rev:morenumbers}%
As can be seen in panel (a), the placebo test (solid black line) performed well with three treated and three untreated clusters. It rejected in 5.35\% of all cases under the null and its power increased quickly as a function of $\beta$. In comparison, the \citetalias{ibragimovmueller2016} test (solid grey) was quite conservative (2.2\%) and had lower power than the placebo test. As pointed out in the Remarks below Algorithm \ref{al:placebo}, the \citetalias{canayetal2014} non-randomized (long-dashed black) test has power equal to zero for all values of $\beta$ at the 5\% level if less than 5 treated clusters are available. The empirically less relevant randomized version of the permutation test (long-dashed grey) was nearly exact at $\beta = 0$ but had lower power than the placebo test. The wild bootstrap rejected a true null hypotheses in 8.55\% of all cases. The $t$ test over-rejected more severely with a rate of 16.05\%. Because of this size distortion, their rejection frequencies for nonzero values of $\beta$ should not be interpreted as estimates of their power. In (b), the placebo test behaved more conservatively (1.65\%) when $q_{1,n} = 2$ and $q_{0, n} = 6$ but far outperformed the \citetalias{ibragimovmueller2016} test for all values of $\beta$. The wild bootstrap clearly dominated all other methods of inference in this instance. However, as can be seen in (c), this appears to be an artifact of the experimental design. At $q_{1,n} = 6$ and $q_{0, n} = 2$, the wild bootstrap now over-rejected substantially with an empirical rejection frequency of 9.05\% under the null. The power of the \citetalias{ibragimovmueller2016} test increased slightly, whereas the placebo test was essentially at the nominal level (5.30\%) and had high power. The $t$ test was again less reliable than other tests both in (b) and (c). Once I increased the number of clusters to $q_{1,n} = q_{0, n} = 6$ in (d), the size distortion in the wild bootstrap disappeared and it provided a test with high power. The $t$ test rejected in 9.50\% of all cases. The placebo, \citetalias{canayetal2014}, and \citetalias{ibragimovmueller2016} tests gave similar results, with the \citetalias{ibragimovmueller2016} rejecting the fewest hypotheses. The consistent over-rejection of the wild bootstrap in this example can also be seen in (e), where the rejection frequency of the wild bootstrap increased with the degree of dependence. The over-rejection of the $t$ test became much more severe with $h$. In contrast, the placebo test and the randomized \citetalias{canayetal2014} test were essentially at the nominal level for all values of $h$. The \citetalias{ibragimovmueller2016} test was very conservative in all cases but this improved when I increased the number of clusters to $q_{1,n} = q_{0, n} = 6$ in (f). Here, the $t$ test over-rejected but was less affected by the degree of dependence. All other methods were nearly exact.

\phantomsection\label{rev:wildboothomo}%
I also experimented with the distribution of the errors and covariates (not shown). As can be expected from the Monte Carlo simulations of \citet{mackinnonwebb2014,mackinnonwebb2017}, the performance of the wild bootstrap improved considerably if the clusters were more homogenous. Similarly, comparing the cluster-robust least squares $t$ statistic to the $t(q_n-1)$ distribution worked much better once the homogeneity conditions of \citet{besteretal2014} were satisfied and the number of clusters was larger. The other methods where less affected by changes in the distribution of the errors and covariates.
\sqed
\end{example}

\begin{example}[Binary choice with cluster-level treatment, continued]\label{ex:binreg2}
This example uses \eqref{eq:ex1eq} as the latent model in a probit regression. The data are as in Example \ref{ex:clusterreg4} except that the $\tilde{U}_{i,k}$ variables used to generate the $U_{i,k}$ in \eqref{eq:circular} are now standard normal for all $k$. In addition, the cluster sizes are uniformly distributed on $\{350,\dots, 500 \}$ to facilitate computation. 
\begin{figure}[t]
\centering
\includegraphics[width=.65\textwidth]{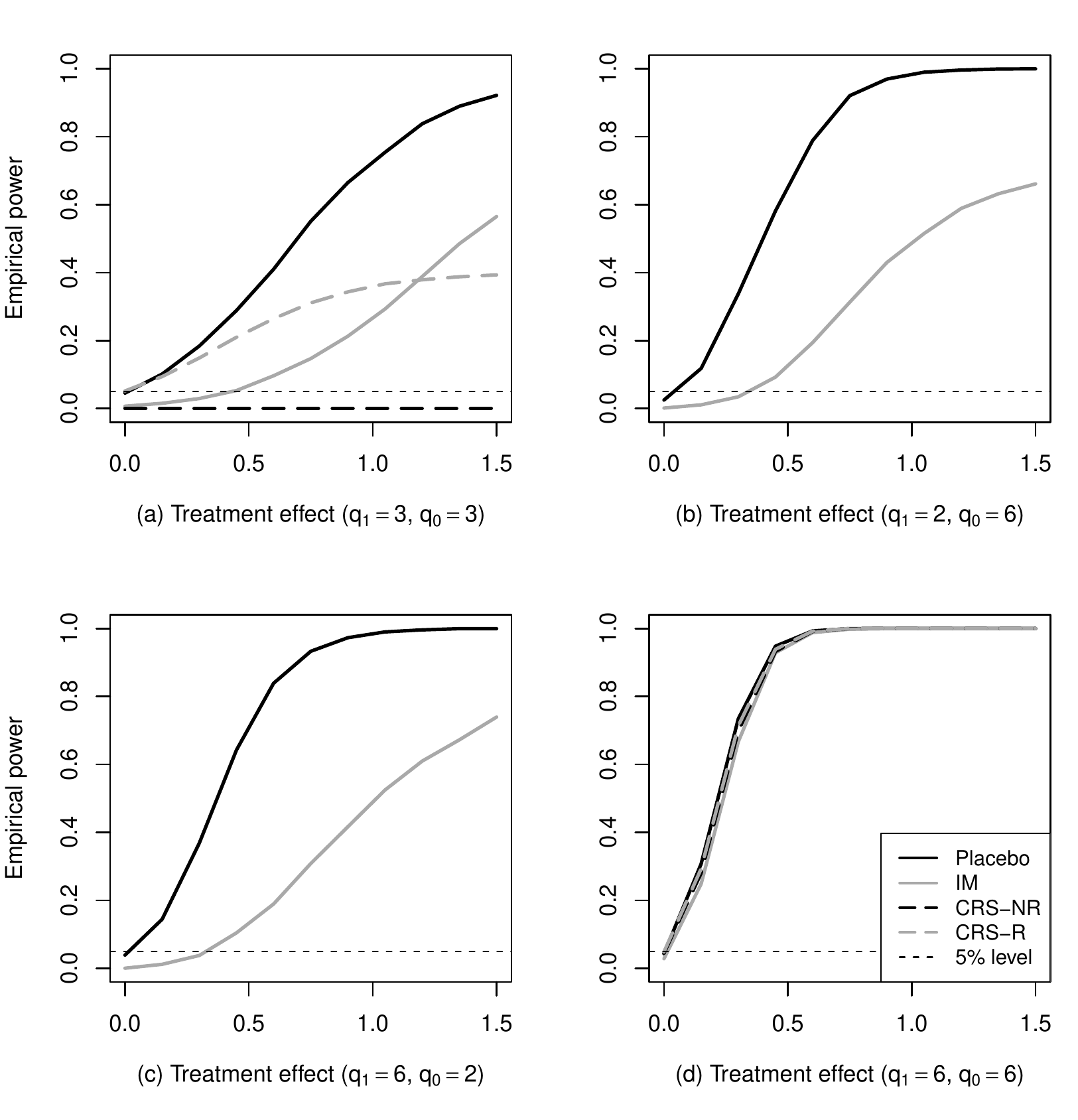}
\caption{Empirical rejection frequencies of the hypothesis $H_0\colon \beta = 0$ in Example \ref{ex:binreg2} with labels as in Figure \ref{fig:mc_ex11} as a function of the size of the treatment effect for (a)~$q_{1,n} = q_{0,n} = 3$, (b) $q_{1,n} = 2$, $q_{0,n} = 6$, (c) $q_{1,n} = 6$, $q_{0,n} = 2$, and (d) $q_{1,n} = q_{0,n} = 6$.} \label{fig:mc_ex21}
\end{figure}
Figure \ref{fig:mc_ex21} repeats the experiments in Figure \ref{fig:mc_ex11}(a)-(d) in the probit case for $\beta\in \{0, .05, .1,\dots, 1.5\}$ with the placebo test, the \citetalias{ibragimovmueller2016} test, and the \citetalias{canayetal2014} test. The wild bootstrap and the \citetalias{besteretal2014} $t$ test are primarily designed for linear models and are therefore not included here. The placebo test is again seen to provide a test near nominal level at (a) $q_{1,n} = q_{0, n} = 3$, (b) $q_{1,n} = 2$, $q_{0, n} = 6$, and (c) $q_{1,n} = 6$, $q_{0, n} = 2$, whereas the \citetalias{ibragimovmueller2016} test behaved considerably more conservatively. At $q_{1,n} = q_{0, n} = 6$, all methods controlled size and had high power. \sqed
\end{example}

\begin{example}[Difference in differences, continued]\label{ex:diffindiff3} For this experiment, I follow \citet{bertrandetal2004} and use data from the Merged Outgoing Rotation Group of the Current Population Survey (CPS) of women in their fourth interview month.  They consider women between the ages 25 and 50 with strictly positive earnings. I extract data on weekly earnings, age, education, and state of residence. The outcome of interest is log weekly earnings in a difference-in-differences regression as in \eqref{eq:diffindiff}. The controls in this regression are a quadratic polynomial in age, dummies for education (less than high school, high school, some college, and college or more), and state and year fixed effects. I focus on the years 1989-1991. This sample contains 86,006 observations with an average of approximately 573 women per state-year. For the experiment, I sampled $q_n$ states at random from the 50 states and each state over time is viewed as a single cluster. I then assigned treatment in 1990 to half of these clusters even though no treatment was received in the actual data at this time. 

\begin{figure}[t]
\centering
\includegraphics[width=.96\textwidth]{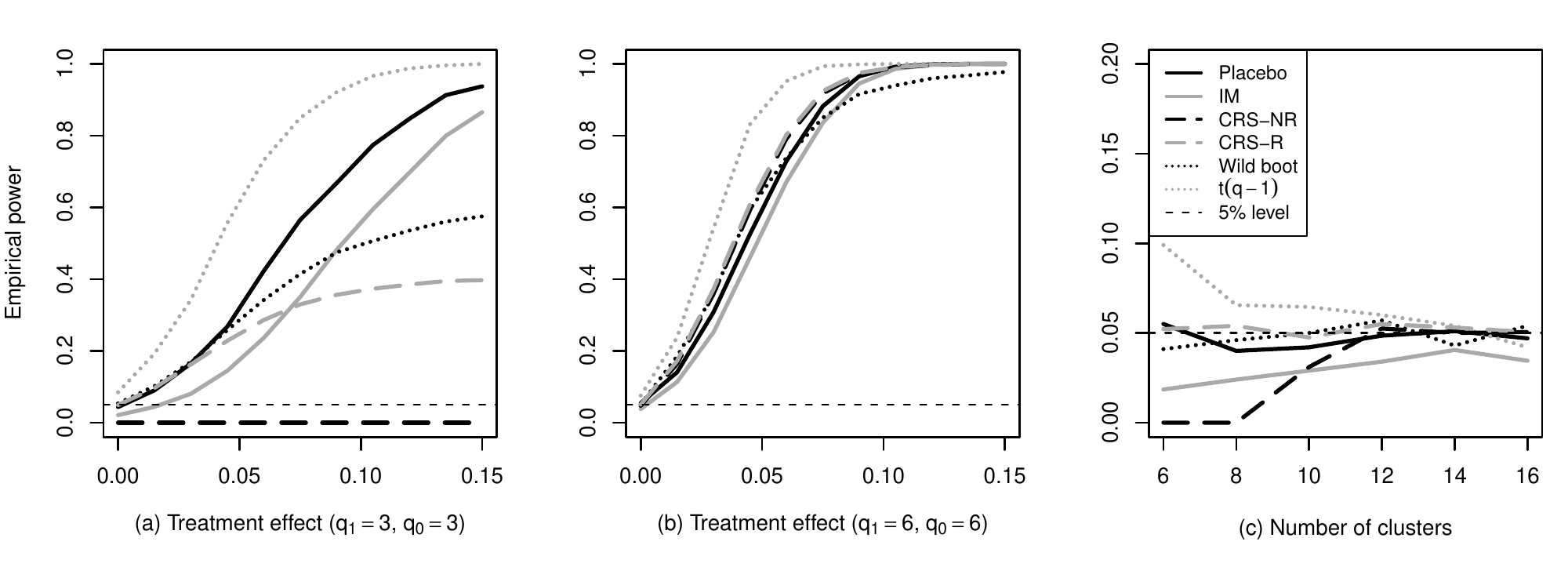}
\caption{Empirical rejection frequencies of the hypothesis $H_0\colon \beta = 0$ in Example \ref{ex:diffindiff3} with labels as in Figure \ref{fig:mc_ex11} as a function of the size of the treatment effect for (a) $q_{1,n} = q_{0, n} = 3$ and (b) $q_{1,n} = q_{0, n} = 6$, and (c) as a function of the number of clusters under $H_0$.} \label{fig:mc_ex31}
\end{figure}

Panels (a)-(b) of Figure \ref{fig:mc_ex31} show the empirical rejection frequencies of the hypothesis $H_0\colon \beta = 0$ for the methods of inference used in Example \ref{ex:clusterreg4} at the 5\% level (short-dashed line) as a function of $\beta\in \{0, .015, .03,\dots, 0.15\}$ for (a) $q_{1,n} = q_{0, n} = 3$ and (b) $q_{1,n} = q_{0, n} = 6$. I generated the data for these two panels by adding $\beta$ to the outcomes in treated states for years 1990 and 1991. Panel (c) plots the empirical rejection frequencies of the correct hypothesis $H_0\colon \beta = 0$ for the same methods as a function of $q_n\in \{6, 8, \dots, 16 \}$ with $q_{1,n} = q_{0, n}$. Each horizontal coordinate was computed from 2,000 simulations and all five methods were faced with the same data. I matched the clusters for the \citetalias{canayetal2014} test by size. As can be seen in panel (a), the CPS data was considerably less challenging than the simulated data from Example~\ref{ex:clusterreg}. The placebo test, the wild bootstrap, and the randomized \citetalias{canayetal2014} test provided an essentially exact test, although the placebo test had significantly higher power for most values of $\beta$. The \citetalias{ibragimovmueller2016} test was conservative but surpassed the wild bootstrap and the \citetalias{canayetal2014} test in terms of power as $\beta$ increased. The $t$ test again over-rejected. All methods improved when I increased the number of clusters to 12 in (b). Notable here is the \citetalias{besteretal2014} $t$ test, which marginally over-rejected but provided a test with very high power. As panel (c) shows, all methods behaved quite similarly as the number of clusters increased further. 

I also experimented with different years in this data set (not shown), but the results were similar. Furthermore, deviating from more standard tools such as the wild bootstrap and the \citetalias{besteretal2014} $t$ test did not seem necessary in the CPS data if a moderate and relatively balanced number of treatment and control clusters was available.  \sqed
\end{example}

\begin{example}[Infinitely repeated games; \citealp{dalbofrechette2011}] \label{ex:dalbofrechette} In this empirical application, I follow \citet{ibragimovmueller2016} and reanalyze a lab experiment of \citet{dalbofrechette2011}, who measure cooperation between players in an infinitely repeated prisoner's dilemma with a continuation probability of either $\delta = .5$ or $\delta = .75$. Cooperation resulted in a payoff of $R = 32$, $40$, or $48$. Each $(\delta, R)$ combination was played in a separate experiment and three sessions per $(\delta, R)$ combination were held. A total of 37,042 games were completed with an average of approximately 2,058 games per session. I view each session as a cluster. 

Cooperation can be supported as a subgame-perfect equilibrium action for all $(\delta, R)$ combinations other than $(.5, 32)$. (See \citeauthor{dalbofrechette2011}'s discussion for a more precise statement.) These cooperative equilibria are Pareto efficient. The rate of cooperation $\theta(\delta, R)$ should therefore increase if either $\delta$ increases to $\delta'$ or $R$ increases to $R'$. It is then natural to test hypotheses of the form $H_0\colon \beta = 0$ against $H_1\colon \beta > 0$ for $\beta = \theta(\delta', R) - \theta(\delta, R)$ or $\beta = \theta(\delta, R') - \theta(\delta, R)$, where each increase of $R$ by eight for fixed $\delta$ or each increase in $\delta$ from $.5$ to $.75$ for fixed $R$ is viewed as a treatment. This leads to seven separate null hypotheses. For each hypothesis, three treated sessions and three control sessions are available, which makes the data amenable to the placebo test.
	
\begin{table}[t!]
\caption{Empirical frequency of cooperation in \citet[Table~3, bottom-right panel]{dalbofrechette2011} with tests of equal vs.\ more (``$<$'' in rows, ``$\wedge$'' in columns) cooperation between experimental setups using (a) cluster-robust standard errors and (b) Algorithm~\ref{al:placebo}.}\label{tab:dalbofrechette}
\begin{tabular}{cp{0cm}cccccp{0cm}ccccc}%
\hline
& 
& \multicolumn{5}{c}{(a) \citet{dalbofrechette2011}}
& 
& \multicolumn{5}{c}{(b) Placebo test}
\\ 
\cline{3-7}\cline{9-13}
$\delta\backslash R$
&
& 32 & & 40 & & 48 
&
& 32 & & 40 & & 48  
\\ 
\hline
$.50$ & & .0982 & $<^{**{\phantom{*}}}$ 	& .1798 & $<^{***}$ 				& .3529 &  
		& .0982 & $<^{*\phantom{**}}$ 	& .1798 & $<^{**{\phantom{*}}}$ 	& .3529  \\ 
      & & $\wedge^{**{\phantom{*}}}$  & & $\wedge^{***}$ 			  & & $\wedge^{***}$ 
      & & $\wedge^{*\phantom{**}}$ 	  & & $\wedge^{**{\phantom{*}}}$  & & $\wedge^{**{\phantom{*}}}$ \\ 
$.75$ &	& .2025 & $<^{***}$ 			& .5871 & $<^{*{\phantom{**}}}$ & .7642  
	  & & .2025 & $<^{**{\phantom{*}}}$ 	& .5871 & $<^{\phantom{***}}$ 	& .7642  \\  
\hline
\multicolumn{13}{l}{{\footnotesize\emph{Note:} * indicates significance at 10\%, ** at 5\%, *** at 1\%}}
\end{tabular}%
\end{table}

The two panels of Table \ref{tab:dalbofrechette} reproduce the estimates of $\theta(\delta, R)$ from \citet[Table~3, bottom-right panel]{dalbofrechette2011} but differ in the way inference was performed. For panel (a), I follow \citeauthor{dalbofrechette2011} use probit regressions with a dummy for one of the two $(\delta, R)$ combinations under consideration and compute cluster-robust standard errors. \citeauthor{dalbofrechette2011} report results from two-sided tests. I compute one-sided tests, which makes two of their results more significant (as indicated by the number of asterisks). In either case, the finding is that higher continuation probabilities or payoffs lead to significantly higher rates of cooperation at conventional levels of significance. For panel (b), I apply the placebo test after computing probit regressions on a constant within each of the six sessions under consideration. As can be seen, in each case the difference in cooperation rates loses one level of significance but \citeauthor{dalbofrechette2011}'s results remain largely intact even when a method designed for a small number of large clusters such as the placebo test is used. This also confirms \citeauthor{ibragimovmueller2016}'s (\citeyear{ibragimovmueller2016}) finding that most of \citeauthor{dalbofrechette2011}'s results continue to hold---albeit with larger $p$ values---if their method is applied. \sqed
\end{example}

In summary, the placebo test performs well even in fairly extreme (but empirically relevant) situations where the number of clusters is very small, the within-cluster correlation is high, and the clusters are very heterogeneous. These findings hold both for simulated and real data sets in a variety of estimation problems. 

\section{Conclusion}\label{s:conc}
I introduce a general, Fisher-style randomization testing framework to conduct nearly exact inference about the lack of effect of a binary treatment in the presence of very few, large clusters when the treatment effect is identified across clusters. The proposed randomization test formalizes and extends the intuitive notion of generating null distributions by assigning placebo treatments to untreated clusters. I show that under simple and easily verifiable conditions, the placebo test leads to asymptotically valid inference in a very large class of empirically relevant models. A simulation study and an empirical example are provided. The proposed inference procedure is easy to implement and performs well with as few as three treated and three untreated clusters.

\appendix

\section{Technical results}
This section states several auxiliary results that are used in the proofs in the next section. I start with an ``in probability'' version of a well-known result for deterministic sequences.

\begin{lemma}\label{l:pseq}
For each $\varepsilon > 0$, let $X_{n,\varepsilon}$ be random variable with $X_{n,\varepsilon}\pto 0$. Then there is a deterministic sequence $\varepsilon_n \to 0$ such that $X_{n,\varepsilon_n}\pto 0$. 
\end{lemma}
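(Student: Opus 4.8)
The plan is to diagonalize over the countable subfamily $\{X_{n,1/m}\}_{m\ge 1}$. First I would invoke the hypothesis at $\varepsilon = 1/m$: since $X_{n,1/m}\pto 0$ as $n\to\infty$, there is an index $N_m$ with $\prob(|X_{n,1/m}| > 1/m) < 1/m$ for all $n\ge N_m$, and after replacing $N_m$ by $\max\{N_1,\dots,N_m\}+m$ I may assume the $N_m$ are strictly increasing. Then I would define $\varepsilon_n$ blockwise: set $\varepsilon_n = 1/m$ whenever $N_m\le n < N_{m+1}$ (and $\varepsilon_n = 1$ for $n < N_1$), and write $m(n)$ for the index of the block containing $n$, so that $m(n)\to\infty$ and hence $\varepsilon_n\to 0$.

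The remaining step is to check that this choice works, i.e.\ that $X_{n,\varepsilon_n}\pto 0$. Fixing $\delta > 0$ and picking $M$ with $1/M\le\delta$, one has $m(n)\ge M$ for every $n\ge N_M$, hence $\varepsilon_n = 1/m(n)\le\delta$, and since $n\ge N_{m(n)}$ by construction,
\[
\prob(|X_{n,\varepsilon_n}| > \delta) \;\le\; \prob\bigl(|X_{n,1/m(n)}| > 1/m(n)\bigr) \;<\; \tfrac{1}{m(n)},
\]
which tends to $0$ as $n\to\infty$ because $m(n)\to\infty$. As $\delta > 0$ was arbitrary, this is precisely $X_{n,\varepsilon_n}\pto 0$.

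I expect the only delicate point to be the bookkeeping: arranging the thresholds $N_m$ to be strictly increasing so that the blocks genuinely partition all large integers, and tracking the two coupled limits so that ``$n$ large'' simultaneously forces $m(n)$ large (giving $\varepsilon_n\le\delta$) and places $n$ past $N_{m(n)}$ (so the probability bound applies). Beyond this indexing there is no analytic content; the statement is just the ``in probability'' version of the elementary fact that a doubly-indexed family converging to $0$ in one argument for each fixed value of the other can be slaved to converge along a diagonal.
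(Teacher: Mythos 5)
Your proposal is correct and is essentially identical to the paper's own proof: both diagonalize over $\varepsilon = 1/m$, choose strictly increasing thresholds $N_m$ (the paper's $n_m$), define $\varepsilon_n$ blockwise, and conclude via $\prob(|X_{n,\varepsilon_n}| > \delta) \leq \prob(|X_{n,\varepsilon_n}| > \varepsilon_n) < \varepsilon_n \to 0$. No further comment is needed.
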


To formalize the concept of a permutation distribution, let $\pi$ be a random draw from a uniform distribution on $\Pi^*$ such that $\pi$ is independent of the cluster-level statistics $\hat{\theta}_n = (\hat{\theta}_{n,1}, \dots, \hat{\theta}_{n,q_n})$. Denote probability computed with respect to the randomness in $\pi$ by $\prob_\pi$. An object like $\prob_\pi(T_n(\pi\hat{\theta}_n)\leq t)$ is then the same as $\prob(T_n(\pi\hat{\theta}_n)\leq t \mid \hat{\theta}_n)$. The following result establishes that, after proper rescaling, the permutation distribution uniformly approximates the null distribution of $T_n$ in probability, conditional on the statistics $\hat{\theta}_n$. This is the standard measure of consistency for resampling distributions; see \citet[p.\ 329]{vandervaart1998}. Let \[ (t_1, \dots, t_{q_n})\mapsto S_n^2(t_1, \dots, t_{q_n}) = \frac{q_{1,n} q_{0,n}}{q_n} \hat{S}_n^2(t_1, \dots, t_{q_n}).\]
\begin{lemma}\label{l:weakconvergence}
If Assumption \ref{as:decom} holds with $\theta_1 = \theta_0$, then \[\sup_{t\in\mathbb{R}}\biggl|\prob\biggl(\sqrt{\frac{q_{1,n} q_{0,n}}{q_n}}\frac{T_n(\hat{\theta}_n)}{S_n(\hat{\theta}_n)}\leq t\biggr) - \prob_\pi\biggl(\sqrt{\frac{q_{1,n} q_{0,n}}{q_n}}\frac{T_n(\pi\hat{\theta}_n)}{S_n(\hat{\theta}_n)}\leq t\biggr)\biggr| \pto 0.\]
\end{lemma}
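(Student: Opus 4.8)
The goal is a conditional weak-convergence statement: the permutation distribution of the studentized statistic approximates its unconditional null distribution, uniformly, in probability. I would establish this by showing both distributions converge to the \emph{same} limit, namely the standard normal. For the unconditional side, I would work with the scaled statistic $\sqrt{q_{1,n}q_{0,n}/q_n}\,T_n(\hat\theta_n)/S_n(\hat\theta_n)$ and use Assumption \ref{as:decom} to replace $\hat\theta_{n,k}-\theta_0$ by $(Z_{n,k}+R_{n,k})/\rn$; since $\bar T_n$ is linear and invariant to common scaling, the $\rn$ cancels, and the remainder contributes $\bar T_n(R_n) = O(\sum_k|R_{n,k}|/q_n) = \op(1/\sqrt{q_n})$ by \eqref{eq:unegli}, which is negligible after multiplying by $\sqrt{q_{1,n}q_{0,n}/q_n} = O(\sqrt{q_n})$. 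A Lyapunov CLT for the triangular array $Z_{n,1},\dots,Z_{n,q_n}$ (independent by (iv), mean zero by (ii), with $p>2$ moments uniformly bounded by (iii) and variance average bounded away from zero by (v)) gives asymptotic normality of the numerator, and a law of large numbers for $\hat S_n^2(\hat\theta_n)$ (again using (iii)--(v)) pins down the Studentizing denominator, so the unconditional law converges to $N(0,1)$.

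For the conditional (permutation) side, the natural tool is a permutation CLT. Here I would appeal to the results of \citet{neuhaus1993} and \citet{janssen1997, janssen2005}: for a two-sample linear permutation statistic built from the array $\hat\theta_{n,1},\dots,\hat\theta_{n,q_n}$, the permutation distribution of the Studentized statistic converges (conditionally, in probability) to $N(0,1)$ provided a Lindeberg-type condition on the pooled sample holds — essentially that $\max_k (\hat\theta_{n,k}-\bar{\hat\theta}_n)^2 / \sum_l (\hat\theta_{n,l}-\bar{\hat\theta}_n)^2 \pto 0$ and the empirical second moment stabilizes. Under $\theta_1=\theta_0$ the centered, scaled cluster statistics $\rn(\hat\theta_{n,k}-\theta_0) = Z_{n,k}+R_{n,k}$ form (after discarding the uniformly small remainder) an independent array with uniformly bounded $p$-th moments, $p>2$, so $\max_k |Z_{n,k}|^2 = \op(q_n^{1/p\cdot 2}) = \op(q_n^{2/p})$ while $\sum_k Z_{n,k}^2$ is of exact order $q_n$ by (v); since $2/p<1$ the negligibility condition holds. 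The studentization by $\hat S_n(\pi\hat\theta_n)$ is exactly what makes the permutation limit $N(0,1)$ rather than a variance-dependent normal — this is the reason the adjustment factor in \eqref{eq:permstat} is present, and it is what \citet{janssen1997} shows is needed when the data are not exchangeable under the null.

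Combining the two pieces: both $\prob(\,\cdot\le t)$ and $\prob_\pi(\,\cdot\le t)$ converge to $\Phi(t)$, the former as a deterministic limit and the latter in probability for each fixed $t$; since $\Phi$ is continuous and the limit is the same, Pólya's theorem upgrades pointwise convergence to uniform convergence, and a subsequence argument turns the in-probability convergence of the conditional CDF into uniform in-probability convergence of the supremum. The triangle inequality then yields the stated bound.

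\medskip
\noindent\textbf{Main obstacle.} The delicate step is the conditional permutation CLT with Studentization. One has to verify that the \emph{random} triangular array $\{\hat\theta_{n,k}\}$ — rather than a deterministic one — satisfies the Neuhaus/Janssen negligibility condition with probability tending to one, and then that the conclusion holds conditionally in probability rather than almost surely. Controlling $\max_k |Z_{n,k}|$ uniformly requires care because the $Z_{n,k}$ need not be identically distributed and need not converge in distribution; only the uniform $L_p$ bound from Assumption \ref{as:decom}(iii) is available, so the bound $\prob(\max_k|Z_{n,k}| > \epsilon q_n^{1/2}) \le \sum_k \prob(|Z_{n,k}| > \epsilon q_n^{1/2}) \le q_n \cdot C \epsilon^{-p} q_n^{-p/2} \to 0$ (valid since $p>2$) is the right maneuver, but one must also handle the cross-terms from studentizing and confirm that the denominator $\hat S_n(\pi\hat\theta_n)$ is bounded away from zero uniformly over permutations with high probability. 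Threading the remainder terms $R_{n,k}$ through all of this without disturbing the permutation structure — they are not exchangeable, but their absolute sum is $\op(\sqrt{q_n})$, hence negligible in every permuted linear combination after scaling — is the bookkeeping that makes the argument work.
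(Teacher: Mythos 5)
Your proposal follows essentially the same route as the paper: a Lyapunov/Lindeberg--Feller CLT for the unconditional studentized statistic, a Neuhaus--Janssen-type permutation CLT for the conditional side with the negligibility condition verified exactly as in the paper (union bound plus the uniform $L_p$ bound with $p>2$, which is Lemma~\ref{l:arraymax}), remainders controlled through \eqref{eq:unegli} and the permutation-invariant bound, studentization handled by a conditional Slutsky argument with $S_n(\pi Z_n)$ bounded away from zero, and P\'olya's theorem plus a subsequence argument for uniformity. The only blemish is the intermediate claim $\max_k|Z_{n,k}|^2 = \op(q_n^{2/p})$, which should be $O_\prob(q_n^{2/p})$, but your subsequent explicit computation gives the correct $\op(q_n)$ bound, so nothing is lost.
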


Let $g_n$ be nonrandom, measurable functions. A statement of the form ``$g_n(Y_n, X_n) \leadsto Z$ in probability, conditional on $X_n$'' means that the conditional distribution function of $g_n(Y_n, X_n)$ given $X_n$ converges in probability (pointwise at continuity points) to the distribution function of $Z$. I frequently use the following results in conjunction with the conditional Slutsky lemma \citep[Theorem 3.2]{xiongli2008} and with independent permutations $Y_n  = \pi$ so that $\prob_\pi(g_n(\pi, X_n) \leq z) = \prob(g_n(\pi, X_n) \leq z\mid X_n)$. These results are known in the literature but included here for easy reference.
\begin{lemma}\label{l:conditionalequiv}
Let $g_n$ be nonrandom, measurable functions. The following are equivalent:
\begin{enumerate}[\upshape (i)] 
\item $g_n(Y_n, X_n)\leadsto 0$ in probability, conditional on $X_n$,
\item $\prob(|g_n(Y_n, X_n)| > \varepsilon\mid X_n)\pto 0$ for all $\varepsilon > 0$, and
\item $g_n(Y_n, X_n)\pto 0$.
\end{enumerate}
\end{lemma}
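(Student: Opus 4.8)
The plan is to prove the equivalences (ii)$\Leftrightarrow$(iii) and (i)$\Leftrightarrow$(ii) separately; both are conditional analogues of elementary facts about convergence to a constant, and I do not expect either to require a deep argument.

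For (ii)$\Leftrightarrow$(iii) I would use the tower property. Fix $\varepsilon>0$ and write $W_{n,\varepsilon}=\prob(|g_n(Y_n,X_n)|>\varepsilon\mid X_n)$, so that $W_{n,\varepsilon}$ takes values in $[0,1]$ and $\ev W_{n,\varepsilon}=\prob(|g_n(Y_n,X_n)|>\varepsilon)$. For a uniformly bounded nonnegative sequence, convergence in probability to zero is equivalent to convergence of the mean to zero: the forward direction is the bounded convergence theorem, the reverse is Markov's inequality. Hence $W_{n,\varepsilon}\pto 0$ for every $\varepsilon>0$ (statement (ii)) holds if and only if $\prob(|g_n(Y_n,X_n)|>\varepsilon)\to 0$ for every $\varepsilon>0$ (statement (iii)).

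For (i)$\Leftrightarrow$(ii) I would work with the conditional distribution function $F_n(z\mid X_n)=\prob(g_n(Y_n,X_n)\le z\mid X_n)$. The degenerate law at $0$ has distribution function $z\mapsto 1\{z\ge 0\}$, whose only discontinuity is at $0$, so by the definition of conditional convergence in distribution recalled above, (i) is the statement that $F_n(z\mid X_n)\pto 1\{z\ge 0\}$ for every $z\ne 0$. If (ii) holds, then $1-F_n(z\mid X_n)=\prob(g_n(Y_n,X_n)>z\mid X_n)\le W_{n,z}\pto 0$ for $z>0$, while $F_n(z\mid X_n)\le\prob(|g_n(Y_n,X_n)|\ge|z|\mid X_n)\le W_{n,|z|/2}\pto 0$ for $z<0$; this is (i). Conversely, if (i) holds, then for any $\varepsilon>0$ one has $W_{n,\varepsilon}\le\bigl(1-F_n(\varepsilon\mid X_n)\bigr)+F_n(-\varepsilon\mid X_n)$, and both terms on the right converge to $0$ in probability by (i) evaluated at the continuity points $z=\varepsilon$ and $z=-\varepsilon$; since finite sums of sequences converging to $0$ in probability again converge to $0$ in probability, $W_{n,\varepsilon}\pto 0$, which is (ii).

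I do not expect a genuine obstacle here; the only care required is the bookkeeping around the continuity points of the degenerate limit and the combination of the elementary set inclusions above, each of which produces an ``in probability'' statement, using that finite sums and differences preserve convergence to $0$ in probability. The same chain of implications, applied with $Y_n=\pi$ a uniform draw from $\Pi^*$ independent of $X_n=\hat\theta_n$, then delivers the conditional-on-$\hat\theta_n$ statements invoked elsewhere in the appendix.
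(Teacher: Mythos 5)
Your proposal is correct and uses essentially the same argument as the paper: the (ii)$\Leftrightarrow$(iii) equivalence via bounded/dominated convergence in one direction and Markov's inequality applied to the conditional probability in the other, and the (i)$\Leftrightarrow$(ii) equivalence by sandwiching the conditional tail probability between values of the conditional distribution function at the continuity points $\pm\varepsilon$ of the degenerate limit. The only difference is bookkeeping---the paper closes the cycle via (iii)$\Rightarrow$(i) rather than (ii)$\Rightarrow$(i)---which is immaterial.
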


The following result collects several useful consequences of Assumption \ref{as:decom} that will be used throughout the proofs in the next section. Let $1_{q_n}$ be a $q_n$-vector of ones and
\[\sigma_n^2 = \frac{q_{0,n}}{q_n q_{1,n}}\sum_{k=1}^{q_{1,n}}\ev Z_{n,k}^2 + \frac{q_{1,n}}{q_n q_{0,n}}\sum_{k=1+q_{1,n}}^{q_n} \ev Z_{n,k}.\]

\begin{lemma}\label{l:varapprox} Suppose Assumption \ref{as:decom} holds with $\theta_1 = \theta_0$. Then
	\begin{enumerate}[\upshape (i)]
	\item $S_n^2(\rn (\hat{\theta}_n-\theta_01_{q_n})) - \sigma_n^2 \pto 0$ and $S_n^2(\rn (\pi\hat{\theta}_n-\theta_01_{q_n})) - S_n^2(\pi Z_n) \pto 0$,
	\item there is a $\delta > 0$ such that $\prob(S_n(\pi Z_n) > \delta) \to 1$, and
	\item $S_n^2(\rn (\pi\hat{\theta}_n-\theta_01_{q_n}))/S_n^2(\pi Z_n)\pto 1$.
	\end{enumerate} 
\end{lemma}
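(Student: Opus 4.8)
The plan is to treat $x\mapsto S_n(x)=\sqrt{x'A_nx}$ as a seminorm on $\mathbb R^{q_n}$, where $A_n$ is the positive semidefinite matrix of the quadratic form $S_n^2$ (a nonnegative combination of the two within-group centering quadratic forms, legitimate once $q_{1,n},q_{0,n}\ge2$, hence eventually by Assumption~\ref{as:decom}(i)). The workhorse is then the elementary inequality, from Cauchy--Schwarz for the bilinear form associated with $A_n$, $|S_n^2(a+b)-S_n^2(a)|\le 2S_n(a)S_n(b)+S_n^2(b)$. Writing $Z_n=(Z_{n,1},\dots,Z_{n,q_n})$ and $R_n=(R_{n,1},\dots,R_{n,q_n})$, Assumption~\ref{as:decom} with $\theta_1=\theta_0$ gives $\rn(\hat\theta_n-\theta_01_{q_n})=Z_n+R_n$ coordinatewise, and since the $k$-th coordinate of $\rn(\pi\hat\theta_n-\theta_01_{q_n})$ is $\rn(\hat\theta_{n,\pi(k)}-\theta_0)=Z_{n,\pi(k)}+R_{n,\pi(k)}$, also $\rn(\pi\hat\theta_n-\theta_01_{q_n})=\pi Z_n+\pi R_n$. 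So all three parts reduce to controlling $S_n$ at $Z_n$, $\pi Z_n$, $R_n$, $\pi R_n$. First I would record two crude a priori bounds. Since a within-group sum of squared deviations about the group mean is at most the within-group sum of squares, and the two group sums add up to $\sum_k R_{n,k}^2\le(\sum_k|R_{n,k}|)^2$ \emph{no matter how coordinates are split into the two groups}, and since both weights $\frac{q_{0,n}}{q_n(q_{1,n}-1)}$ and $\frac{q_{1,n}}{q_n(q_{0,n}-1)}$ are $O(q_n^{-1})$, one gets $S_n^2(\pi R_n)\lesssim q_n^{-1}(\sum_k|R_{n,k}|)^2$ for every $\pi$; this is $\op(1)$ by Assumption~\ref{as:decom}(vi), so $\sup_\pi S_n^2(\pi R_n)\pto0$. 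The same domination by the total sum of squares gives $S_n^2(\pi Z_n)\lesssim q_n^{-1}\sum_kZ_{n,k}^2$ for every $\pi$, and since $\ev q_n^{-1}\sum_kZ_{n,k}^2\le\sup_{n,k}\ev Z_{n,k}^2<\infty$ by Assumption~\ref{as:decom}(iii), $S_n^2(\pi Z_n)=\Op(1)$ uniformly in $\pi$.

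With these in place, part~(i) is quick. For the first assertion, a weak law of large numbers applied within each group to the independent array $Z_{n,k}^2-\ev Z_{n,k}^2$ (von Bahr--Esseen with exponent $r=\min\{p/2,2\}>1$, the $L^r$-bound coming from Assumption~\ref{as:decom}(iii)), together with $q_{1,n}^{-1}\sum_{k\le q_{1,n}}Z_{n,k}\pto0$ and its group-$0$ analogue (both by Chebyshev and Assumption~\ref{as:decom}(iii)), yields $S_n^2(Z_n)-\sigma_n^2\pto0$; since $\sigma_n^2=O(1)$ by Assumption~\ref{as:decom}(iii) this also shows $S_n(Z_n)=\Op(1)$, and then $|S_n^2(Z_n+R_n)-S_n^2(Z_n)|\le 2S_n(Z_n)S_n(R_n)+S_n^2(R_n)\pto0$ completes it. The second assertion of part~(i) is immediate from the two a priori bounds, $|S_n^2(\pi Z_n+\pi R_n)-S_n^2(\pi Z_n)|\le 2S_n(\pi Z_n)S_n(\pi R_n)+S_n^2(\pi R_n)\pto0$. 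Part~(iii) then follows from parts~(i) and~(ii) by writing $S_n^2(\rn(\pi\hat\theta_n-\theta_01_{q_n}))/S_n^2(\pi Z_n)=1+[S_n^2(\pi Z_n+\pi R_n)-S_n^2(\pi Z_n)]/S_n^2(\pi Z_n)$, whose numerator is $\op(1)$ by part~(i) and whose denominator is bounded away from $0$ in probability by part~(ii).

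The real work is part~(ii), which I would prove in the sharper form $S_n^2(\pi Z_n)=q_n^{-1}\sum_{k=1}^{q_n}\ev Z_{n,k}^2+\op(1)$; this suffices because the limiting quantity is $q_n^{-1}\sum_k\var Z_{n,k}$, bounded away from $0$ by Assumption~\ref{as:decom}(v). I would write $S_n^2(\pi Z_n)=\frac{q_{0,n}}{q_n}\hat\sigma_{I_1}^2+\frac{q_{1,n}}{q_n}\hat\sigma_{I_0}^2$, where $\hat\sigma_{I_1}^2$, $\hat\sigma_{I_0}^2$ are the sample variances of the $Z_{n,k}$ over the complementary uniformly random index sets $I_1=\pi(\{1,\dots,q_{1,n}\})$, $I_0=\pi(\{q_{1,n}+1,\dots,q_n\})$, and treat each as a sampling-without-replacement average: $q_{1,n}^{-1}\sum_{k\in I_1}Z_{n,k}^2$ has $\prob_\pi$-mean $q_n^{-1}\sum_kZ_{n,k}^2$ and $\prob_\pi$-variance at most a constant times $q_n^{-2}\sum_kZ_{n,k}^4$, while $q_{1,n}^{-1}\sum_{k\in I_1}Z_{n,k}$ has $\prob_\pi$-mean $q_n^{-1}\sum_kZ_{n,k}\pto0$ and $\prob_\pi$-variance at most a constant times $q_n^{-2}\sum_kZ_{n,k}^2$. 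Passing from these $\prob_\pi$-statements to convergence in probability via Lemma~\ref{l:conditionalequiv} and the conditional Markov inequality gives $\hat\sigma_{I_1}^2=q_n^{-1}\sum_kZ_{n,k}^2+\op(1)$ and the same for $\hat\sigma_{I_0}^2$; the weak law of large numbers turns $q_n^{-1}\sum_kZ_{n,k}^2$ into $q_n^{-1}\sum_k\ev Z_{n,k}^2+\op(1)$, and summing the two pieces yields the claim. The one step that is not routine, and which I expect to be the main obstacle, is showing $q_n^{-2}\sum_{k=1}^{q_n}Z_{n,k}^4=\op(1)$ \emph{without} a fourth-moment assumption; I would get this from $q_n^{-1}\sum_k|Z_{n,k}|^p=\Op(1)$ (Markov and Assumption~\ref{as:decom}(iii)), the bound $Z_{n,k}^4\le|Z_{n,k}|^p(\max_{l\le q_n}|Z_{n,l}|)^{4-p}$ for $2<p\le4$, and $\max_{l\le q_n}|Z_{n,l}|=\Op(q_n^{1/p})$, the case $p\ge4$ being immediate. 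Throughout, $\op$ and $\Op$ for quantities involving $\pi$ are read in the joint probability over the data and $\pi$.
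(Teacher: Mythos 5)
Your proof is correct. Parts (i) and (iii) follow essentially the same route as the paper: the cross-term (Cauchy--Schwarz) bound $|S_n^2(a+b)-S_n^2(a)|\le 2S_n(a)S_n(b)+S_n^2(b)$, the permutation-invariant bound $\sum_k R_{n,k}^2\le(\sum_k|R_{n,k}|)^2=\op(q_n)$, a weak law of large numbers for the independent array $Z_{n,k}^2-\ev Z_{n,k}^2$ (the paper uses a truncation-based array WLLN where you use von Bahr--Esseen, both fine under condition (iii)), and Chebyshev for the group means. Part (ii) is where you genuinely diverge. The paper never establishes a limit for $S_n^2(\pi Z_n)$: it uses the algebraic identity $\sum_{k\in I}(x_k-\bar x_I)^2=\sum_{k\in I}x_k^2-|I|\bar x_I^2$, factors out the \emph{minimum} of the two group weights, and observes that the resulting $\sum_{k=1}^{q_n}Z_{n,k}^2$ is permutation-invariant --- so the only random-subset quantities to control are the two permuted means, whose variances are handled by the law of total variance. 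You instead prove the sharper statement $S_n^2(\pi Z_n)=q_n^{-1}\sum_k\ev Z_{n,k}^2+\op(1)$ via sampling-without-replacement variance bounds, which forces you to show $q_n^{-2}\sum_kZ_{n,k}^4=\op(1)$ with only $p>2$ moments; your interpolation $Z_{n,k}^4\le|Z_{n,k}|^p(\max_l|Z_{n,l}|)^{4-p}$ combined with $\max_l|Z_{n,l}|=\Op(q_n^{1/p})$ does close that gap (the exponent $q_n^{(4-2p)/p}$ is negative precisely because $p>2$). The trade-off: your version yields the exact limit of the permutation variance, which is more information than the lemma needs, at the cost of an extra moment argument that the paper's min-weight trick avoids entirely.
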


The following two theorems are consequences of deep results on arrays with exchangeable weights due to \citet{janssen2005}. Bars denote averages over $q_n$ in the next two statements.
\begin{theorem}[{\citealt[Theorem~4.1]{janssen2005}}]\label{t:arrayclt} For arbitrary random arrays $X_{n,1},\dots, X_{n,q_n}$ with $\min\{q_{1,n},q_{0,n}\}\to \infty$ such that $0 < \liminf q_{1,n}/q_{0,n}\leq \limsup q_{1,n}/q_{0,n}< \infty$ and \[\frac{\max_{k\leq q_n}(X_{n,k} - \mean{X}_n)^2}{\sum_{k=1}^{q_n}(X_{n,k} - \mean{X}_n)^2}\pto 0 \]
we have, conditional on $X_{n,1},\dots X_{n,q_n}$,
\[\sqrt{\frac{q_{1,n} q_{0,n}}{q_n}} \frac{q_{1,n}^{-1}\sum_{k=1}^{q_{1,n}} X_{n,\pi(k)} - q_{0,n}^{-1}\sum_{k=1+q_{1,n}}^{q_n} X_{n,\pi(k)}}{S_n(X_{n,\pi(1)}, \dots, X_{n,\pi(q_n)})}\leadsto \mathrm{N}(0,1)\quad\text{in $\prob$-probability.} \] 
\end{theorem}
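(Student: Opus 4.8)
This statement is Theorem~4.1 of \citet{janssen2005}; the plan is to give a self-contained route through the classical finite-population central limit theorem. Throughout I would condition on $X_{n,1},\dots,X_{n,q_n}$ and write $\tau_n^2 = (q_n-1)^{-1}\sum_{k=1}^{q_n}(X_{n,k}-\mean{X}_n)^2$, which is invariant under $\pi$. The key observation is that, conditionally on the array, $(X_{n,\pi(1)},\dots,X_{n,\pi(q_{1,n})})$ is a simple random sample of size $q_{1,n}$ drawn without replacement from the pool $\{X_{n,1},\dots,X_{n,q_n}\}$. Writing $\mean{A}_\pi = q_{1,n}^{-1}\sum_{k=1}^{q_{1,n}} X_{n,\pi(k)}$ for its sample mean, the identity $q_{0,n}^{-1}\sum_{k=1+q_{1,n}}^{q_n} X_{n,\pi(k)} = (q_n\mean{X}_n - q_{1,n}\mean{A}_\pi)/q_{0,n}$ gives
\[ q_{1,n}^{-1}\sum_{k=1}^{q_{1,n}} X_{n,\pi(k)} - q_{0,n}^{-1}\sum_{k=1+q_{1,n}}^{q_n} X_{n,\pi(k)} = \frac{q_n}{q_{0,n}}\bigl(\mean{A}_\pi - \mean{X}_n\bigr), \]
which has conditional mean zero and conditional variance $q_n\tau_n^2/(q_{1,n}q_{0,n})$. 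Consequently the statistic in the theorem equals $(\mean{A}_\pi-\mean{X}_n)/\mathrm{sd}_\pi(\mean{A}_\pi)$ multiplied by the ratio $\tau_n / S_n(X_{n,\pi(1)},\dots,X_{n,\pi(q_n)})$, and it suffices to treat the two factors separately.

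For the first factor, I would invoke the classical central limit theorem for the standardized mean of a simple random sample without replacement (the finite-population CLT of Erd\H{o}s--R\'enyi and H\'ajek, equivalently Hoeffding's combinatorial CLT): it applies precisely under the Lindeberg-type negligibility condition $\max_{k\le q_n}(X_{n,k}-\mean{X}_n)^2/\sum_{k=1}^{q_n}(X_{n,k}-\mean{X}_n)^2\to 0$ together with $\min\{q_{1,n},q_{0,n}\}\to\infty$, which are exactly the hypotheses. Since the pool here is random, I would reduce to the deterministic statement by a subsequence argument: along any subsequence, extract a further subsequence on which the negligibility ratio tends to zero almost surely, apply the deterministic CLT realization-by-realization to obtain pointwise convergence of the conditional distribution function to $\Phi$ on a probability-one event, and conclude convergence in probability of the conditional distribution functions along the original sequence.

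For the ratio, I would show $S_n^2(X_{n,\pi(1)},\dots,X_{n,\pi(q_n)})/\tau_n^2\pto 1$ conditionally, hence unconditionally by Lemma~\ref{l:conditionalequiv}. Using the definition of $S_n^2$, decompose $S_n^2(X_{n,\pi(1)},\dots,X_{n,\pi(q_n)}) = (q_{0,n}/q_n)\,(s^{(1)}_\pi)^2 + (q_{1,n}/q_n)\,(s^{(0)}_\pi)^2$, where $(s^{(1)}_\pi)^2$ and $(s^{(0)}_\pi)^2$ are the within-group sample variances of the two subsamples. Each is conditionally unbiased for $\tau_n^2$ under sampling without replacement, and a routine fourth-moment computation bounds its conditional variance divided by $\tau_n^4$ by a constant times $(q_n/q_{1,n})$ (respectively $(q_n/q_{0,n})$) times $\max_{k\le q_n}(X_{n,k}-\mean{X}_n)^2/\sum_{k=1}^{q_n}(X_{n,k}-\mean{X}_n)^2$, plus a term of order $1/\min\{q_{1,n},q_{0,n}\}$. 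Both terms tend to zero by the negligibility condition and $0<\liminf q_{1,n}/q_{0,n}\le\limsup q_{1,n}/q_{0,n}<\infty$, so both within-group variances concentrate at $\tau_n^2$ and the ratio tends to one. Combining the two factors through the conditional Slutsky lemma \citep[Theorem~3.2]{xiongli2008} then yields the asserted conditional convergence in probability to $\mathrm{N}(0,1)$.

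The hard part is the studentization: the statistic is normalized by the \emph{permuted} two-sample variance $S_n(X_{n,\pi(1)},\dots,X_{n,\pi(q_n)})$ rather than by the permutation-invariant pooled variance, which is what forces the separate concentration argument for the within-group sample variances under sampling without replacement, on top of the bookkeeping needed to run everything in the ``conditional on the array, in probability'' mode via the subsequence reduction to the deterministic combinatorial CLT. This is exactly the difficulty that \citet{janssen2005}, building on an insight of \citet{neuhaus1993}, resolves through a general theory of arrays with exchangeable weights; the route sketched here is an alternative that stays closer to the classical finite-population CLT.
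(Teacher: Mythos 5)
Your proposal is correct, but it takes a genuinely different route from the paper: the paper does not prove this statement at all, it imports it verbatim as Theorem~4.1 of \citet{janssen2005}, whose own argument runs through a general conditional CLT for arrays with exchangeable weights (building on \citealt{neuhaus1993}). Your reduction to the classical finite-population CLT is a valid self-contained alternative for this specific two-sample case. The algebra checks out: the identity $q_{1,n}^{-1}\sum_{k\le q_{1,n}}X_{n,\pi(k)}-q_{0,n}^{-1}\sum_{k>q_{1,n}}X_{n,\pi(k)}=(q_n/q_{0,n})(\mean{A}_\pi-\mean{X}_n)$ and the conditional variance $q_n\tau_n^2/(q_{1,n}q_{0,n})$ are right, so the numerator scaled by $\sqrt{q_{1,n}q_{0,n}/q_n}/\tau_n$ is exactly the standardized mean of a without-replacement sample; the max-ratio negligibility condition implies the H\'ajek--Lindeberg condition precisely because $0<\liminf q_{1,n}/q_{0,n}\le\limsup q_{1,n}/q_{0,n}<\infty$ keeps the sampling fraction away from $0$ and $1$ (worth saying explicitly that this is where that hypothesis is consumed, since the negligibility condition alone would not suffice); the decomposition $S_n^2=(q_{0,n}/q_n)(s^{(1)}_\pi)^2+(q_{1,n}/q_n)(s^{(0)}_\pi)^2$ matches the paper's definition of $\hat{S}_n^2$ and $S_n^2$; the within-group sample variances are indeed conditionally unbiased for $\tau_n^2$ under sampling without replacement (the second group being the complement of a simple random sample is itself one); and the fourth-moment bound $\var_\pi((s^{(1)}_\pi)^2)/\tau_n^4\lesssim(q_n/q_{1,n})\max_k(X_{n,k}-\mean{X}_n)^2/\sum_k(X_{n,k}-\mean{X}_n)^2+1/q_{1,n}$ has the right orders. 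The subsequence reduction to the deterministic combinatorial CLT, followed by Lemma~\ref{l:conditionalequiv} and the conditional Slutsky lemma, is the standard and correct way to convert everything into the ``conditional, in probability'' mode. What Janssen's exchangeable-weights machinery buys over your route is generality (it covers weighted permutation statistics well beyond the two-sample difference of means, which is what the paper's Theorem~\ref{t:arrayclt2} also draws on); what your route buys is transparency and the ability to see exactly which classical condition each hypothesis of the theorem is feeding.
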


\begin{theorem}\label{t:arrayclt2}
Consider an exchangeable triangular array of real-valued weights $W_{n,1},\dots W_{n,q_n}$ such that {\upshape (i)} $\max_{k\leq q_n} |W_{n,k} - \mean{W}_n|\pto 0$, {\upshape (ii)} $\sum_{k=1}^{q_n} (W_{n,k} - \mean{W}_n)^2 \pto 1$, {\upshape (iii)} $\sqrt{q_n}(W_{n,1} - \mean{W}_n)\leadsto W$ with $\ev W = 0$ and $\var W = 1$. Also consider an arbitrary triangular array of real-valued random variables $X_{n,1},\dots X_{n,q_n}$ that is independent of the weights and satisfies {\upshape (iv)} $\max_{k\leq q_n} |X_{n,k}|\pto 0$, {\upshape (v)} $\sum_{k=1}^{q_n} (X_{n,k} - \mean{X}_n)^2 \pto \sigma^2$, and {\upshape (vi)} $\mean{X}_n\pto 0$. Then $\sqrt{q_n}\sum_{k=1}^{q_n} W_{n,k} (X_{n,k} - \mean{X}_n)$ converges in distribution to $\mathrm{N}(0,\sigma^2)$ in probability, conditional on $X_{n,1},\dots X_{n,q_n}$.	
\end{theorem}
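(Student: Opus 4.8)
The plan is to condition on the array $X_{n,1},\dots,X_{n,q_n}$. By the independence assumed in the statement, the weights $W_{n,1},\dots,W_{n,q_n}$ remain exchangeable given the $X$-array, while the quantities $c_{n,k} := X_{n,k}-\bar{X}_n$ become (conditionally) deterministic coefficients. Since $\sum_{k=1}^{q_n} c_{n,k} = 0$, centering the weights is for free: $\sqrt{q_n}\sum_{k=1}^{q_n} W_{n,k}(X_{n,k}-\bar{X}_n) = \sqrt{q_n}\sum_{k=1}^{q_n}(W_{n,k}-\bar{W}_n)c_{n,k}$. It therefore suffices to show that $\sqrt{q_n}\sum_{k=1}^{q_n}(W_{n,k}-\bar{W}_n)c_{n,k}\leadsto \mathrm{N}(0,\sigma^2)$ in probability, conditional on the $c_{n,k}$.

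The core ingredient is the general central limit theorem for linear statistics of exchangeable weights in \citet{janssen2005} (of which Theorem~\ref{t:arrayclt} is another consequence, via two-sample permutation weights). Its hypotheses on the weights are exactly (i)--(iii); its hypothesis on the coefficients is the Lindeberg-type negligibility condition $\max_{k\leq q_n} c_{n,k}^2 / \sum_{k=1}^{q_n} c_{n,k}^2 \pto 0$, which here follows from (iv) and (vi) (numerator $\pto 0$) together with (v) (denominator $\pto\sigma^2$) whenever $\sigma^2 > 0$. Applying that theorem identifies the limiting variance as $\sigma^2\cdot\lim_n q_n(\var W_{n,1} - \cov(W_{n,1},W_{n,2}))$, so it remains to verify this last limit equals $1$. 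Here conditions (ii) and (iii) are used: by exchangeability $\ev\sum_{k=1}^{q_n}(W_{n,k}-\bar{W}_n)^2 = (q_n-1)(\var W_{n,1}-\cov(W_{n,1},W_{n,2}))$, and the in-probability convergence in (ii) can be upgraded to convergence in mean using the uniform integrability implied by (i) and (iii); thus $q_n(\var W_{n,1}-\cov(W_{n,1},W_{n,2}))\to 1$, and (iii) confirms the normalization. To pass from deterministic to random coefficients one uses a standard subsequence argument: along any subsequence, Lemma~\ref{l:pseq} and (iv)--(vi) permit passage to a further subsequence on which $\max_k|X_{n,k}|\to 0$, $\bar{X}_n\to 0$, and $\sum_k c_{n,k}^2\to\sigma^2$ hold almost surely, on which the deterministic-coefficient CLT applies; since the limit $\mathrm{N}(0,\sigma^2)$ is the same for every subsequence, Lemma~\ref{l:conditionalequiv} yields the asserted conditional convergence along the full sequence. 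The degenerate case $\sigma^2 = 0$ is handled directly: conditional on the $X$-array, $\var\bigl(\sqrt{q_n}\sum_k(W_{n,k}-\bar{W}_n)c_{n,k}\bigr) = q_n(\var W_{n,1}-\cov(W_{n,1},W_{n,2}))\sum_k c_{n,k}^2 = (1+\op(1))\sum_k(X_{n,k}-\bar{X}_n)^2 \pto 0$, so conditional Chebyshev gives convergence to the point mass at zero.

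The step I expect to be the main obstacle is the interplay between the two independent sources of randomness. The conclusion is a \emph{conditional-in-probability} weak limit, so the coefficients cannot simply be frozen; the cleanest route is to invoke Janssen's exchangeable-weights CLT in its deterministic-coefficient form and transfer it via the subsequence argument above. A second delicate point is extracting the exact variance normalization $q_n(\var W_{n,1}-\cov(W_{n,1},W_{n,2}))\to 1$ from the \emph{in-probability} statement (ii), which genuinely needs the moment control furnished by (i) and (iii) rather than (ii) alone.
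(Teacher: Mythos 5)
Your overall architecture matches what the paper intends: the paper offers no explicit proof of Theorem~\ref{t:arrayclt2} at all, asserting only that it is ``a consequence of deep results on arrays with exchangeable weights due to \citet{janssen2005},'' and your reduction---condition on the $X$-array, exploit $\sum_k (X_{n,k}-\mean{X}_n)=0$ to recenter the weights for free, invoke Janssen's exchangeable-weights CLT with the Lindeberg-type negligibility of the coefficients supplied by (iv)--(vi), and transfer from deterministic to random coefficients by the almost-sure-subsequence principle---is exactly the intended route. Two caveats. First, the step where you identify the limiting variance via $q_n(\var W_{n,1}-\cov(W_{n,1},W_{n,2}))\to 1$ is not rigorous as written: conditions (i)--(iii) are purely in-probability/in-distribution statements and do not by themselves guarantee that $W_{n,1}$ has finite second moments, let alone that $\sum_k(W_{n,k}-\mean{W}_n)^2$ is uniformly integrable, so ``upgrading (ii) to convergence in mean'' is an unsupported claim. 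This detour is avoidable: in Janssen's formulation the asymptotic variance of $\sqrt{q_n}\sum_k W_{n,k}(X_{n,k}-\mean{X}_n)$ is pinned down directly by the product of the empirical normalizations, $\sum_k(W_{n,k}-\mean{W}_n)^2\cdot\sum_k(X_{n,k}-\mean{X}_n)^2\pto\sigma^2$, with no expectations of the weights ever taken; the same remark disposes of your $\sigma^2=0$ case, where the conditional-Chebyshev bound should likewise be run with the empirical sum of squares rather than with $\var W_{n,1}$ and $\cov(W_{n,1},W_{n,2})$. Second, your citations of Lemma~\ref{l:pseq} and Lemma~\ref{l:conditionalequiv} in the subsequence step are not quite on point---the former selects a deterministic sequence $\varepsilon_n$ and the latter concerns conditional convergence to zero---though the underlying subsequence argument for conditional-in-probability weak convergence is standard and correct.
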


Finally, the central limit theorems above do not require the $X_{n,k}$ to be independent. If they are, then the conditions of the theorem can be simplified using a result of A.~Y.~Khinchin. I include a proof for completeness.
\begin{lemma}[Khinchin]\label{l:arraymax}
A row-wise independent array $\{X_{n,k} : k\leq q_n, n\in\mathbb{N}\}$ satisfies $\max_{k\leq q_n} |X_{n,k}| \pto 0$ if and only if $\sum_{k=1}^{q_n}\prob(|X_{n,k}|> \varepsilon)\to 0$ for every $\varepsilon > 0$.
\end{lemma}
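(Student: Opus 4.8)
The plan is to exploit row-wise independence to rewrite the distribution of the row maximum as a product, and then to play the union bound against the elementary inequality $1-x\le e^{-x}$. First I would fix $\varepsilon>0$ and abbreviate $p_{n,k}=\prob(|X_{n,k}|>\varepsilon)$. Independence of $X_{n,1},\dots,X_{n,q_n}$ within each row gives the identity
\[
\prob\Bigl(\max_{k\le q_n}|X_{n,k}|>\varepsilon\Bigr)=1-\prod_{k=1}^{q_n}\bigl(1-p_{n,k}\bigr),
\]
which is the only place independence enters.

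For the ``if'' direction I would simply use the union bound $\prob(\max_{k\le q_n}|X_{n,k}|>\varepsilon)\le\sum_{k=1}^{q_n}p_{n,k}$ (no independence needed): if $\sum_{k=1}^{q_n}p_{n,k}\to 0$ for every $\varepsilon>0$, then $\prob(\max_{k\le q_n}|X_{n,k}|>\varepsilon)\to 0$ for every $\varepsilon>0$, i.e.\ $\max_{k\le q_n}|X_{n,k}|\pto 0$.

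For the ``only if'' direction, suppose $\max_{k\le q_n}|X_{n,k}|\pto 0$. Then the displayed identity forces $\prod_{k=1}^{q_n}(1-p_{n,k})\to 1$ for every $\varepsilon>0$. Since $p_{n,k}\in[0,1]$, the inequality $1-x\le e^{-x}$ gives $\prod_{k=1}^{q_n}(1-p_{n,k})\le\exp\bigl(-\sum_{k=1}^{q_n}p_{n,k}\bigr)\le 1$, so $\exp(-\sum_{k=1}^{q_n}p_{n,k})$ is squeezed between a sequence tending to $1$ and $1$; hence $\sum_{k=1}^{q_n}p_{n,k}\to 0$ because $\sum_{k=1}^{q_n}p_{n,k}\ge 0$.

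The argument is entirely elementary, so I do not expect a genuine obstacle. The one point meriting a moment's care is that $\prod_{k}(1-p_{n,k})\to 1$ does not by itself control the individual $p_{n,k}$ or their sum, which is exactly why the exponential bound—rather than, say, expanding the product and keeping only the first-order term—is the right tool for recovering $\sum_{k}p_{n,k}\to 0$.
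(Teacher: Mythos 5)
Your proof is correct and follows essentially the same route as the paper: the union bound handles the ``if'' direction, and the independence identity $\prob(\max_{k\le q_n}|X_{n,k}|\le\varepsilon)=\prod_{k=1}^{q_n}(1-p_{n,k})$ combined with $1-x\le e^{-x}$ squeezes $\exp(-\sum_k p_{n,k})$ to $1$ for the converse. The paper compresses both directions into the single chain $1-\sum_k p_{n,k}\le\prod_k(1-p_{n,k})\le e^{-\sum_k p_{n,k}}\le 1$, but the ingredients and their use are identical to yours.
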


\section{Proofs}\label{s:proofs}

\begin{proof}[Proof of \eqref{eq:pval} and the equivalence of $T_n > c_{n,\alpha}$ and $p_n(\hat{\theta}_n) \leq \alpha$] To see the second equality in \eqref{eq:pval}, let $k_n$ be the index on the first order statistic of $\{T_n(\pi\hat{\theta}_n) : \pi\in\Pi\}$ strictly smaller than $T_n$. If no such order statistic exists, then $T_n = \min \{T_n(\pi \hat{\theta}_n) : \pi\in\Pi\}$ and the infimum in \eqref{eq:pval} must be $1$. Suppose therefore that $k_n$ exists and let $M = |\Pi|$. Then $p_n = p_n(\hat{\theta}_n)$ is the smallest $p$ such that $(k_n-1)/M < 1 - p \leq k_n/M$ \citep[see, e.g.,][p.\ 305]{vandervaart1998} or, equivalently, $p_n = 1-k_n/M$. Because $k_n = \sum_{\pi\in\Pi}1\{ T_{n}(\pi\hat{\theta}_n) < T_n \}$, the equality follows. 

To see the equivalence of $T_n > c_{n,\alpha}$ and $p_n \leq \alpha$, suppose first that $T_n > c_{n,\alpha}$. Then $\alpha \in \{ p : T_n > c_{n,p} \}$ and therefore $p_n =  \inf\{ p : T_n > c_{n,p} \}\leq \alpha$. Conversely, suppose $p_n \leq \alpha$. Note that $p \mapsto c_{n,p}$ is a right-continuous, decreasing step-function that moves through the order statistics of $\{T_n(\pi\hat{\theta}_n) : \pi\in\Pi\}$ in reverse order as $p$ increases. Hence, if $p_n\in[1/M,1)$, there is a $p$ such that $c_{n,p}$ is the first order statistic strictly smaller than $T_n$. The smallest such $p$ is $p_n$ and therefore $T_n > c_{n,p_n} \geq c_{n,\alpha}$. The extreme right-hand side of \eqref{eq:pval} contradicts $p_n < 1/M$ because $T_n \in \{T_n(\pi\hat{\theta}_n) : \pi\in\Pi\}$. Finally, $p_n = 1$ and $p_n \leq \alpha$ contradicts $\alpha \in (0,1)$.	\end{proof}

\begin{proof}[Proof of Lemma \ref{l:pseq}]
For each $m\in\mathbb{N}$, there is an $n_m$ such that $\prob(|X_{n,1/m}| > 1/m) < 1/m$ for all $n > n_m$. Without loss of generality, take $n_1<n_2<\cdots$. For each $n$, define implicitly $m_n$ as the $m$ that satisfies $n_m \leq n < n_{m+1}$ and let $m_n = 1$ for $n < n_1$. Then $\varepsilon_n = 1/m_n\to 0$ and, for any given $\delta >0$, we eventually have $\prob(|X_{n,\varepsilon_n}| > \delta) \leq \prob(|X_{n,\varepsilon_n}| > \varepsilon_n) < \varepsilon_n. $
\end{proof}

\begin{proof}[Proof of Proposition \ref{p:uniformremainder}]
Recall that $\lceil a \rceil$ is the smallest integer greater than $a$. Let $q_1(\varepsilon) = \lceil 1/\varepsilon\rceil$ and $q_0(\varepsilon) = \lceil q_1(\varepsilon)(1-\lambda)/\lambda \rceil$. By the continuous mapping theorem, \[X_{n,\varepsilon} := \frac{\sum_{k=1}^{q_1(\varepsilon)} |R_{n, k}| + \sum_{k=1 + q_1(\varepsilon)}^{q_1(\varepsilon) + q_0(\varepsilon)}|R_{n, k}|}{\sqrt{q_1(\varepsilon) + q_0(\varepsilon)}} \pto 0\] for every fixed $\varepsilon > 0$. By Lemma~\ref{l:pseq}, there is a sequence $\varepsilon_n \to 0$ such that $X_{n,\varepsilon_n}\pto 0$. Let $q_{1,n} = q_1(\varepsilon_n)$ and $q_{0,n} = q_0(\varepsilon_n)$. Note that $\min\{q_{1,n},q_{0,n}\}\to \infty$ and $q_{1,n}/(q_{1,n} + q_{0,n}) \to \lambda$, as desired.
\end{proof}

\begin{proof}[Proof of Lemma \ref{l:conditionalequiv}] (ii)~$\Rightarrow$~(iii): Apply the Lebesgue dominated convergence theorem. (ii)~$\Leftarrow$~(iii): The Markov inequality gives \[ \prob\bigl(\prob\bigl(|g_n(Y_n, X_n)| > \varepsilon\mid X_n\bigr) > \delta \bigr) \leq \delta^{-1} \prob\bigl(|g_n(Y_n, X_n)| > \varepsilon\bigr) \to 0. \] (i)~$\Rightarrow$~(ii): $\prob(|g_n(Y_n, X_n)| > \varepsilon\mid X_n)$ is at most
\[1- \prob\bigl(g_n(Y_n, X_n) \leq \varepsilon\mid X_n\bigr) + \prob\bigl(g_n(Y_n, X_n) \leq -\varepsilon\mid X_n\bigr).\] The definition of convergence in distribution in probability implies that the right-hand side converges to zero for every $\varepsilon > 0$. (i)~$\Leftarrow$~(iii): For every $z > 0$, $\prob(g_n(X_n, Y_n) \leq z\mid X_n) \pto 1$ and $\prob(g_n(X_n, Y_n) \leq -z\mid X_n) \pto 0$. Since $z$ is arbitrary, $z=0$ is a discontinuity point. The result follows.
\end{proof}

\begin{proof}[Proof of Lemma \ref{l:varapprox}]
Let $\mean{Z}_{1,n} = q_{1,n}^{-1}\sum_{k=1}^{q_{1,n}} Z_{n,k}$, $\mean{Z}_{0,n} = q_{0,n}^{-1}\sum_{k=1+q_{1,n}}^{q_n} Z_{n,k}$, $\mean{Z}_{1,n}(\pi) = q_{1,n}^{-1}\sum_{k=1}^{q_{1,n}} Z_{n,\pi(k)}$, and $\mean{Z}_{0,n}(\pi) = q_{0,n}^{-1}\sum_{k=1+q_{1,n}}^{q_n} Z_{n,\pi(k)}$. For (i), I start by approximating $S_n^2(\rn (\hat{\theta}_n-\theta_01_{q_n})) = \rn^2 S_n^2(\hat{\theta}_n)$. The absolute value of $\rn^2\sum_{k=1}^{q_{1,n}} (\hat{\theta}_{n,k} - \mean{\theta}_{1,n})^2 - \sum_{k=1}^{q_{1,n}} (Z_{n,k} - \mean{Z}_{1,n})^2$ is bounded above by a constant multiple of
\[ \biggl(\sum_{k=1}^{q_n} Z_{n,k}^2\biggr)^{1/2} \biggl(\sum_{k=1}^{q_n} R_{n,k}^2\biggr)^{1/2} + \sum_{k=1}^{q_n} R_{n,k}^2. \]
The same bound with $\sum_{k=1}^{q_{1,n}} (Z_{n,k} - \mean{Z}_{1,n})^2$ replaced by $\sum_{k=1+q_{1,n}}^{q_n} (Z_{n,k} - \mean{Z}_{0,n})^2$ applies to $\rn^2 \sum_{k=1+q_{1,n}}^{q_n} (\hat{\theta}_{n,k} - \mean{\theta}_{0,n})^2$; notice also that this bound is invariant to permutation. Use $\sum_{k=1}^{q_n} R_{n,k}^2/q_n\leq (\sum_{k=1}^{q_n} |R_{n,k}|)^2/q_n\pto 0$ to conclude that $S_n^2(\rn (\hat{\theta}_n-\theta_01_{q_n})) - S_n^2(Z_n)\pto 0$ as long as $q_n^{-1}\sum_{k=1}^{q_n} Z_{n,k}^2$ is bounded in probability. To this end, apply a weak law of large numbers for arrays \citep[Theorem 2.2.6, p.\ 52]{durrett2010} to see that $q_n^{-1}\sum_{k=1}^{q_n} (Z_{n,k}^2 - \ev Z_{n,k}^2)\pto 0$. The moment conditions are easily satisfied in view of $\smash{\sum_{k=1}^{q_n}}\ev |Z_{n,k}|^p = O(q_n)$ and
\begin{equation}\label{eq:zsqlln}
\biggl| \frac{1}{q_n}\sum_{k=1}^{q_n} \ev (Z_{n,k}^2 - \ev Z_{n,k}^2) 1\{Z_{n,k}^2 - \ev Z_{n,k}^2 > q_n\}\biggr| \leq \frac{2^p}{q_n^{p/2}}\sum_{k=1}^{q_n} \ev |Z_{n,k}|^p = O(q_n^{1-(p/2)}).	
\end{equation}
Chebyshev's inequality yields $\mean{Z}_{1,n} = q_{1,n}^{-1}\sum_{k=1}^{q_{1,n}} Z_{n,k}\pto 0$ and $\mean{Z}_{0,n} = q_{0,n}^{-1}\sum_{k=1+q_{1,n}}^{q_n} Z_{n,k}\pto 0$. To show the second part of (i), note that the bound for $\rn^2\sum_{k=1}^{q_{1,n}} (\hat{\theta}_{n,k} - \mean{\theta}_{1,n})^2$ also applies to $\rn^2\sum_{k=1}^{q_{1,n}} (\hat{\theta}_{n,\pi(k)} - \mean{\theta}_{1,n}(\pi))^2$ and $\rn^2\sum_{k=1+q_{1,n}}^{q_n} (\hat{\theta}_{n,\pi(k)} - \mean{\theta}_{0,n}(\pi))^2$. That bound converges to zero unconditionally in probability.

For (ii), we have $\ev (\mean{Z}_{1,n}(\pi)\mid \pi) = 0$ and, by the law of total variance, $\var(\mean{Z}_{1,n}(\pi))$ equals \[ q_{1,n}^{-2}\ev \sum_{k=1}^{q_{1,n}} \var(Z_{n,\pi(k)}\mid \pi)\leq q_{1,n}^{-2}\ev \sum_{k=1}^{q_n} \var(Z_{n,\pi(k)}\mid \pi) = q_{1,n}^{-2}\ev \var\biggl(\sum_{k=1}^{q_n} Z_{n,\pi(k)}\mid \pi\biggr).\] Because the sum on the far right is invariant to permutation, conclude that $\var(\mean{Z}_{1,n}(\pi)) \leq q_1^{-2} \sum_{k=1}^{q} \var Z_{n,k}\to 0$. It follows that $\mean{Z}_{1,n}(\pi)\pto 0$ and, by a similar argument, $\mean{Z}_{0,n}(\pi)\pto 0$ unconditionally. Use these two results, the law of large numbers for $q_n^{-1}\sum_{k=1}^{q_n} Z_{n,k}^2$ from above, and the fact that $S_n^2(\pi Z_n)$ is at least as large as \[ \min\biggl\{ \frac{q_{1,n}}{q_n(q_{0,n}-1)}, \frac{q_{0,n}}{q_n(q_{1,n}-1)} \biggr\}\biggl(\sum_{k=1}^{q_n} Z_{n,k}^2 - q_{1,n}\mean{Z}_{1,n}(\pi)^2 - q_{0,n}\mean{Z}_{0,n}(\pi)^2 \biggr) \] to see that the probability that this expression differs by more than $\varepsilon/2$ from $\min\{(1-\lambda)/\lambda,\lambda/(1-\lambda)\}\sum_{k=1}^{q_n} \var Z_{n,k}/q_n$ converges to zero unconditionally for every $\varepsilon > 0$. By assumption, the latter expression is eventually larger than a small enough $\varepsilon$. Conclude that $\prob(S^2_n(\pi Z_n) > \varepsilon/2)$ approaches one unconditionally.

Arguing along the same lines, we have $S^2_n(\pi Z_n)\leq \max\{q_{0,n}/(q_{1,n}-1), q_{1,n}/(q_{0,n}-1) \}\sum_{k=1}^{q_n} Z^2_{n,k}/q_n$ and therefore $S^2_n(\pi Z_n)$ is unconditionally bounded in probability. Conclude that $1-S_n^2(\rn (\hat{\theta}_n-\theta_01_{q_n}))/S_n^2(\pi Z_n)$ converges to zero conditionally in probability.
\end{proof}

\begin{proof}[Proof of Lemma \ref{l:arraymax}]
Apply the inequality $\prob(\max_{k\leq q_n} X_{n,k}^2 > \varepsilon ) \leq \sum_{k=1}^{q_n} \prob(X_{n,k}^2 > \varepsilon)$, then independence, and finally $\log x \leq x-1$ in
\[ 1 -\sum_{k=1}^{q_n} \prob(X_{n,k}^2 > \varepsilon) \leq \prob\Bigl(\max_{k\leq q_n} X_{n,k}^2 \leq \varepsilon \Bigr) = \prod_{k=1}^{q_n} \bigl(1 - \prob(X_{n,k}^2 > \varepsilon)\bigr) \leq e^{-\sum_{k=1}^{q_n} \prob(X_{n,k}^2 > \varepsilon)} \leq 1 \]
to obtain the desired result.
\end{proof}

\begin{proof}[Proof of Lemma \ref{l:weakconvergence}]
By Assumption \ref{as:decom}, $\sigma^2_n = O(1)$ and $\sigma^2_n \geq\min\{q_{0,n}/q_{1,n}, q_{1,n}/q_{0,n} \} \times \sum_{k=1}^{q_n} \var Z_{n,k}/q_n > 0$, which implies $1/\sigma_n = O(1)$. Let $w_{n,k} = \sqrt{q_{1,n}q_{0,n}}(q_{1,n}^{-1}1\{k \leq q_{1,n}\} - q_{0,n}^{-1}1\{k > q_{1,n}\})$. Use the fact that the limit superior of $\sup_{k\in\mathbb{N}} |w_{n,k}|$ is finite to deduce that
\begin{align}\label{eq:remainderbound}
\biggl|\frac{1}{\sqrt{q_n}}\sum_{k=1}^{q_n} w_{n,k} R_{n,k}\biggr| \leq \frac{\sup_{k\in\mathbb{N}}|w_{n,k}|}{\sqrt{q_n}}\sum_{k=1}^{q_n} |R_{n, k}|
\end{align}
converges to zero in probability and therefore
\begin{equation*}
\sqrt{\frac{q_{1,n} q_{0,n}}{q_n}}\frac{T_n(\rn (\pi\hat{\theta}_n-\theta_01_{q_n}))}{\sigma_n} = \frac{1}{\sqrt{q_n}}\sum_{k=1}^{q_n} \frac{w_{n,k} \rn(\hat{\theta}_{n,k} - \theta_0)}{\sigma_n} = \frac{1}{\sqrt{q_n}}\sum_{k=1}^{q_n} \frac{w_{n,k} Z_{n,k}}{\sigma_n} + \op (1).
\end{equation*}
Assumption \ref{as:decom}(iii) ensures that the first term on the right of satisfies the Lyapunov condition. The Lindeberg-Feller central limit theorem and the Slutsky lemma then yield weak convergence of the display to $\mathrm{N}(0, 1)$. By Lemma \ref{l:varapprox}, \[\sqrt{\frac{q_{1,n} q_{0,n}}{q_n}}\biggl(\frac{T_n(\rn (\pi\hat{\theta}_n-\theta_01_{q_n}))}{S_n(\rn (\pi\hat{\theta}_n-\theta_01_{q_n}))} - \frac{T_n(\rn (\pi\hat{\theta}_n-\theta_01_{q_n}))}{\sigma_n}\biggr)\pto 0.\] 

I now turn to the permutation statistics. Define  \[ \tilde{T}_n(\pi\hat{\theta}_n) = \frac{T_n(\pi\hat{\theta}_n)}{S_n(\hat{\theta}_n)} \frac{S_n(\pi \hat{\theta}_n)}{S_n(\pi Z_n)} = \sum_{k=1}^{q_n} \frac{w_{n,k} (\hat{\theta}_{n,\pi(k)} - \theta_0)}{S_n(\pi Z_n)}. \] By Lemma \ref{l:varapprox}, we can work on the set $\{ S_n(\pi Z_n) > \delta \}$. Just like the original statistic, $\sqrt{q_{1,n}q_{0,n}/q_n}\rn \tilde{T}_n(\pi\hat{\theta}_n)$ can be decomposed into
\begin{equation*}
\frac{1}{\sqrt{q_n}}\sum_{k=1}^{q_n} \frac{w_{n,k} \rn(\hat{\theta}_{n,\pi(k)} - \theta_0)}{S_n(\pi Z_n)} = \frac{1}{\sqrt{q_n}}\sum_{k=1}^{q_n} \frac{w_{n,k}(Z_{n,\pi(k)} + R_{n,\pi(k)})}{S_n(\pi Z_n)}.	
\end{equation*}
The absolute value of $\sum_{k=1}^{q_n} w_{n,k}R_{n,\pi(k)}/\sqrt{q_n}$ is bounded above the right-hand side of~\eqref{eq:remainderbound}, which is invariant to permutation. Conclude that $\sqrt{q_{1,n}q_{0,n}/q_n}\rn \tilde{T}_n(\pi\hat{\theta}_n)$ is within $\op(1)$ of \[\frac{1}{\sqrt{q_n}}\sum_{k=1}^{q_n} \frac{w_{n,k}Z_{n,\pi(k)}}{S_n(\pi Z_n)}.\] The conditional asymptotic distributions of the display and $\sqrt{q_{1,n}q_{0,n}/q_n}\tilde{T}_n(\pi\hat{\theta}_n)$ therefore coincide by the conditional Slutsky lemma. However, because permutation renders the $Z_{n,\pi(k)}$ dependent, the Lindeberg-Feller central limit theorem is no longer appropriate. I therefore apply multiplier central limit theory from Theorem \ref{t:arrayclt}. In view of \eqref{eq:zsqlln}, it is enough to show $\mean{Z}_n\pto 0$ and $\max_{k\leq q_n}Z_{n,k}^2/q_n\pto 0$. The first condition is an immediate consequence of $\var \mean{Z}_n = O(q_n^{-1})$. By Lemma \ref{l:arraymax}, the second condition holds because $ \sum_{k=1}^{q_n}\prob(Z_{n,k}^2 > q_n \varepsilon) \leq \sum_{k=1}^{q_n} \ev |Z_{n,k}|^p/(q_n\varepsilon)^{p/2}\to 0$. It follows from Lemma \ref{l:conditionalequiv} and the conditional Slutsky lemma that $\sqrt{q_{1,n}q_{0,n}/q_n}\rn\tilde{T}_n(\pi\hat{\theta}_n) \leadsto Z \sim \mathrm{N}(0,1)$ in probability. Apply Lemma \ref{l:varapprox} and the conditional Slutsky lemma again to see that \[ \biggl(1 - \frac{S_n(\pi Z_n)}{S_n(\rn (\pi\hat{\theta}_n-\theta_01_{q_n}))}\biggr) \sqrt{\frac{q_{1,n}q_{0,n}}{q_n}}\rn \tilde{T}_n(\pi\hat{\theta}_n) \leadsto 0\times Z = 0 \] in probability conditionally and therefore $\sqrt{q_{1,n}q_{0,n}/q_n}\rn T_n(\pi\hat{\theta}_n) \leadsto \mathrm{N}(0,1)$ in probability conditionally. 

Use P\'olya's theorem and the triangle inequality to see that the desired conclusion follows if $\sup_{t\in\mathbb{R}}|\prob_\pi(\sqrt{q_{1,n}q_{0,n}/q_n}\rn T_n(\pi\hat{\theta}_n) \leq t) - \Phi(t)|\pto 0$, where $\Phi$ is the standard normal distribution function. But this is already a consequence of the pointwise conditional convergence by a well known result \citep[see, e.g,][Problem 23.1, p.\ 339]{vandervaart1998}; it can shown by constructing diagonal subsequences that ensure almost sure convergence uniformly on a countable dense subset of the real line.
\end{proof}

\begin{proof}[Proof of Theorem \ref{t:testconsistency}] Under the null, $\sqrt{q_{1,n}q_{0,n}/q_n}c_{n,\alpha}/S_n(\hat{\theta}_n)$ converges in probability to $\Phi^{-1}(1-\alpha)$ by the properties of quantile functions \citep[Lemma 21.2, p.\ 305]{vandervaart1998} and Lemma \ref{l:weakconvergence}. Then \[\sqrt{\frac{q_{1,n}q_{0,n}}{q_n}}\frac{T_n - c_{n,\alpha}}{S_n(\hat{\theta}_n)} \leadsto \mathrm{N}(0,1) - \Phi^{-1}(1-\alpha)\] by the Slutsky lemma and therefore $\prob(T_n > c_{n,\alpha}) \to \alpha.$
	
Consider the alternative $\beta > 0$. Suppose $\min\{q_{1,n},q_{0,n}\}\to \infty.$ Let $a_{n,k} = q_{1,n}^{-1}1\{k \leq q_{1,n}\} - q_{0,n}^{-1}1\{k > q_{1,n}\}$ and write $T_n = \beta + \sum_{k=1}^{q_n} a_{n,k} 	(Z_{n,k} + R_{n,k}) = \beta + \op(1)$ by arguments as in the proof of Lemma \ref{l:varapprox}. Turning to the permutation statistics, notice that $q_{1,n}^{-1}\sum_{k=1}^{q_{1,n}} \hat{\theta}_{n,\pi(k)}-\theta_0 - \beta q_{1,n}/q_n = \beta q_{1,n}^{-1}\sum_{k=1}^{q_{1,n}} (1\{\pi(k)\leq q_{1,n}\} - q_{1,n}/q_n ) + \op(1)$ using arguments from before. The variance of the first term is bounded above by $\beta^2$ times
\[ \frac{1}{q_{1,n}}  + \biggl(\frac{q_{1,n}}{q_n}\frac{q_{1,n}-1}{q_n-1} - \Bigl(\frac{q_{1,n}}{q_n}\Bigr)^2\biggr) \frac{q_{1,n}-1}{q_{1,n}} \to 0. \]
The same argument applies to $q_{0,n}^{-1}\sum_{k=1+q_{1,n}}^{q_n} \hat{\theta}_{n,\pi(k)}$. Conclude that $S_n^2(\pi\hat{\theta}_n)$ is at least as large as \[ \min\biggl\{ \frac{q_{1,n}}{q_n(q_{0,n}-1)}, \frac{q_{0,n}}{q_n(q_{1,n}-1)} \biggr\}\biggl(\sum_{k=1}^{q_n} \Bigl(\hat{\theta}_{n,k}-\theta_0- \beta \frac{q_{1,n}}{q_n}\Bigr)^2 - \op(q_{1,n}) - \op(q_{0,n}) \biggr). \]
Calculations as in the proof of Lemma \ref{l:varapprox} show that this is within $\op(1)$ of 
\[ 
\min\biggl\{ \frac{q_{1,n}}{q_{0,n}-1}, \frac{q_{0,n}}{q_{1,n}-1} \biggr\}\biggl(\frac{1}{q_n}  \sum_{k=1}^{q_n} \ev Z_{n,k}^2 + \beta^2\frac{q_{1,n}q_{0,n}^2 + q_{1,n}^2q_{0,n}}{q_n^3} \biggr), \] 
which is bounded away form zero for $n$ large enough. The right-hand side in \[ \frac{S_n^2(\hat{\theta}_n)}{S_n^2(\pi\hat{\theta}_n)} \leq \frac{\max\{ q_{1,n}/(q_{0,n}-1), q_{0,n}/(q_{1,n}-1) \}}{\min\{ q_{1,n}/(q_{0,n}-1), q_{0,n}/(q_{1,n}-1) \}} \frac{ q_n^{-1}\sum_{k=1}^{q_n} (\hat{\theta}_{n,k}-\theta_0- \beta q_{1,n}/q_n)^2 }{ q_n^{-1}\sum_{k=1}^{q_n} (\hat{\theta}_{n,k}-\theta_0- \beta q_{1,n}/q_n)^2 + \op(1) } \] therefore converges in probability unconditionally either to $((1-\lambda)/\lambda)^2$ or its reciprocal depending on whether $\lambda \leq .5$ or not. Conclude that 
\begin{equation}\label{eq:alternativequant}
T_n(\pi\hat{\theta}_n) = \biggl( \frac{1}{q_{1,n}}\sum_{k=1}^{q_{1,n}} \hat{\theta}_{n,\pi(k)} - \frac{1}{q_{0,n}}\sum_{k=1+q_{1,n}}^{q_n} \hat{\theta}_{n,\pi(k)} \biggr) \frac{S_n(\hat{\theta}_n)}{S_n(\pi\hat{\theta}_n)} \pto 0	
\end{equation}
unconditionally. It follows that for $0 < \varepsilon < \beta$ we have that  the first term on the right of \[\prob(T_n \leq c_{n,\alpha}) \leq \prob(T_n \leq \beta-\varepsilon) + \prob(c_{n,\alpha} > \beta -\varepsilon )\] converges to zero. By the properties of quantile functions \citep[Lemma 21.1(i), p.\ 304]{vandervaart1998}, the second term on the right is equal to $\prob( \prob_\pi (T_n(\pi\hat{\theta}_n) > \beta-\varepsilon) > \alpha).$ Because this converges to 0 by \eqref{eq:alternativequant} and Lemma \ref{l:conditionalequiv}, conclude that $\prob(T_n > c_{n,\alpha}) \to 1$.

Now consider the alternative $\beta > 0$ with $q_1$ and $q_0$ fixed. Let $\pi_0 = (1,2,\dots, q)$. We have $\hat{\theta}_{n,k} \pto \theta_0 + \beta 1\{ k\leq q_1 \}$ and therefore $T_n\pto \beta$. For $0 < \varepsilon < \beta$, use this result in
$\prob( T_n \leq c_{n,\alpha}) \leq \prob(T_n \leq \beta -\varepsilon) + \prob(c_{n,\alpha} > \beta -\varepsilon)$
to see that the first term on the right converges to zero. By the properties of quantile functions, the second term can be written as
\[\prob\biggl( \sum_{\pi\in\Pi} 1\biggl\{ \frac{\hat{S}_n(\hat{\theta}_n)}{\hat{S}_n(\pi\hat{\theta}_n)}\sum_{k=1}^q a_{\pi(k)}\hat{\theta}_{n,k} > \beta -\varepsilon \biggr\} > \alpha |\Pi|\biggr).  \]
For each $\pi$, $\sum_{k=1}^q a_{\pi^{-1}(k)}\hat{\theta}_{n,k}$ is within $\op(1)$ of $\beta\sum_{k=1}^{q_1}a_{\pi^{-1}(k)}.$ Note that $\sum_{k=1}^{q_1}a_{k} = 1$ and $\beta\sum_{k=1}^{q_1}a_{\pi^{-1}(k)} < \beta$ for every $\pi\neq \pi_0$ because $\sum_{k=1}^{q_1}a_{\pi^{-1}(k)}$ can at most be $q_1^{-1}(q_1 - 1) - q_0^{-1}$. Let $p_1(\pi) = q_1^{-1}\sum_{k=1}^{q_1}1\{ \pi^{-1}(k) \leq q_1 \}$ and $p_0(\pi) = q_0^{-1}\sum_{k=1+q_1}^{q}1\{ \pi^{-1}(k) \leq q_1 \}$. Then \[\hat{S}_n^2(\pi\hat{\theta}_n)\pto \frac{p_1(\pi)\bigl(1-p_1(\pi)\bigr)}{q_1 - 1} + \frac{p_0(\pi)\bigl(1-p_0(\pi)\bigr)}{q_0 - 1}\] and, in particular, $\hat{S}_n(\hat{\theta}_n)\pto 0$. For $\pi \neq \pi_0$, the right-hand side of the display is strictly positive. Hence, $\prob( \sum_{k=1}^q a_{\pi^{-1}(k)}\hat{\theta}_{n,k}\hat{S}_n(\hat{\theta}_n)/\hat{S}_n(\pi \hat{\theta}_n) > \beta -\varepsilon)$ converges to $1\{ \pi = \pi_0 \}$ for every $\pi \neq \pi_0$ and the union bound gives $\prob( \cup_{\pi\neq \pi_0} \{\sum_{k=1}^q a_{\pi^{-1}(k)}\hat{\theta}_{n,k}\hat{S}_n(\hat{\theta}_n)/\hat{S}_n(\pi \hat{\theta}_n) > \beta -\varepsilon\} )\to 0$. Conclude that 
\[\prob\biggl( \sum_{\pi\in\Pi} 1\biggl\{ \frac{\hat{S}_n(\hat{\theta}_n)}{\hat{S}_n(\pi\hat{\theta}_n)}\sum_{k=1}^q a_{\pi^{-1}(k)}\hat{\theta}_{n,k} > \beta -\varepsilon \biggr\} = 1 \biggr) \to 1  \]
and therefore $\prob(c_{n,\alpha} > \beta -\varepsilon) \to 1\{ 1 > \alpha |\Pi| \}$, which proves the result for $|\Pi|^{-1} \leq \alpha$. If $|\Pi|^{-1} > \alpha$, then $1\{T_n > c_{n,\alpha}\} \equiv 0$ and therefore $\prob(T_n > c_{n,\alpha}) \to 0$ trivially. The result follows.
\end{proof}

\begin{proof}[Proof of Corollary \ref{c:testconsistencyeq}]
This can be shown following along the same lines as the proofs of Lemma \ref{l:weakconvergence} and Theorem \ref{t:testconsistency} with two modifications: both statistics can now be standardized by $\sigma_n$ and the appeal to Theorem \ref{t:arrayclt} has to be replaced by an appeal to Theorem \ref{t:arrayclt2}. Given the results in the proofs of Lemma \ref{l:weakconvergence} and Theorem \ref{t:testconsistency}, it suffies to verify conditions (i)-(iii) of Theorem \ref{t:arrayclt2} under the null. Recall $w_{n,k} = \sqrt{q_{1,n}q_{0,n}}(q_{1,n}^{-1}1\{k \leq q_{1,n}\} - q_{0,n}^{-1}1\{k > q_{1,n}\})$ and let $W_{n,k} = w_{n,\pi(k)}/\sqrt{q_n}$. Then $\mean{W}_n = 0$ and (i) follows immediately. Condition (ii) is satisfied because $W_{n,k}^2 = 1\{\pi(k) \leq q_{1,n}\}q_{0,n}/(q_{1,n}q_n) + 1\{\pi(k) > q_{1,n}\}q_{1,n}/(q_{0,n}q_n)$ and therefore the  $W_{n,k}^2$ sum to one. Because $w_{n,\pi(1)}$ converges weakly to a Rademacher variable, (iii) holds and the desired result follows. The result under the alternative is identical to the proof of the second part of Theorem \ref{t:testconsistencyfin} below.
	\end{proof}

\begin{proof}[Proof of Theorem \ref{t:testconsistencyfin}]
Let $Z = (Z_1,\dots,Z_{q})$ and note that the elements of $\Pi$ are now nonrandom. For $\pi\in\Pi$ and $x = (x_1,\dots, x_q)\in\mathbb{R}^{q}$, consider the map $x \mapsto T(x) = q_1^{-1}\sum_{k=1}^{q_1} x_k - q_0^{-1}\sum_{k=1+q_1}^{q} x_k$. Because $T(\pi \theta) = T(\pi\theta - \theta_0 1_q)$ and the test decision is invariant to multiplication by a positive scalar, I can work with $\rn(\hat{\theta}_n - \theta_0 1_q)$ in place of $\hat{\theta}_n$.

Suppose $\beta = 0$. Compared to the randomized test with a test function as in \eqref{eq:randtest} (replace $c_{n,\alpha}$ by $\mean{c}_{n,\alpha}$), we have $1\{ T_n > \mean{c}_{n,\alpha} \} \leq 1\{\varphi_{n,\alpha}(T_n)\geq U\}$ and therefore $\prob(T_n > \mean{c}_{n,\alpha}) \leq \prob\bigl(\varphi_{n,\alpha}(T_n)\geq U\bigr)$. The idea is now to apply Theorem 3.1 of \citet{canayetal2014} to $\prob\bigl(\varphi_{n,\alpha}(T_n)\geq U\bigr)$. Their Assumptions 3.1(i) and (ii) hold by assumption. Their Assumption 3.1(iii) holds if for any two $\pi, \pi'\in\Pi$ with $\pi\neq \pi'$ we either have $T(\pi x) = T(\pi' x)$ for all $x$ or $\prob(T(\pi Z) =  T(\pi' Z)) = 0$. Arguing similarly to the proof of \citet[Lemma S.5.1]{canayetal2014}, if that were not true, there would be $\pi, \pi'\in\Pi$ with $\pi\neq \pi'$ such that $T(\pi x) \neq T(\pi' x)$ for some $t$ and $\prob(T(\pi Z) =  T(\pi' Z)) > 0$. For this choice of $\pi \neq \pi'$, let $a_k = q_1^{-1}1\{ k \leq q_1 \} - q_0^{-1}1\{ k > q_1 \}$ and  $b_k = a_{\pi^{-1}(k)} - a_{\pi'^{-1}(k)}$. Because $T(\pi Z) = T(\pi' Z)$ if and only if $\sum_{k=1}^{q} b_k Z_k = 0$, conclude that $\prob(T(\pi Z) = T(\pi' Z)) > 0$ is equivalent to a discontinuity of the distribution of $\sum_{k=1}^{q} b_k Z_k$ at zero, \[ \prob \biggl( \sum_{k=1}^{q} b_k Z_k \leq 0 \biggr) > \prob \biggl( \sum_{k=1}^{q} b_k Z_k < 0 \biggr). \] A sum of independent variables is continuously distributed if at least one of the summands has that property. Because $\pi\neq \pi'$, we have $b_k\neq 0$ for at least one $k$. But every $Z_k$ is continuously distributed, which contradicts the preceding display. Conclude from continuity of $x\mapsto T(\pi x)$ for every given $\pi\in\Pi$ that \[ \limsup_{n\to\infty} \prob(T_n > \mean{c}_{n,\alpha}) \leq \lim_{n\to\infty} \prob\bigl(\varphi_{n,\alpha}(T_n)\geq U\bigr) = \alpha \] by Theorem 3.1 of \citet{canayetal2014}.

Now consider the alternative $\beta > 0$. As before, we have $\hat{\theta}_{n,k} \pto \theta_0 + \beta 1\{ k\leq q_1 \}$ and therefore $T_n\pto \beta$. For $0 < \varepsilon < \beta$, use this result in
$\prob( T_n \leq \mean{c}_{n,\alpha}) \leq \prob(T_n \leq \beta -\varepsilon) + \prob(\mean{c}_{n,\alpha} > \beta -\varepsilon)$
to see that the first term on the right converges to zero. By the properties of quantile functions, the second term can be written as
\[\prob\biggl( \sum_{\pi\in\Pi} 1\biggl\{ \sum_{k=1}^q a_{\pi^{-1}(k)}\hat{\theta}_{n,k} > \beta -\varepsilon \biggr\} > \alpha |\Pi|\biggr).  \]
For each $\pi$, $\sum_{k=1}^q a_{\pi^{-1}(k)}\hat{\theta}_{n,k}$ is within $\op(1)$ of $\beta\sum_{k=1}^{q_1}a_{\pi^{-1}(k)}.$ Note that $\sum_{k=1}^{q_1}a_{k} = 1$ and, in particular, $\beta\sum_{k=1}^{q_1}a_{\pi^{-1}(k)} < \beta$ for every $\pi\neq \pi_0$ because $\sum_{k=1}^{q_1}a_{\pi^{-1}(k)}$ can at most be $q_1^{-1}(q_1 - 1) - q_0^{-1}$. Hence, $\prob( 1\{\sum_{k=1}^q a_{\pi^{-1}(k)}\hat{\theta}_{n,k} > \beta -\varepsilon\} = 1 )$ converges to $1\{ \pi = \pi_0 \}$ for every $\pi \neq \pi_0$, possibly after decreasing $\varepsilon$, and the union bound gives $\prob( \cup_{\pi\neq \pi_0} \{\sum_{k=1}^q a_{\pi^{-1}(k)}\hat{\theta}_{n,k} > \beta -\varepsilon\} )\to 0$. Conclude that 
\[\prob\biggl( \sum_{\pi\in\Pi} 1\biggl\{ \sum_{k=1}^q a_{\pi^{-1}(k)}\hat{\theta}_{n,k} > \beta -\varepsilon \biggr\} = 1 \biggr) \to 1  \]
and therefore $\prob(\mean{c}_{n,\alpha} > \beta -\varepsilon) \to 1\{ 1 > \alpha |\Pi| \}$, which proves the result for $|\Pi|^{-1} \leq \alpha$. If $|\Pi|^{-1} > \alpha$, then $1\{T_n > \mean{c}_{n,\alpha}\} \equiv 0$ and therefore $\prob(T_n > \mean{c}_{n,\alpha}) \to 0$ trivially. The result follows.
\end{proof}

\bibliographystyle{chicago}
\bibliography{qspec.bib}

\end{document}